\newif{\ifcomentarios}\comentariosfalse\newtheorem{theorem}{Theorem}
\newtheorem{definition}[theorem]{Definition}
\newtheorem{lemma}[theorem]{Lemma}
\newtheorem{remark}[theorem]{Remark}
\newtheorem{corollary}[theorem]{Corollary}
\newtheorem{proposition}[theorem]{Proposition}
\newenvironment{proof}[1][Proof]{\textit{#1.} }{\hfill $\Box$}
\newcommand{\OBSI}{\begin{remark}\begin{rm}}
\newcommand{\OBSF}{\end{rm}\end{remark}}
\newcommand{\DEFI}{\begin{definition}\begin{rm}}
\newcommand{\DEFF}{\end{rm}\end{definition}}
\newcommand{\zerarcounters}{\setcounter{equation}{0}\setcounter{theorem}{0}}
\newcommand{\dsum}{\displaystyle\sum}
\newcommand{\dint}{\displaystyle\int}
\newcommand{\be}{\begin{eqnarray}}
\newcommand{\en}{\end{eqnarray}}
\newcommand{\bee}{\begin{eqnarray*}}
\newcommand{\ene}{\end{eqnarray*}}
\DeclareMathOperator{\tr}{Tr}
\DeclareMathOperator{\img}{Rng}
\DeclareMathOperator{\rk}{rank}
\DeclareMathOperator{\supp}{supp}
\begin{document}
\title{Kotani Theory for ergodic matrix-like Jacobi operators}
\author{Fabr\'icio Vieira Oliveira \thanks{fabricio.vieira@engenharia.ufjf.br} \and Silas L. Carvalho\thanks{silas@mat.ufmg.br}}

\date{}\maketitle

{ \small \noindent $^{*}\,$  $^{\dag\,}$Instituto de Ciências Exatas (ICEX-UFMG). Av. Pres. Antônio Carlos 6627, Belo Horizonte-MG, 31270-901, Brasil.}

\begin{abstract}
We extend the so-called Kotani Theory for a particular class of ergodic matrix-like Jacobi operators defined in $l^{2}(\mathbb{Z}; \mathbb{C}^{l})$ by the law $[H_{\omega} \textbf{u}]_{n} := D^{*}(T^{n - 1}\omega) \textbf{u}_{n - 1} + D(T^{n}\omega) \textbf{u}_{n + 1} + V(T^{n}\omega) \textbf{u}_{n}$, where $T: \Omega \rightarrow \Omega$ is an ergodic automorphism in the measure space $(\Omega, \nu)$, the map $D: \Omega \rightarrow GL(l, \mathbb{R})$ is bounded, and for each $\omega\in\Omega$, $D(\omega)$ is symmetric. Namely, it is shown that for each $r\in\{1,\ldots,l\}$, the essential closure of $\mathcal{Z}_{r} := \{x \in \mathbb{R}\mid$ exactly  $2r$ Lyapunov exponents of $A_z$ are zero$\}$ coincides with $\sigma_{ac,2r}(H_{\omega})$, the  absolutely continuous spectrum of multiplicity $2r$, where $A_z$ is a Schr\"odinger-like cocycle induced by $H_\omega$. Moreover, if $k\in\{1,\ldots,2l\}$ is odd, then $\sigma_{ac,k}(H_{\omega})=\emptyset$ for $\nu$-a.e. $\omega\in\Omega$. We also provide a Thouless Formula for such class of operators.
\end{abstract}

\noindent
 \textbf{Keywords}: matrix-like Jacobi operators. Kotani Theory, Thouless Formula. 

\section{Introduction}
In the context of the spectral theory of discrete one-dimensional Schr\"odinger operators, defined in $l^{2}(\mathbb{Z}; \mathbb{C})$ by the action
\begin{equation}
\label{eq.schr.dis.esc}
(Hu)_{n} := u_{n + 1} + u_{n - 1} + v_{n}u_{n},
\end{equation}
where $(v_{n})_{n \in \mathbb{Z}}$ is a bilateral sequence of real numbers, there are some classical results on the characterization of their spectral components. 

One of these results refers to the case when $(v_{n})$ is dinamically defined by an ergodic automorphism, that is, given an invertible ergodic transformation $T: \Omega \rightarrow \Omega$ defined in a measure space $(\Omega, \nu)$ and a function $v: \Omega \rightarrow \mathbb{R}$, one sets, for each $\omega \in \Omega$, $v^{\omega}_{n} := v(T^{n}\omega)$; the resulting operator, $H_{\omega}$, is said to be ergodic. 
Kunz and Souillard have proven in this setting \cite{kunz1981} that there exist sets $\Sigma_{ac}, \Sigma_{sc}$ and $\Sigma_{pp}$ such that, for $\nu$-almost every $\omega \in \Omega$, 
\begin{equation*}
\begin{array}{lll}
\sigma_{ac}(H_{\omega}) & = & \Sigma_{ac}, \\
\sigma_{sc}(H_{\omega}) & = & \Sigma_{sc}, \\
\sigma_{pp}(H_{\omega}) & = & \Sigma_{pp},
\end{array}
\end{equation*}
where $\sigma_{ac}(H_{\omega}), \sigma_{sc}(H_{\omega})$ and $\sigma_{pp}(H_{\omega})$ are, respectively, the absolutely continuous, singular continuous and purely point spectral components of the operator $H_{\omega}$.  
Thus, one may say that the spectral components of the family of operators $(H_{\omega})_{\omega \in \Omega}$ are almost constant.

One may define, in connection with such class of operators, for each fixed $z \in \mathbb{C}$, the skew-linear product (cocycle) $\theta_{z}: \Omega \times \mathbb{C}^{2} \rightarrow \Omega \times \mathbb{C}^{2}$ by the law 
$\theta_{z} (\omega, \textbf{u}):=(T(\omega), A_{z}(\omega)\textbf{u})$, where $A_{z}: \Omega \rightarrow \mathrm{SL}(2, \mathbb{C})$ is given by 
\begin{equation}
\label{eq.cocli.schrodinger}
A_{z}(\omega) := 
\left[
\begin{array}{cc}
z - v(\omega) & -1 \\
1 & 0
\end{array}
\right].
\end{equation}

The orbits of the cocycle $\theta_{z}$ are associated with the solutions to the eigenvalue equation $(H_{\omega}u)_n=zu_n$. 
Namely, by considering the transfer matrices
\begin{equation}
\label{def.mat.trans.erg}
A_{n}(z, \omega) := 
\begin{cases}
A_{z}(T^{n -1}(\omega))A_{z}(T^{n - 2}(\omega))...A_{z}(T(\omega))A_{z}(\omega), & \mbox{ if }  n \geq 1, \\
& \\
\mathbb{I}_{2}, & \mbox{ if }  n = 0, \\
& \\
A^{-1}_{z}(T^{n}(\omega))...A^{-1}_{z}(T^{-2}(\omega))A_{z}^{-1}(T^{-1}(\omega)), & \mbox{ if }   n \leq - 1,
\end{cases}
\end{equation}
the sequence $(u_{n})_{n \in \mathbb{Z}}$ is a solution to the eigenvalue equation $(H_{\omega}u)_n=zu_n$ at $z\in\mathbb{C}$ iff
\begin{equation*}
\left[
\begin{array}{c}
u_{n + 1}\\
u_{n}
\end{array}
\right]
=
A_{n}(z, \omega)
\left[
\begin{array}{l}
u_{1}\\
u_{0}
\end{array}
\right].
\end{equation*}

One of the central results in the theory refers to the characterization of the absolutely continuous spectrum in terms of the Lyapunov exponent $\gamma(z)$ of the cocycle $\theta_{z}$, defined as
\begin{equation}
\label{eq.exp}
\gamma(z) := \lim_{n \rightarrow \infty} \frac{1}{n}\int\log \left\| A_{n}(z, \omega)\right\|d\nu(\omega)= \lim_{n \rightarrow \infty} \frac{1}{n}\log \left\| A_{n}(z, \omega)\right\|,
\end{equation}
with the second identity valid for $\nu$-a.e $\omega\in\Omega$ (as a consequence of the ergodicity of $(\Omega,T,\nu)$), obtained by Kotani in \cite{kotani82} for continuous operators, and then by Simon in~\cite{simon83} for the family $H_\omega$ defined as above. The result establishes that, for $\nu$-a.e. $\omega\in\Omega$,
\[\Sigma_{ac}=\overline{\{z\in\mathbb{R}\mid\gamma(z)=0\}}^{ess},
\]
where $\overline{S}^{ess}:=\{E\in\mathbb{R}\mid\kappa(S\cap(E-\varepsilon,E+\varepsilon))>0$, for every $\varepsilon>0\}$ is the essential closure of $S\subset\mathbb{R}$ with respect to the Lebesgue measure $\kappa$. 

This result was extended by Kotani and Simon \cite{kotani88} to matrix-valued Schr\"odinger operators defined in $l^{2}(\mathbb{Z}; \mathbb{C}^{l})$, for each $\omega \in \Omega$, by the law
\begin{equation} 
\label{eq.ope.din.schr}
[H^S_{\omega} \textbf{u}]_{n} := \textbf{u}_{n - 1} + \textbf{u}_{n + 1} + V(T^{n}\omega) \textbf{u}_{n},
\end{equation}
where $V: \Omega \rightarrow M(l, \mathbb{R})$ is a map with range in the set of self-adjoint matrices and, again,  $T: \Omega \rightarrow \Omega$ is an ergodic automorphism in the measure space $(\Omega, \nu)$.

Our main goal in this work is to go a step further, extending such characterization of the absolutely continuous spectrum (including multiplicity) to a family of operators of the form
\begin{equation} 
\label{eq.ope.din.jacobi}
[H_{\omega} \textbf{u}]_{n} := D^{*}(T^{n - 1}\omega) \textbf{u}_{n - 1} + D(T^{n}\omega) \textbf{u}_{n + 1} + V(T^{n}\omega) \textbf{u}_{n}, 
\end{equation}
where $D: \Omega \rightarrow GL(l, \mathbb{R})$ is a symmetric and invertible $l\times l$ matrix with real entries. This is a particular case of the so-called matrix-valued Jacobi operators, defined in \cite{marx15}. Such operators arise naturally in the study of some quasi-crystals which satisfy the Aubry-Andr\'e duality (see~\cite{marx15} for details and main motivations), and can be seen as the matrix-valued version of the one-dimensional Jacobi operators. 

For each $z \in \mathbb{C}$ and each $\omega \in \Omega$, if one considers the eigenvalue equation 
\begin{equation}
\label{eq.autovalor}
D^{*}(T^{n - 1}\omega) \textbf{u}_{n - 1} + D(T^{n}\omega) \textbf{u}_{n + 1} + V(T^{n}\omega) \textbf{u}_{n} = z \textbf{u}_{n}
\end{equation}
associated with the operator $H_{\omega}$, one may (like in \cite{marx15}) write this equation in the form 
\begin{equation*}
\left[
\begin{array}{c}
\textbf{u}_{n + 1} \\
D^{*}(T^{n}\omega)\textbf{u}_{n}
\end{array}
\right]
=
A_{z}(T^{n}\omega)
\left[
\begin{array}{c}
\textbf{u}_{n} \\
D^{*}(T^{n - 1}\omega)\textbf{u}_{n - 1}
\end{array}
\right],
\end{equation*}
with $A_{z}: \Omega \rightarrow GL(2l, \mathbb{C})$ (as in \eqref{eq.cocli.schrodinger}) given by the law
\begin{equation}
\label{eq.cociclo.az}
A_{z}(\omega) := 
\left[
\begin{array}{cc}
D^{-1}(\omega)(z - V(\omega)) & -D^{-1}(\omega) \\
&\\
D^{*}(\omega) & 0
\end{array}
\right];
\end{equation}
note that for each $\omega\in\Omega$, $A_z(\omega)$ is a symplectic complex matrix of size $2l \times 2l$, that is, $(A_z(\omega))^t\mathbb{J}A_z(\omega)=\mathbb{J}$, where
\[\mathbb{J}:=\left[
\begin{array}{cc}
0 & \mathbb{I} \\
- \mathbb{I} & 0
\end{array}
\right].\]
One may define the transfer matrices $A_{n}(z, \omega)$ as in \eqref{def.mat.trans.erg}.

If one also assumes that for each $z\in\mathbb{C}$ the cocycle $(T, A_{z})$ is such that $\log^{+} \left\| A_z (\cdot)\right\| \in L^{1}(\nu)$, then it follows from Oseledec Theorem~\cite{oseledets68} (see also~\cite{ruelle79}) that for each $j=1,\ldots,2l$, the so-called $j$-th Lyapunov exponent of the cocycle 
is well defined by the law 
\begin{equation}
\label{eq.ruelle}
\gamma_{j}(z, \omega)=\lim_{n \rightarrow \infty} \frac{1}{n} \log \left(s_{j}[A_{n}(z, \omega)]\right),
\end{equation} 
where $s_j(A_n)$ stands for the $j$-th singular value of $A_n$ (they are ordered so that $s_j(A_n)\ge s_{j+1}(A_n)$), if the limit exists. The ergodicity of $T$ guarantees that, for $z \in \mathbb{C}$ fixed, $\gamma_{j}(z, \omega)$ is constant for $\nu$-a.e. $\omega\in\Omega$. 

Namely, we prove, like in~\cite{kotani88}, the following result.

\begin{theorem}\label{maintheo}
  Let $(H_\omega)_\omega$ be a family of ergodic matrix-valued Jacobi operators of the form~\eqref{eq.ope.din.jacobi} such that the map $D: \Omega \rightarrow GL(l, \mathbb{R})$ is bounded and for each $\omega\in\Omega$, $D(\omega)$ is a symmetric and invertible $l\times l$ matrix. Suppose also that the map $A_z$, given by the law~\eqref{eq.cociclo.az}, is such that $\log^{+} \left\| A_z (\omega)\right\|\in L^{1}(\nu)$. Then, for each $r\in\{1,\ldots,l\}$, the essential closure of
\begin{equation}
\label{def.sup.exp}
\mathcal{Z}_{r} := \{x \in \mathbb{R}\mid \mbox{ exactly  } 2r \mbox{ exponents }  \gamma_{j}(x) \mbox{ are zero} \} 
\end{equation}
coincides with $\sigma_{ac,2r}(H_{\omega})$, the  absolutely continuous spectrum of multiplicity $2r$. Moreover, if $k\in\{1,\ldots,2l\}$ is odd, then $\sigma_{ac,k}(H_{\omega})=\emptyset$ for $\nu$-a.e. $\omega\in\Omega$.
\end{theorem}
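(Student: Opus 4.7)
The plan is to adapt the Kotani--Simon strategy of~\cite{kotani88} to the present symplectic Jacobi setting. The first ingredient is, for each $\omega\in\Omega$ and $z\in\mathbb{C}_+$, the pair of matrix-valued Weyl--Titchmarsh $m$-functions $M_\pm(z,\omega)\in M(l,\mathbb{C})$ associated with the half-line restrictions of $H_\omega$ to $l^2(\mathbb{Z}_\pm;\mathbb{C}^l)$; they are defined through the unique $l^2$-solutions at $\pm\infty$, with the $D$-factors absorbed into the normalization so that the symplectic framework of~\eqref{eq.cociclo.az} applies. The symmetry of $D$ and the self-adjointness of $V$ should guarantee that $M_\pm$ are matrix Herglotz functions in $z$, so that their boundary values $M_\pm(E+i0,\omega)$ exist for Lebesgue-a.e.\ $E$; the density of the a.c.\ part of the spectral matrix of $H_\omega$ is then expressible through $\operatorname{Im}M_\pm(E+i0,\omega)$, and the multiplicity of $\sigma_{ac}(H_\omega)$ at $E$ equals the rank of a suitable combination of these two boundary matrices.

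Next I would exploit the symplectic structure of $A_z(\omega)$, which forces the $2l$ Lyapunov exponents to come in opposite pairs $\gamma_j(z)+\gamma_{2l+1-j}(z)=0$. Consequently the set of vanishing exponents always has even cardinality, and $\{\mathcal{Z}_r\}_{r=1}^{l}$ already exhausts the real energies admitting at least one zero exponent. This observation alone will deliver the second assertion of the theorem: once the correspondence $\sigma_{ac,2r}\leftrightarrow\overline{\mathcal{Z}_r}^{\,ess}$ has been proved for every $r\in\{1,\ldots,l\}$, odd values of $k$ are forced to satisfy $\sigma_{ac,k}(H_\omega)=\emptyset$ $\nu$-a.s., since no real energy can then support an odd-multiplicity a.c.\ block. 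Combined with Oseledec's theorem, symplecticity also supplies, for $\nu$-a.e.\ $\omega$ and Lebesgue-a.e.\ $E$, an Oseledec center subspace of zero-exponent directions of dimension exactly $2r$ on $\mathcal{Z}_r$.

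The crux of the argument is Kotani's identification, for Lebesgue-a.e.\ $E\in\mathbb{R}$, of $\ker\operatorname{Im}M_+(E+i0,\omega)$ with the forward Oseledec stable subspace at $(E,\omega)$, and analogously for $M_-$ with the backward one. Following Kotani's harmonic-analysis scheme, I would analyze the subharmonic function $z\mapsto\log|\det\operatorname{Im}M_\pm(z,\omega)|$ on $\mathbb{C}_+$, combine its mean value with Birkhoff's ergodic theorem and the Thouless Formula proved elsewhere in this paper, and exploit the covariance relation $M_\pm(z,T\omega)=\Phi_z(\omega)[M_\pm(z,\omega)]$, where $\Phi_z(\omega)$ is the matrix M\"obius action induced by $A_z(\omega)$. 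The outcome should be that the range of $\operatorname{Im}M_\pm(E+i0,\omega)$ is the orthogonal complement of the corresponding Oseledec stable subspace, so that on $\mathcal{Z}_r$ one obtains $\operatorname{rank}\operatorname{Im}M_\pm(E+i0,\omega)=2r$; this gives $\sigma_{ac,2r}(H_\omega)\supseteq\overline{\mathcal{Z}_r}^{\,ess}$, and the reverse inclusion follows from the same rank identity together with the standard fact that the a.c.\ spectral support is essentially $\{E:\operatorname{Im}M_\pm(E+i0,\omega)\neq 0\}$. The principal obstacle is precisely this rank identification: the invertible symmetric matrix $D(\omega)$ changes the functional form of both the cocycle $A_z$ and the M\"obius action $\Phi_z$ relative to the Schr\"odinger case~\cite{kotani88}, and one must verify that Kotani's key $L^1$-estimates and his use of the symplectic form $\mathbb{J}$ continue to yield the sharp rank bound under this deformation.
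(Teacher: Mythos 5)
Your skeleton---half-line Weyl functions $M_\pm$, matrix Herglotz boundary values, the symplectic pairing of exponents, and the reduction of the odd-multiplicity claim to the identities $\sigma_{ac,2r}(H_\omega)=\overline{\mathcal{Z}_r}^{ess}$ for all $r$---is the same as the paper's, and that part is sound. The genuine gap is that the core of the argument is only named, not carried out: you yourself flag the ``rank identification'' as the principal obstacle. In the paper this is exactly where the matrices $D$ enter, and it is done quantitatively: one proves $\mathbb{E}\bigl[\log\bigl\vert\det\bigl(\mathbb{I}+\Im[z]\,(D_n\Im[M^{\phi}_n]D_n)^{-1}\bigr)\bigr\vert\bigr]=2\gamma(z)$ (Lemma~\ref{lema.kotani1}), refines it to partial sums over the singular values $\mu_k$ of $D_0\Im[M^{\phi}]D_0$ against the smallest nonnegative Lyapunov exponents by applying exterior powers $\Lambda^{l-j}$ to the recursion for $P_n=\sqrt{D_n\Im[M^\phi_n]D_n}\,F_n$ (Proposition~\ref{lema.lyapunov.parcial}), and then combines this with the finiteness of the normal derivative of $\gamma$ at $\kappa$-a.e.\ real $x$ (Corollary~\ref{coro.thouless}, a consequence of the Thouless formula) and Fatou's lemma to get $0<\lim_{y\downarrow0}\mu_k(x+iy,\omega)<\infty$ for $k\le r$ on $\mathcal{Z}_r$; the $D$-conjugation is removed only at the very end by Sylvester's law of inertia (Proposition~\ref{prop.sylves}). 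A proposal that defers precisely these estimates to ``one must verify'' has not proved the theorem, since the deformation by $D$ is the whole point of the result.

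Two further steps would fail as written. First, the rank you claim is off by the half-line/full-line distinction: $\Im M_\pm(E+i0,\omega)$ is an $l\times l$ matrix, so ``$\rk\Im M_\pm(E+i0,\omega)=2r$'' is impossible whenever $r>l/2$; the correct statement is $\rk\Im M_\pm(E+i0,\omega)=r$ for each half-line (kernel of dimension $l-r$, matching the Oseledec stable subspace), and the multiplicity $2r$ for $H_\omega$ on $\mathbb{Z}$ only emerges after writing $H_\omega$ as a finite-rank perturbation of $H^{\phi}_{\omega,+}\oplus H^{\phi}_{\omega,-}$ and invoking Kato--Rosenblum (Proposition~\ref{prop.supor.pertur.z}); this gluing step is absent from your plan. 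Second, the Ishii--Pastur inclusion does not follow ``from the same rank identity together with the standard fact'' that the a.c.\ support is where $\Im M\neq0$: the upper bound on the multiplicity requires a separate argument showing that every solution of the eigenvalue equation at real $x$ which is square-summable at $+\infty$ forces a boundary degeneracy, namely $(D_0\Im[M^{\phi}(x+i0)]D_0)_{kk}=0$; in the paper this is Proposition~\ref{prop.m.real}, proved via the Green-formula comparison $\sum_n\|\mathbf{f}^{(k)}_n(x+iy)\|^2\le\sum_n\|\mathbf{f}^{(k)}_n(x)\|^2$ of Lemma~\ref{lema.m.desi}, combined with Oseledec's theorem to count the exponentially decaying solutions on $\mathcal{Z}_r$, and again Sylvester's law to pass from $D_0\Im[M^\phi]D_0$ to $\Im[M^\phi]$.
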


The inclusion $\overline{\mathcal{Z}_{r}}^{ess} \subseteq \sigma_{ac,r}(H_{\omega})$ is known in the literature as Kotani Theorem (this is Theorem \ref{teo.ko}); the inclusion $\sigma_{ac,r}(H_{\omega}) \subseteq \overline{\mathcal{Z}_{r}}^{ess}$ is the so-called Ishii-Pastur Theorem  (Theorem \ref{teo.pastur}).

\begin{remark}
We note that the additional hypotheses required for the map $D$ are needed  to guarantee some desirable properties for the operator $H_{\omega}$. Specifically, for each $\omega \in \Omega$, the hypothesis of $D(\omega)$ being invertible  is necessary in \eqref{eq.cociclo.az}, the hypothesis of $D(\omega)$ being symmetric is sufficient, for instance, to establish Green Formula \eqref{eq.green.1}, and the hypothesis of $D$ being a bounded map implies \eqref{criterio.extensao}, a sufficient condition for the operator $H_\omega$ to be in the limit point case.
\end{remark}  

In order to prove Kotani Theorem, one must show that the orthogonal derivative, with respect to $z$, of the sum of the Lyapounov exponents (at $+\infty$) up to $\gamma_j(x)$, for each $j\in\{1,\ldots,l\}$, exists and is finite for $\kappa$-a.e. $x\in\mathbb{R}$. For that, one might use the strategy presented in the proof of the so-called Thouless Formula (we prove the following version of this formula in Theorem \ref{teo.thou}): 
\begin{equation}
\label{eq.form.thou}
\sum^{l}_{j = 1}\gamma_{j}(z) = \int_{\mathbb{R}} \log \left|z - x\right| dk(x) - \int_{\Omega} \log \left\vert\det \left( D(\omega) \right)\right\vert d\nu(\omega),
\end{equation}
where $k$ is the integrated density of states (see Definition \ref{def.dens.est}) (note that $ \log \left\vert\det \left( D(\cdot) \right)\right\vert\in L^1(\nu)$, since $D$ is bounded and for each $\omega\in\Omega$, $D(\omega)$ is invertible). Then, given that the sum of the Lyapunov exponents is, as a function of $z$, the Borel transform of a measure (see the proof of Corollary~\ref{coro.thouless}), it has the desired regularity.

Thouless Formula was first established in \cite{thou1972} for one-dimensional Schr\"odinger operators. We note that there is a proof of Thouless Formula for operators of the form \eqref{eq.ope.din.schr} in \cite{kotani88}; we have used this proof as a guide for the proof of our result. We also note that there is, in \cite{haro13}, an alternative proof of the formula for operators of the form \eqref{eq.ope.din.jacobi} when the map $D: \Omega \rightarrow M(l, \mathbb{R})$ is constant.

The organization of this paper is as follows. In Section~\ref{resolvente} we obtain, among other results, Green Formula for $(H_\omega)_\omega$, we present a sufficient condition for $(H_\omega)_\omega$ to be in the limit-point case, and then we present a characterization of the absolutely continuous spectrum of multiplicity $j\in\{1,\ldots,l\}$ of $H_\omega$ with respect to the equivalent spectra for $H_{\omega}^{\phi}$ and $H_{\omega}^{\phi,-}$, the restrictions of $H_\omega$ to $l^2(\mathbb{Z}_+;\mathbb{C}^l)$ and $l^2(\mathbb{Z}_+;\mathbb{C}^l)$, respectively, satisfying  Dirichlet boundary condition at $n=0$.

In Section~\ref{thouless} we represent the Lyapounov exponents in terms of the so-called Jost solutions and then prove Thouless Formula. In Section~\ref{kotani} we present several auxiliary results that are used in the proof of Kotani Theorem, which we also prove there. Finally, in Section~\ref{pastur} we prove Ishii-Pastur Theorem, and then complete the proof of Theorem~\ref{maintheo}.

\section{The Resolvent Operator}
\label{resolvente}
\zerarcounters

\subsection{Green Formula and self-adjoint extensions of $(H_\omega)_\omega$}

Let $(H_\omega)_\omega$ be a family of dynamically defined operators of the form  \eqref{eq.ope.din.jacobi} such that $T:\Omega\rightarrow\Omega$ is an ergodic automorphism and, for each $\omega\in\Omega$ and each $n\in \mathbb{Z}$, $V^{\omega}_{n} := V(T^{n}\omega)$ and $D^{\omega}_{n} := D(T^{n} \omega)$ are invertible and symmetric real matrices. Then, for each $\textbf{u}, \textbf{v} \in (\mathbb{C}^{l})^{\mathbb{Z}}$, each $n,m\in\mathbb{Z}$ such that $n>m$, and each $\omega\in\Omega$, one has the so-called Green Formula for $H_\omega$:
\begin{equation}
\label{eq.green.1}
\sum^{n}_{k = m} \left\langle (H_{\omega}\textbf{u})_{k}, \bar{\textbf{v}}_{k} \right\rangle_{\mathbb{C}^{l}} - \left\langle (H_{\omega}\textbf{v})_{k}, \bar{\textbf{u}}_{k} \right\rangle_{\mathbb{C}^{l}} = W^{\omega}_{[\textbf{u}, \textbf{v}]}(n + 1) - W^{\omega}_{[\textbf{u}, \textbf{v}]}(m),
\end{equation}
where 
\begin{equation*}
W^{\omega}_{[\textbf{u}, \textbf{v}]}(n) :=  \left\langle D^{\omega}_{n - 1}\textbf{u}_{n}, \bar{\textbf{v}}_{n - 1} \right\rangle_{\mathbb{C}^{l}} - \left\langle D^{\omega}_{n - 1} \textbf{v}_{n}, \bar{\textbf{u}}_{n - 1}  \right\rangle_{\mathbb{C}^{l}}
\end{equation*}
is the so-called Wronskian of $\textbf{u}$ and $\textbf{v}$ at $n\in\mathbb{Z}$.

When one represents $\textbf{u}_{n}, \textbf{v}_{n}$ as column-vectors, the Wronskian of $\textbf{u}$ and $\textbf{v}$ at $n\in\mathbb{Z}$ may be written as
\begin{equation*}
W^{\omega}_{[\textbf{u}, \textbf{v}]}(n) =  \textbf{u}^{t}_{n}D^{\omega}_{n - 1}\textbf{v}_{n - 1} - \textbf{v}^{t}_{n}D^{\omega}_{n - 1}\textbf{u}_{n - 1}.
\end{equation*}

Finally, if $(A_{n}), (B_{n})$ are sequences of matrices of size $l \times l$ one obtains, by applying $H_\omega$ to each of their columns, a generalization of Green Formula:
\begin{equation}
\label{wronski.matriz}
\sum^{n}_{k = m}(A^{t}_{k} H_{\omega}(B)_{k}  - H_{\omega}(A)_{k}^{t} B_{k})  =  W^{\omega}_{[A, B]}(n + 1) - W^{\omega}_{[A, B]}(m),
\end{equation}
with
$$
W^{\omega}_{[A, B]}(m) := A^{t}_{m - 1}D^{\omega}_{m - 1}B_{m} - A^{t}_{m}D^{\omega}_{m - 1}B_{m - 1}.
$$


\begin{lemma}[Constancy of the Wronskian]\label{const.wronsk}
  If $\textbf{u}, \textbf{v} \in (\mathbb{C}^{l})^{\mathbb{Z}}$ are solutions to the eigenvalue equation~\eqref{eq.autovalor} at 
  $z \in \mathbb{C}$, then the Wronskian $W_{[\textbf{u}, \textbf{v}]}(n)$ is constant.
\end{lemma}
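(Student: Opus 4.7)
The plan is to derive constancy of the Wronskian as a one-line consequence of Green's Formula~\eqref{eq.green.1}, exploiting the fact that when $\mathbf{u}$ and $\mathbf{v}$ are both solutions of~\eqref{eq.autovalor} at the same spectral parameter $z$, the ``bulk'' term in that formula collapses.

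Concretely, I would first substitute $(H_\omega\mathbf{u})_k=z\mathbf{u}_k$ and $(H_\omega\mathbf{v})_k=z\mathbf{v}_k$ into the left-hand side of~\eqref{eq.green.1}, obtaining
\[
\sum_{k=m}^{n}\bigl(z\langle \mathbf{u}_k,\bar{\mathbf{v}}_k\rangle_{\mathbb{C}^l}-z\langle \mathbf{v}_k,\bar{\mathbf{u}}_k\rangle_{\mathbb{C}^l}\bigr).
\]
The key observation is that the pairing $\langle \mathbf{x},\bar{\mathbf{y}}\rangle_{\mathbb{C}^l}$ used throughout the excerpt is, when written in column-vector form, simply $\mathbf{x}^t\mathbf{y}$, which is symmetric in $\mathbf{x},\mathbf{y}$. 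Hence $\langle \mathbf{u}_k,\bar{\mathbf{v}}_k\rangle_{\mathbb{C}^l}=\langle \mathbf{v}_k,\bar{\mathbf{u}}_k\rangle_{\mathbb{C}^l}$ and each summand vanishes. Green's Formula therefore reduces to
\[
0=W^{\omega}_{[\mathbf{u},\mathbf{v}]}(n+1)-W^{\omega}_{[\mathbf{u},\mathbf{v}]}(m),
\]
valid for every pair $m<n$, which says exactly that $W^{\omega}_{[\mathbf{u},\mathbf{v}]}(\cdot)$ is constant on $\mathbb{Z}$.

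There is really no obstacle here beyond checking the symmetry of the pairing (which is where the hypothesis that each $D(\omega)$ is symmetric, built into the proof of Green's Formula, has already done its work). If one preferred a more direct argument not relying on Green's Formula, an equivalent route is to write $W^{\omega}_{[\mathbf{u},\mathbf{v}]}(n+1)-W^{\omega}_{[\mathbf{u},\mathbf{v}]}(n)$ in column-vector form as
\[
\mathbf{u}_{n+1}^t D^{\omega}_{n}\mathbf{v}_{n}-\mathbf{v}_{n+1}^t D^{\omega}_{n}\mathbf{u}_{n}-\bigl(\mathbf{u}_{n}^t D^{\omega}_{n-1}\mathbf{v}_{n-1}-\mathbf{v}_{n}^t D^{\omega}_{n-1}\mathbf{u}_{n-1}\bigr),
\]
substitute the recursion $D^{\omega}_{n}\mathbf{v}_{n+1}=z\mathbf{v}_n-V^{\omega}_n\mathbf{v}_n-D^{\omega}_{n-1}\mathbf{v}_{n-1}$ (and similarly for $\mathbf{u}$), and check termwise cancellation using $(D^{\omega}_n)^t=D^{\omega}_n$ and $(V^{\omega}_n)^t=V^{\omega}_n$. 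Either way, the proof is short and the symmetry of $D$ and $V$ is what makes the telescoping work.
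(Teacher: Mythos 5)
Your main argument is exactly the paper's proof: substitute $(H_\omega\mathbf{u})_k=z\mathbf{u}_k$ and $(H_\omega\mathbf{v})_k=z\mathbf{v}_k$ into Green's Formula~\eqref{eq.green.1}, observe that the bilinear pairing makes the left-hand side vanish (the paper writes this cancellation without comment, which your symmetry remark simply makes explicit), and conclude $W^{\omega}_{[\mathbf{u},\mathbf{v}]}(n+1)=W^{\omega}_{[\mathbf{u},\mathbf{v}]}(m)$ for all $m<n$. The proposal is correct, and the alternative telescoping computation you sketch is a valid but unnecessary extra.
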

\begin{proof}
This result is a direct consequence of Green Formula. Namely, for each integers $n>m$, one has
\begin{eqnarray*}
\sum^{n}_{k = m} \left\langle (H\textbf{u})_{k}, \bar{\textbf{v}} \right\rangle_{\mathbb{C}^{l}} - \left\langle (H\textbf{v})_{k},\bar{\textbf{u}}_{k} \right\rangle_{\mathbb{C}^{l}}  =  \sum^{n}_{k = m} \left\langle z \textbf{u}_{k},\bar{\textbf{v}}_{k} \right\rangle_{\mathbb{C}^{l}} - \left\langle z \textbf{v}_{k}, \bar{\textbf{u}}_{k} \right\rangle_{\mathbb{C}^{l}} = 0,
\end{eqnarray*}
from which follows that
\[
W_{[\textbf{u}, \textbf{v}]}(n + 1) - W_{[\textbf{u}, \textbf{v}]}(m) = 0. 
\]
\end{proof}

\begin{lemma}
\label{lema.wrons.zx}
Let $z=x+iy\in\mathbb{C}$. If $\textbf{u}, \textbf{v}, \textbf{r}, \textbf{s}  \in (\mathbb{C}^{l})^{\mathbb{Z}}$ are such that $H \textbf{u} = z \textbf{u}, H \textbf{r} = z \textbf{r}$, $H \textbf{v} = x \textbf{v}$ and $H\textbf{s} = x \textbf{s}$, then
\[W_{[\textbf{u} - \textbf{v}, \bar{\textbf{r}} - \bar{\textbf{s}}]}(n + 1) - W_{[\textbf{u} - \textbf{v}, \bar{\textbf{r}} - \bar{\textbf{s}}]}(m) = i y \sum^{n}_{k = m}\left( 2 \left\langle  \textbf{u}_{k}, \textbf{r}_{k} \right\rangle_{\mathbb{C}^{l}} - \left\langle  \textbf{u}_{k}, \textbf{s}_{k} \right\rangle_{\mathbb{C}^{l}} - \left\langle  \textbf{v}_{k} , \textbf{r}_{k} \right\rangle_{\mathbb{C}^{l}}\right).\]
\end{lemma}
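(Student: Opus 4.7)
The plan is to apply the Green Formula~\eqref{eq.green.1} directly to the pair $(\mathbf{u}-\mathbf{v},\,\bar{\mathbf{r}}-\bar{\mathbf{s}})$, identifying it with the generic pair $(\mathbf{u}',\mathbf{v}')$ appearing in~\eqref{eq.green.1}. This automatically produces the Wronskian difference $W^\omega_{[\mathbf{u}-\mathbf{v},\bar{\mathbf{r}}-\bar{\mathbf{s}}]}(n+1)-W^\omega_{[\mathbf{u}-\mathbf{v},\bar{\mathbf{r}}-\bar{\mathbf{s}}]}(m)$ on the right-hand side, so the task reduces to computing the left-hand side of Green's identity and recognizing the claimed expression.

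First I would record the action of $H_\omega$ on each of the four pieces. Since the matrices $D^\omega_n$ and $V^\omega_n$ are real, $H_\omega$ has real coefficients, hence $H_\omega\bar{\mathbf{r}} = \overline{H_\omega\mathbf{r}} = \bar{z}\bar{\mathbf{r}}$ and $H_\omega\bar{\mathbf{s}} = \overline{H_\omega\mathbf{s}} = x\bar{\mathbf{s}}$ (using $x\in\mathbb{R}$). Combined with linearity and the hypotheses on $\mathbf{u},\mathbf{v}$, this gives
\begin{equation*}
H_\omega(\mathbf{u}-\mathbf{v}) = z\mathbf{u}-x\mathbf{v}, \qquad H_\omega(\bar{\mathbf{r}}-\bar{\mathbf{s}}) = \bar{z}\bar{\mathbf{r}}-x\bar{\mathbf{s}}.
\end{equation*}

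Next I would substitute these into Green's identity and expand the two sesquilinear brackets. Because the bracket $\langle\cdot,\bar{\cdot}\rangle_{\mathbb{C}^l}$ appearing in~\eqref{eq.green.1} coincides with the symmetric bilinear pairing $\mathbf{a}^{t}\mathbf{b}$, each term $\langle\mathbf{a},\bar{\mathbf{b}}\rangle_{\mathbb{C}^l}$ can be rewritten as $\langle\mathbf{b},\bar{\mathbf{a}}\rangle_{\mathbb{C}^l}$ freely. After expanding both brackets and swapping arguments where needed, the cross terms regroup into linear combinations of $\langle\mathbf{u}_k,\mathbf{r}_k\rangle_{\mathbb{C}^l}$, $\langle\mathbf{u}_k,\mathbf{s}_k\rangle_{\mathbb{C}^l}$, $\langle\mathbf{v}_k,\mathbf{r}_k\rangle_{\mathbb{C}^l}$ and $\langle\mathbf{v}_k,\mathbf{s}_k\rangle_{\mathbb{C}^l}$, with coefficients $z-\bar{z}$, $x-z$, $\bar{z}-x$ and $x-x$ respectively. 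Using $z-\bar{z}=2iy$, $x-z=-iy$, $\bar{z}-x=-iy$ and $x-x=0$ yields precisely the right-hand side of the claimed identity.

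There is no real obstacle here beyond careful bookkeeping: the only place to be cautious is the cancellation pattern arising from the reality of $H_\omega$ together with the convention that the bar inside $\langle\cdot,\bar{\cdot}\rangle_{\mathbb{C}^l}$ turns the Hermitian pairing into a symmetric bilinear one, which is what permits the swap $\langle\bar{\mathbf{r}},\mathbf{u}\rangle_{\mathbb{C}^l}=\langle\mathbf{u},\bar{\mathbf{r}}\rangle_{\mathbb{C}^l}$ used throughout the expansion. Once these conventions are tracked correctly, the identity follows as a one-line application of Green's formula.
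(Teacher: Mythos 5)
Your proposal is correct and follows essentially the same route as the paper: apply Green Formula \eqref{eq.green.1} to the pair $(\textbf{u}-\textbf{v},\,\bar{\textbf{r}}-\bar{\textbf{s}})$, use the reality of $D^{\omega}_n$ and $V^{\omega}_n$ to get $H_\omega\bar{\textbf{r}}=\bar z\bar{\textbf{r}}$ and $H_\omega\bar{\textbf{s}}=x\bar{\textbf{s}}$, then expand the pairings and collect the coefficients $z-\bar z=2iy$, $x-z=-iy$, $\bar z-x=-iy$, $0$. The paper performs exactly this expansion (termwise, then summing and invoking Green Formula), so there is nothing to add.
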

\begin{proof}
One has, for each $k \in \mathbb{Z}$, the identity
\begin{eqnarray*}
\begin{array}{lll}
& \left\langle (H(\textbf{u} - \textbf{v}))_{k}, (\textbf{r} - \textbf{s})_{k} \right\rangle_{\mathbb{C}^{l}} - \left\langle (H(\bar{\textbf{r}} - \bar{\textbf{s}}))_{k}, (\bar{\textbf{u}} - \bar{\textbf{v}})_{k} \right\rangle_{\mathbb{C}^{l}} = & \\
& & \\
 &  \left\langle z \textbf{u}_{k} - x \textbf{v}_{k} , \textbf{r}_{k} - \textbf{s}_{k} \right\rangle_{\mathbb{C}^{l}} - \left\langle \bar{z} \bar{\textbf{r}}_{k} - x \bar{\textbf{s}}_{k} , \bar{\textbf{u}}_{k} - \bar{\textbf{v}}_{k} \right\rangle_{\mathbb{C}^{l}} = & \\
& & \\
 &  \left\langle z \textbf{u}_{k}, \textbf{r}_{k} \right\rangle_{\mathbb{C}^{l}} - \left\langle z \textbf{u}_{k} , \textbf{s}_{k} \right\rangle_{\mathbb{C}^{l}}  +  \left\langle x \textbf{v}_{k} , \textbf{s}_{k} \right\rangle_{\mathbb{C}^{l}} - \left\langle x \textbf{v}_{k} , \textbf{r}_{k}  \right\rangle_{\mathbb{C}^{l}} + &  \\
&&\\
&  \left\langle \bar{z} \bar{\textbf{r}}_{k}, \bar{\textbf{v}}_{k} \right\rangle_{\mathbb{C}^{l}} - \left\langle \bar{z} \bar{\textbf{r}}_{k}, \bar{\textbf{u}}_{k} \right\rangle_{\mathbb{C}^{l}}  +  \left\langle x \bar{\textbf{s}}_{k} , \bar{\textbf{u}}_{k} \right\rangle_{\mathbb{C}^{l}} - \left\langle x \bar{\textbf{s}}_{k} , \bar{\textbf{v}}_{k} \right\rangle_{\mathbb{C}^{l}}. &
\end{array}
\end{eqnarray*}

Now, by summing both members in the previous identity from $m$ up to $n$, it follows that
\begin{eqnarray*}
\begin{array}{lll}
 & \sum^{n}_{k = m} \left\langle (H(\textbf{u} - \textbf{v}))_{k}, (\textbf{r} - \textbf{s})_{k} \right\rangle_{\mathbb{C}^{l}} - \left\langle (H(\bar{\textbf{r}} - \bar{\textbf{s}}))_{k}, (\bar{\textbf{u}} - \bar{\textbf{v}})_{k} \right\rangle_{\mathbb{C}^{l}} = & \\
&&\\
& \sum^{n}_{k = m} 2yi \left\langle \textbf{u}_{k} , \textbf{r}_{k} \right\rangle_{\mathbb{C}^{l}}  - \sum^{n}_{k = m} yi \left\langle  \textbf{u}_{k}, \textbf{s}_{k} \right\rangle_{\mathbb{C}^{l}} - \sum^{n}_{k = m} yi \left\langle  \textbf{v}_{k} , \textbf{r}_{k} \right\rangle_{\mathbb{C}^{l}}.
\end{array}
\end{eqnarray*}

The result now is a consequence of Green Formula.
\end{proof}

\

In what follows, it will be necessary to guarantee that the operator $H_\omega$ is in the limit point case at $+\infty$; we proceed as in~\cite{kotani88}.

The so-called boundary form of the restriction of the operator $H_{\omega}$ to $\mathbb{Z}_{+}$ is defined as  
\begin{equation*}
\begin{array}{lll}
\Gamma_{\omega}(\textbf{u}, \textbf{v}) & := & \displaystyle\sum^{\infty}_{k = 1} \left\langle (H_{\omega}\textbf{u})_{k}, \bar{\textbf{v}}_{k} \right\rangle_{\mathbb{C}^{l}} - \left\langle (H_{\omega}\textbf{v})_{k}, \bar{\textbf{u}}_{k} \right\rangle_{\mathbb{C}^{l}} =\\
& & \\
&   & \displaystyle\lim_{n \rightarrow \infty} \left( \left\langle \textbf{v}_{n}, D^{\omega}_{n}\textbf{u}_{n + 1} \right\rangle_{\mathbb{C}^{l}} -  \left\langle \textbf{v}_{n + 1}, D^{\omega}_{n}\textbf{u}_{n} \right\rangle_{\mathbb{C}^{l}} \right) -\\
&&\\
&& \left(  \left\langle \textbf{v}_{0}, D^{\omega}_{0}\textbf{u}_{1} \right\rangle_{\mathbb{C}^{l}} - \left\langle \textbf{v}_{1}, D^{\omega}_{0}\textbf{u}_{0} \right\rangle_{\mathbb{C}^{l}} \right).
\end{array}
\end{equation*} 

Since $D$ is a bounded map, it follows that for every $\textbf{u}, \textbf{v} \in l^{2}(\mathbb{N};\mathbb{C}^{l})$,  
\begin{equation}
\label{criterio.ponto}
\lim_{n \rightarrow \infty} \left( \left\langle \textbf{v}_{n}, D^{\omega}_{n}\textbf{u}_{n + 1} \right\rangle_{\mathbb{C}^{l}} -  \left\langle \textbf{v}_{n + 1}, D^{\omega}_{n}\textbf{u}_{n} \right\rangle_{\mathbb{C}^{l}} \right) = 0.
\end{equation}

This conditition implies that the deficiency indices of the restriction of the operator to $\mathbb{Z}_{+}$ are equal to $l$, a property known in the literature as the limit point case (see~\cite{carmona90,cesar2009,teschl00}).

In this case, the self-adjoint extensions of $H_{\omega}$ are associated with the domains $\mathcal{D}$ for which the boundary form is trivial (see Proposition~7.1.3 in~\cite{cesar2009}), that is, for which
\begin{equation}
\label{criterio.extensao}
\left\langle \textbf{v}_{0}, D^{\omega}_{0}\textbf{u}_{1} \right\rangle_{\mathbb{C}^{l}} - \left\langle \textbf{v}_{1}, D^{\omega}_{0}\textbf{u}_{0} \right\rangle_{\mathbb{C}^{l}} = 0,
\end{equation}
for every $\textbf{u}, \textbf{v} \in \mathcal{D}$. We are particularly interested in the extension that satisfies the initial condition $\textbf{u}_{0} = 0$, which is denoted by $H^{\phi}_{\omega}$ and called \textit{Dirichlet operator}. We also consider the extension with the initial condition $\textbf{u}_{1} = 0$, denoted by $H^{\psi}_{\omega}$ and called \textit{Neumann operator}.

Associated with Dirichlet and Neumann operators, one defines sequences of matricial solutions to the eigenvalue equation~\eqref{eq.autovalor}, $\psi(z, \omega), \phi(z, \omega)$, called respectively Neumann and Dirichlet solutions; these are sequences of matrices of size $l \times l$ such that 
\begin{equation}
\label{def.neu.dir}
\begin{cases}
\psi_{0}(z, \omega) = \mathbb{I}, &  \\
\psi_{1}(z, \omega) = 0, &  
\end{cases}
\begin{array}{lll}
 &  &  \\
 &  &   
\end{array}
\begin{cases}
\phi_{0}(z, \omega) = 0,& \\
\phi_{1}(z, \omega) = \mathbb{I},& 
\end{cases}
\end{equation}
and whose columns satisfy \eqref{eq.autovalor}. 


\subsection{Integral Kernel of the Resolvent Operator}

As usual, one may obtain the spectral properties of the operator $H^{\phi}_{\omega}$ by analysing the asymptotic behavior of the resolvent operator for $z = x + iy \in \mathbb{C}$ as $y \downarrow 0$.

The idea is to 
write, for each $z\in\rho(H^{\phi}_{\omega})$ (the resolvent set of $H^{\phi}_{\omega}$), the resolvent operator in its integral form through the matrix-valued Green Function, which is, by its turn, parametrized by the solutions to the eigenvalue equation at $z$. 

If $z\in \mathbb{C}\setminus \mathbb{R}$ and if $H^\phi_\omega$ is in the limit point case \eqref{criterio.ponto}, then the set 
\[
\mathcal{J}_{+}(z, \omega):=\{\textbf{u} \in (\mathbb{C}^{l})^{\mathbb{Z}} \; \vert \; H_{\omega}\textbf{u} = z\textbf{u}, \;\sum^{\infty}_{n = 1} \left\|\textbf{u}_{n}\right\|^{2} < \infty\}
\]
is a subspace of dimension $l$. In this case, there exists only one sequence $(F_{n}(z, \omega))_{n}$ of matrices of size $l \times l$, whose columns satisfy simultaneously \eqref{eq.autovalor}, the initial condition $F_{0}(z, \omega)$ = $\mathbb{I}$, and
\begin{equation}
\label{def.jost}
\sum^{\infty}_{n = 0} \left\|F_{n}(z, \omega) \right\|^{2} < \infty.
\end{equation}

Each sequence of columns of $F_{n}(z, \omega)$ is a solution to the eigenvalue equation with a canonical vector of $\mathbb{C}^l$ as its initial condition; namely, if $F_n^j(z,\omega)$ denotes the $j$-th column of $F_n(z,\omega)$, then $(F_{n}^j(z, \omega))_n$, with $F_0^j(z,\omega)=e^j$ (where $e^j$ stands for the $j$-th element of the canonical basis of $\mathbb{C}^l$), is a solution to the eigenvalue equation~\eqref{eq.autovalor}. These $l$ solutions are the so-called \textit{Jost solutions}.   

The Jost solutions can be parametrized by the matrix-valued Weyl-Titchmarsh function associated with $H^{\phi}_\omega$, $M^{\phi}(z,\omega)$, which is given by
\begin{equation}
\label{def.m.weyl}
M^{\phi}(z, \omega) := - F_{1}(z, \omega)(D^{\omega}_{0})^{-1}.
\end{equation}

One can also write the Jost solutions in terms of Neumann and Dirichlet solutions as
\begin{equation}
\label{equa.m.jost}
F_{n}(z, \omega) = \psi_{n}(z, \omega) - \phi_{n}(z, \omega) M^{\phi}(z, \omega) D^{\omega}_{0};
\end{equation}
this is a consequence of the fact that both members of \eqref{equa.m.jost} are solutions to the eigenvalue equation that coincide at $n\in\{0,1\}$ 
(given the unicity of solutions to such equations).

In what follows, we establish a relation between the matrix-valued Weyl-Titchmarsh function and the Jost solutions, as it was done in \cite{kotani88} (see Proposition $2.3$ there).

\begin{proposition}
\label{prop.m}
Let, for each $z \in \mathbb{C} \setminus \mathbb{R}$ and each $\omega \in \Omega$, $\left(F_{n}(z,\omega)\right)_{n}$ be the matrices given by the Jost solutions to the eigenvalue equation~\eqref{eq.autovalor},  and let $M^{\phi}(z,\omega)$ be the corresponding matrix-valued Weyl-Titchmarsh function. Then,
\begin{eqnarray*}
  (a) && (M^{\phi}(z,\omega))^{t} = M^{\phi}(z,\omega);\\
  (b) && D_{0}\Im[M^{\phi}(z,\omega)]D_{0} = \Im[z] \sum^{\infty}_{k = 1} F_{k}(z,\omega)^{*}F_{k}(z,\omega).
\end{eqnarray*}
\end{proposition}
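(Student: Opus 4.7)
The plan is to prove both parts using the matrix-valued Green Formula~\eqref{wronski.matriz} combined with the key observation that the Jost solutions decay in $l^{2}$, which forces the Wronskian at $+\infty$ to vanish.

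\textbf{Preliminary observations.} Since $V(\omega)$ and $D(\omega)$ are real matrices, the operator $H_\omega$ has real coefficients, so $H_\omega(\bar{\textbf{u}})=\overline{H_\omega\textbf{u}}$. Consequently, if $F_n(z,\omega)$ solves $HF=zF$ with $F_0=\mathbb{I}$ and $\sum_n\|F_n\|^2<\infty$, then $\overline{F_n(z,\omega)}$ solves $HF=\bar z F$ with the same initial data and the same $l^2$-summability; by the uniqueness of the Jost solution, $\overline{F_n(z,\omega)}=F_n(\bar z,\omega)$. Moreover, $\sum\|F_n\|^2<\infty$ together with boundedness of $D$ forces $\|F_n\|\to 0$ and $\|D_nF_{n+1}\|\to 0$.

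\textbf{Part (a).} Apply the generalized Green Formula~\eqref{wronski.matriz} with $A=B=F(z,\omega)$. Both satisfy $Hu=zu$, so the left-hand side is $\sum_{k=m}^n(A_k^tzB_k-zA_k^tB_k)=0$. Hence the matrix Wronskian $W_{[F,F]}^\omega(n)$ is constant in $n$. By the preliminary observation, letting $n\to\infty$ in
\[
W_{[F,F]}^\omega(n+1)=F_n^tD_nF_{n+1}-F_{n+1}^tD_nF_n
\]
gives $W_{[F,F]}^\omega(\infty)=0$, whence $W_{[F,F]}^\omega(1)=F_0^tD_0F_1-F_1^tD_0F_0=D_0F_1-F_1^tD_0=0$. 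Thus $D_0F_1=F_1^tD_0$, i.e.\ $F_1^t=D_0F_1D_0^{-1}$. Using that $D_0^{-1}$ is symmetric,
\[
(M^\phi(z,\omega))^t=-D_0^{-1}F_1^t=-D_0^{-1}D_0F_1D_0^{-1}=-F_1D_0^{-1}=M^\phi(z,\omega).
\]

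\textbf{Part (b).} Apply the generalized Green Formula with $A_n=F_n(z,\omega)$ and $B_n=\overline{F_n(z,\omega)}=F_n(\bar z,\omega)$. Since $HA=zA$ and $HB=\bar zB$, the left-hand side becomes
\[
\sum_{k=m}^{n}(\bar z-z)A_k^tB_k=-2i\,\Im[z]\sum_{k=m}^{n}F_k(z,\omega)^t\overline{F_k(z,\omega)}.
\]
Setting $m=1$ and letting $n\to\infty$, the Wronskian at $+\infty$ vanishes by the $l^{2}$-decay, leaving $-W_{[A,B]}^\omega(1)=-2i\,\Im[z]\sum_{k=1}^{\infty}F_k^t\overline{F_k}$. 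Computing
\[
W_{[A,B]}^\omega(1)=D_0\overline{F_1(z,\omega)}-F_1(z,\omega)^tD_0
\]
and substituting $F_1^tD_0=D_0F_1$ from part (a), this simplifies to $D_0(\overline{F_1}-F_1)=-2i\,D_0\,\Im[F_1(z,\omega)]$. Therefore
\[
D_0\,\Im[F_1(z,\omega)]=-\Im[z]\sum_{k=1}^{\infty}F_k(z,\omega)^t\overline{F_k(z,\omega)}.
\]
Since $D_0$ is real, $\Im[M^\phi(z,\omega)]=-\Im[F_1(z,\omega)]D_0^{-1}$, so multiplying on the right by $D_0$ gives $D_0\,\Im[M^\phi]\,D_0=-D_0\,\Im[F_1]=\Im[z]\sum_k F_k^t\overline{F_k}=\Im[z]\sum_k(F_k^*F_k)^t$. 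Taking the transpose of this identity and using that the left-hand side is symmetric (by part (a) and the symmetry of $D_0$) yields the claimed formula.

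\textbf{Main obstacle.} The computation is essentially bookkeeping once the right combinations are chosen; the one non-routine step is identifying the natural partner of $F(z,\omega)$ in Green's Formula as $\overline{F(z,\omega)}$ and then justifying $\overline{F_n(z,\omega)}=F_n(\bar z,\omega)$ via the uniqueness of the $l^2$-Jost solution. The other delicate point is the clean interplay between the transpose of $F^*F$ and the symmetry of $M^\phi$, which is what ultimately aligns the formula obtained via the bilinear pairing with the sesquilinear expression in the statement.
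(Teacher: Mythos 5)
Your proof is correct and follows essentially the same route as the paper: Green Formula with $A=B=F$ plus $l^2$-decay for the symmetry of $M^{\phi}$, and Green Formula pairing $F$ with $\overline{F}$ for part (b). The only cosmetic difference is that you take $A=F$, $B=\overline{F}$ (forcing a final transpose, which you justify via the symmetry from part (a)), whereas the paper's choice $A=\overline{F}$, $B=F$ produces $\sum_k F_k^{*}F_k$ directly.
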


\begin{proof} We omit $z$ and $\omega$ in the notation throughout the proof.

$(a)$ By letting $A_{n} = B_{n} = F_{n}$ in Green Formula \eqref{wronski.matriz}, it follows that there exists a constant $C$ such that, for every $n \in \mathbb{Z}_{+}$,
\[
W_{[F, F]}(n) = (F_{n - 1}^{t}D_{n - 1}F_{n} - F_{n}^{t}D_{n - 1}F_{n - 1}) = C;
\]
one has from relation \eqref{def.jost} that
\[
\lim_{n \rightarrow \infty} W_{[F, F]}(n) = 0,
\]
and then, $C = 0$. It follows, in particular, that  
\[
(D_{0}F_{0})^{t}F_{1} - (D_{0}F_{1})^{t}F_{0} = 0.
\]
Since $F_{0} = \mathbb{I}$ and $D_{0} = D_{0}^{t}$, it follows that $D_{0}F_{1} = D_{0}M^{\phi}D_{0}$ is symmetric, and so $M^{\phi}$.

$(b)$ Let $A_{n} = \overline{F}_{n}$, $B_{n} = F_{n}$ and $m = 1$ in~\eqref{wronski.matriz}; then, 
take the limit $n \rightarrow \infty$. It follows from~\eqref{criterio.ponto},~\eqref{def.m.weyl} and item $(a)$ that 
\[
D_{0}\Im[M^{\phi}(z,\omega)]D_{0}=(D_{0}\overline{F}_{0})^{t}F_{1} - (D_{0}\overline{F}_{1})^{t}F_{0}  = \Im[z] \sum^{\infty}_{k = 1}  F_{k}^{*}F_{k}.
\]
\end{proof}

\begin{lemma}
\label{lema.green.nd}
For each $z \in \mathbb{C}$ and each $\omega \in \Omega$, let  $\psi(z,\omega)$ and $\phi(z,\omega)$ be the Neumann and Dirichlet solutions to the eigenvalue equation~\eqref{eq.autovalor} at $z$, respectively. Then, for each $n \in \mathbb{Z}_{+}$, one has
\begin{eqnarray}\label{eq.F.phi}
  \begin{array}{ll}
(a) & \psi_{n}(z,\omega)(D^\omega_0)^{-1}\phi_n^t(z,\omega) - \phi_{n}(z,\omega)(D^\omega_0)^{-1}\psi_{n}^t(z,\omega)  =  0;\\

(b) & \psi_{n}(z,\omega)(D^\omega_0)^{-1}\phi_{n + 1}^t(z,\omega) - \phi_{n}(z,\omega) (D^\omega_0)^{-1} \psi_{n + 1}^t(z,\omega)  =  (D^\omega_n)^{-1};\\ 

(c) & \psi_{n + 1}(z,\omega)(D^\omega_0)^{-1}\phi_{n}^t(z,\omega) - \phi_{n + 1}(z,\omega)(D_{0}^\omega)^{-1}\psi_{n}^t(z,\omega)  =  - (D^\omega_n)^{-1}. 
\end{array}\end{eqnarray}
\end{lemma}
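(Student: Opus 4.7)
The plan is to assemble the Neumann and Dirichlet matrix solutions into a single $2l\times 2l$ fundamental matrix and exploit the symplecticity of the cocycle $A_z$ to read off its inverse; the three identities then emerge as the block entries of $\Theta_n\Theta_n^{-1}=\mathbb{I}_{2l}$. Concretely, for $n\geq 1$ I would set (suppressing the $(z,\omega)$ dependence)
\begin{equation*}
\Theta_n := \left[\begin{array}{cc} \psi_n & \phi_n \\ D^{\omega}_{n-1}\,\psi_{n-1} & D^{\omega}_{n-1}\,\phi_{n-1} \end{array}\right].
\end{equation*}
Applying the cocycle representation displayed just before~\eqref{eq.cociclo.az} column by column (using $D^{\omega}_k = (D^{\omega}_k)^{*}$, which follows from $D$ being real and symmetric) yields $\Theta_{n+1} = A_z(T^n\omega)\,\Theta_n$.

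Since $A_z(\omega)^t\,\mathbb{J}\,A_z(\omega) = \mathbb{J}$, the quantity $\Theta_n^t\,\mathbb{J}\,\Theta_n$ is independent of $n$. Evaluating at $n=1$ with the initial conditions~\eqref{def.neu.dir} and the symmetry of $D^{\omega}_0$ gives
\begin{equation*}
\Theta_n^t\,\mathbb{J}\,\Theta_n \,=\, K \,:=\, \left[\begin{array}{cc} 0 & -D^{\omega}_0 \\ D^{\omega}_0 & 0 \end{array}\right], \qquad n\geq 1.
\end{equation*}
Invertibility of $D^{\omega}_0$ makes $K$ (and hence $\Theta_n$) invertible, and this identity rearranges to $\Theta_n^{-1} = K^{-1}\,\Theta_n^t\,\mathbb{J}$; a direct computation of this product produces
\begin{equation*}
\Theta_n^{-1} \,=\, \left[\begin{array}{cc} -(D^{\omega}_0)^{-1}\,\phi_{n-1}^t\,D^{\omega}_{n-1} & (D^{\omega}_0)^{-1}\,\phi_n^t \\ (D^{\omega}_0)^{-1}\,\psi_{n-1}^t\,D^{\omega}_{n-1} & -(D^{\omega}_0)^{-1}\,\psi_n^t \end{array}\right].
\end{equation*}

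It then remains to read off the identity $\Theta_n\Theta_n^{-1} = \mathbb{I}_{2l}$ block by block. The $(1,2)$ entry says $\psi_n(D^{\omega}_0)^{-1}\phi_n^t - \phi_n(D^{\omega}_0)^{-1}\psi_n^t = 0$, which is (a) for $n\geq 1$ (the case $n=0$ is immediate from $\phi_0 = 0$); the $(2,2)$ entry, after cancelling $D^{\omega}_{n-1}$ on the left and relabelling $n-1\mapsto n$, produces (b); the $(1,1)$ entry, after cancelling $D^{\omega}_{n-1}$ on the right and the same relabelling, produces (c); the $(2,1)$ entry is a redundant copy of (a). The main obstacle is purely notational---keeping signs, transpositions, and index shifts aligned---and a conceptually equivalent but more pedestrian route would apply the generalized Green Formula~\eqref{wronski.matriz} to carefully chosen pairs of matrix solutions.
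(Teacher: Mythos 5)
Your argument is correct: the relation $\Theta_{n+1}=A_z(T^n\omega)\Theta_n$, the conservation of $\Theta_n^t\,\mathbb{J}\,\Theta_n$ under the symplectic cocycle, the value $K$ at $n=1$, the resulting formula for $\Theta_n^{-1}$, and the block-by-block reading of $\Theta_n\Theta_n^{-1}=\mathbb{I}_{2l}$ (including the index shift and the $n=0$ case of (a)) all check out. The paper takes the route you call pedestrian at the end: it applies the generalized Green Formula~\eqref{wronski.matriz} together with Lemma~\ref{const.wronsk} to the four pairs $(\psi,\psi),(\psi,\phi),(\phi,\psi),(\phi,\phi)$, assembles the resulting Wronskian identities into the single matrix identity~\eqref{equa.matri.simpl}, namely $\Xi_n^t\,\mathrm{diag}(D_n,D_n)\,\mathbb{J}\,\Xi_n=\mathbb{J}\,\mathrm{diag}(D_0,D_0)$ with $\Xi_n$ built from $\psi_n,\phi_n,\psi_{n+1},\phi_{n+1}$ (no $D$-weights in the blocks), and then passes from this left-inverse relation to the right-inverse one exactly as you do, reading off (a)--(c) from the blocks. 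So the second half of your proof coincides with the paper's; the difference lies in how the conserved bilinear form is produced: you invoke the symplecticity $(A_z)^t\mathbb{J}A_z=\mathbb{J}$ asserted after~\eqref{eq.cociclo.az} and propagate it along the fundamental matrix with the momentum components $D_{n-1}\psi_{n-1}$, $D_{n-1}\phi_{n-1}$ built in, which makes the invariant the fixed matrix $K$ and avoids the four separate Wronskian computations, whereas the paper stays within the Green-formula machinery it has already established and so never needs the cocycle form (its conserved quantity carries the $\mathrm{diag}(D_n,D_n)$ weight instead). Your version is slightly cleaner structurally but leans on the unproved symplecticity claim from the Introduction (which itself uses the symmetry of $D(\omega)$ and $V(\omega)$); the paper's is self-contained at this point of the text. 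Either way the lemma is established.
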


\begin{proof} Again, we omit $z$ and $\omega$ in the notation throughout the proof. If one applies equation \eqref{wronski.matriz} to the pairs $(\psi, \psi), (\psi, \phi), (\phi, \psi)$ and $(\phi, \phi)$, one obtains from the constancy of the Wronskian (Lemma~\ref{const.wronsk}), for each $n \in \mathbb{Z}_{+}$, the system
\begin{equation*}
\begin{cases}
\psi^{t}_{n}D_{n}\psi_{n + 1} - \psi^{t}_{n + 1}D_{n}\psi_{n} = 0, & \\ 
\psi^{t}_{n}D_{n}\phi_{n + 1} - \psi^{t}_{n + 1}D_{n}\phi_{n} = D_{0}, & \\
\phi^{t}_{n}D_{n}\psi_{n + 1} - \phi^{t}_{n + 1}D_{n}\psi_{n} = -D_{0}, & \\
\phi^{t}_{n}D_{n}\phi_{n + 1} - \phi^{t}_{n + 1}D_{n}\phi_{n} = 0, &
\end{cases}
\end{equation*}
which can be written in the form 
\begin{equation}
\label{equa.matri.simpl}
\left[
\begin{array}{cc}
\psi^{t}_{n} & \psi^{t}_{n + 1}\\
\phi^{t}_{n} & \phi^{t}_{n + 1}
\end{array}
\right]
\left[
\begin{array}{cc}
D_{n} & 0 \\
0 & D_{n}
\end{array}
\right]
\mathbb{J}
\left[
\begin{array}{cc}
\psi_{n} & \phi_{n} \\
\psi_{n + 1} & \phi_{n + 1}
\end{array}
\right]
=
\mathbb{J}
\left[
\begin{array}{cc}
D_{0} & 0 \\
0 & D_{0}
\end{array}
\right].
\end{equation}

Since $\mathbb{J}^{-1} = -\mathbb{J}$ and 
\[
\left[
\begin{array}{cc}
D_{0} & 0 \\
0 & D_{0}
\end{array}
\right]^{-1} = \left[
\begin{array}{cc}
D^{-1}_{0} & 0 \\
0 & D^{-1}_{0}
\end{array}
\right],
\]
by multiplying to the left both members of identity \eqref{equa.matri.simpl} by  $\mathbb{J}^{-1}\left[
\begin{array}{cc}
D^{-1}_{0} & 0 \\
0 & D^{-1}_{0}
\end{array}
\right]$, it follows that
\[
\left[
\begin{array}{cc}
\psi_{n} & \phi_{n} \\
\psi_{n + 1} & \phi_{n + 1}
\end{array}
\right]
\left[
\begin{array}{cc}
D^{-1}_{0} & 0 \\
0 & D^{-1}_{0}
\end{array}
\right]
\mathbb{J}
\left[
\begin{array}{cc}
\psi^{t}_{n} & \psi^{t}_{n + 1}\\
\phi^{t}_{n} & \phi^{t}_{n + 1}
\end{array}
\right]
=
\mathbb{J}
\left[
\begin{array}{cc}
D^{-1}_{n} & 0 \\
0 & D^{-1}_{n}
\end{array}
\right],
\] where it has been used the fact that
\[
\left[
\begin{array}{cc}
D^{-1}_{0} & 0 \\
0 & D^{-1}_{0}
\end{array}
\right]
\mathbb{J}^{-1}
\left[
\begin{array}{cc}
\psi^{t}_{n} & \psi^{t}_{n + 1}\\
\phi^{t}_{n} & \phi^{t}_{n + 1}
\end{array}
\right]
\left[
\begin{array}{cc}
D_{n} & 0 \\
0 & D_{n}
\end{array}
\right]
\mathbb{J}
\;\;\;\; \mbox{and} \;\;\;\;
\left[
\begin{array}{cc}
\psi_n & \phi_n \\
\psi_{n+1} & \phi_{n+1}
\end{array}
\right]
\]
commute. Such identity can be written in the form
\begin{equation*}
\begin{cases}
\psi_{n}D^{-1}_{0}\phi^{t}_{n} - \phi_{n}D^{-1}_{0}\psi^{t}_{n} = 0,& \\ 
\psi_{n}D^{-1}_{0}\phi^{t}_{n + 1} - \phi_{n} D^{-1}_{0} \psi^{t}_{n + 1}  =  D^{-1}_{n},& \\
\psi_{n + 1}D^{-1}_{0}\phi^{t}_{n} - \phi_{n + 1}D^{-1}_{0}\psi^{t}_{n}  =  - D^{-1}_{n},& \\
\psi_{n + 1}D^{-1}_{0}\phi^{t}_{n + 1} - \phi_{n + 1}D^{-1}_{0}\psi^{t}_{n + 1}  =  0.&
\end{cases}
\end{equation*}
%
%
\end{proof}


\

From these relations, one can obtain the Green Function of the Dirichlet operator $H^{\phi}_{\omega}$.

\begin{proposition}
\label{porp.def.green}
Set, for each $p, q \in \mathbb{Z}_{+}$ and each $z \in \mathbb{C}\setminus\mathbb{R}$,
\[
G_{\omega}^{\phi}(p, q; z) := 
\left\{
\begin{array}{ll}
- \phi_{p}(z, \omega) (D_{0}^\omega)^{-1} F^{t}_{q}(z, \omega), & p \leq q, \\
&\\
- F_{p}(z, \omega) (D_{0}^\omega)^{-1} \phi^{t}_{q}(z, \omega), & p > q,
\end{array}\right.
\]
where $F$ and $\phi$ are, respectively, the Jost solutions at $+\infty$ and the Dirichlet solution to the eigenvalue equation~\eqref{eq.autovalor} at $z$. Then, for each $\textbf{u} \in l^{2}(\mathbb{Z}_{+}, \mathbb{C}^{l})$,
\begin{equation}
\label{eq.resol.green}
\sum_{q} G_{\omega}^{\phi}(p, q; z)\textbf{u}_{q} = ((H_{\omega}^{\phi} - z)^{-1}\textbf{u})_{p}.
\end{equation}
%
\end{proposition}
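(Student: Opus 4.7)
The approach is to show that the sequence $\mathbf{v}_p := \sum_q G_\omega^\phi(p,q;z)\mathbf{u}_q$ is well-defined, lies in the domain of $H_\omega^\phi$, and satisfies $(H_\omega^\phi - z)\mathbf{v} = \mathbf{u}$ as a pointwise sequence identity. Since $z \in \mathbb{C}\setminus\mathbb{R} \subset \rho(H_\omega^\phi)$ by self-adjointness of $H_\omega^\phi$, the resolvent exists and is bounded, so any such $\mathbf{v}$ must coincide with $(H_\omega^\phi - z)^{-1}\mathbf{u}$. Thus there are two things to verify: the pointwise equation and the inclusion $\mathbf{v}\in\dom(H_\omega^\phi)$ (i.e. $\mathbf{v}\in l^2$ and $\mathbf{v}_0=0$).

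For the pointwise identity I would decompose
\begin{equation*}
\mathbf{v}_p \;=\; -\phi_p (D_0^\omega)^{-1}\!\sum_{q \geq p}\! F_q^t \mathbf{u}_q \;-\; F_p (D_0^\omega)^{-1}\!\sum_{q < p}\! \phi_q^t \mathbf{u}_q,
\end{equation*}
substitute into $D_{p-1}^{*} \mathbf{v}_{p-1} + D_p \mathbf{v}_{p+1} + (V_p - z)\mathbf{v}_p$, and regroup. The terms multiplying the tail-sum $\sum_{q \geq p} F_q^t \mathbf{u}_q$ and the head-sum $\sum_{q < p} \phi_q^t \mathbf{u}_q$ cancel identically, because each column of $\phi$ and of $F$ is itself a solution of \eqref{eq.autovalor} at $z$. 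What survives are two "jump" contributions from the re-indexing between $\mathbf{v}_{p-1}$, $\mathbf{v}_p$, $\mathbf{v}_{p+1}$:
\begin{equation*}
(H_\omega \mathbf{v} - z\mathbf{v})_p \;=\; D_{p-1}^{*}\bigl[F_{p-1}(D_0^\omega)^{-1}\phi_{p-1}^t - \phi_{p-1}(D_0^\omega)^{-1}F_{p-1}^t\bigr]\mathbf{u}_{p-1} \;+\; D_p\bigl[\phi_{p+1}(D_0^\omega)^{-1}F_p^t - F_{p+1}(D_0^\omega)^{-1}\phi_p^t\bigr]\mathbf{u}_p.
\end{equation*}
Using the parametrization $F_n = \psi_n - \phi_n M^\phi D_0^\omega$ of \eqref{equa.m.jost}, the symmetry $M^\phi = (M^\phi)^t$ provided by Proposition \ref{prop.m}(a) kills the residual $\phi_{\cdot}(M^\phi - (M^\phi)^t)\phi_{\cdot}^t$ terms, and the remaining $\psi$-$\phi$ expressions are exactly those appearing in Lemma \ref{lema.green.nd}(a), (c). The first bracket then collapses to $0$ while the second collapses to $(D_p^\omega)^{-1}$, leaving precisely $\mathbf{u}_p$.

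The Dirichlet condition $\mathbf{v}_0 = 0$ is immediate: at $p=0$ the head-sum is empty and $\phi_0 = 0$ kills the tail-sum. For $l^2$-summability, one may estimate both pieces directly using Cauchy-Schwarz together with $\sum_n\|F_n\|^2 < \infty$ from \eqref{def.jost}; a cleaner route, which avoids any bound on $\phi_p$ (which may grow exponentially), is to note that once the pointwise equation and the Dirichlet boundary hold, the difference $\mathbf{v} - (H_\omega^\phi - z)^{-1}\mathbf{u}$ is a solution of the eigenvalue equation at $z$ satisfying the Dirichlet boundary, and therefore must be zero by the limit-point condition \eqref{criterio.ponto}; this simultaneously yields $\mathbf{v} \in l^2$ and the desired formula.

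The main obstacle is the algebraic reduction of the jump terms: the cancellation at $\mathbf{u}_{p-1}$ and the emergence of $(D_p^\omega)^{-1}$ at $\mathbf{u}_p$ rest crucially on the symmetry of the Weyl-Titchmarsh matrix $M^\phi$ together with the hypothesis that $D_0^\omega$ is symmetric, which is where the standing assumption $D(\omega) = D(\omega)^t$ really enters. The Schr\"odinger argument of \cite{kotani88} is cleaner because the $D_0^{-1}$ factors are absent; here one has to carry them through with care, respecting the order of non-commuting matrix products.
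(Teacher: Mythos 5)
Your core computation is exactly the paper's: the paper verifies $(H^{\phi}_{\omega}-z)\sum_q G^{\phi}_\omega(p,q;z)\mathbf{u}_q=\mathbf{u}_p$ by the case analysis $q<p-1$, $q=p-1$, $q=p$, $q\ge p+1$, which is precisely your ``head-sum/tail-sum plus jump terms'' bookkeeping; the cancellation of the $\mathbf{u}_{p-1}$ jump and the emergence of $(D^\omega_p)^{-1}$ at $\mathbf{u}_p$ are the paper's appeals to Lemma \ref{lema.green.nd}(a),(c), and your explicit use of \eqref{equa.m.jost} together with $M^{\phi}=(M^{\phi})^{t}$ (Proposition \ref{prop.m}(a)) is in fact a more careful version of that step, since the lemma is stated for $\psi,\phi$ rather than $F,\phi$. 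The Dirichlet condition $\mathbf{v}_0=0$ is handled as in the paper.

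The genuine gap is in your preferred route to $\mathbf{v}\in l^{2}$. If you only know that $w:=\mathbf{v}-(H^{\phi}_{\omega}-z)^{-1}\mathbf{u}$ solves the eigenvalue equation pointwise with $w_0=0$, then $w_n=\phi_n(z,\omega)c$ for some $c\in\mathbb{C}^{l}$, and nothing forces $c=0$: the limit-point condition \eqref{criterio.ponto} (equivalently, self-adjointness of $H^{\phi}_\omega$) only excludes \emph{square-summable} nonzero solutions with the Dirichlet boundary condition, so to invoke it you must already know $w\in l^{2}$, i.e.\ $\mathbf{v}\in l^{2}$ --- which is exactly what the argument was supposed to deliver. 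So the claim that this route ``simultaneously yields $\mathbf{v}\in l^{2}$'' is circular. Your first route is also not yet a proof as stated: Cauchy--Schwarz with $\sum_n\|F_n\|^{2}<\infty$ controls the tail factor but leaves the prefactor $\|\phi_p\|$, which may grow, uncompensated. (The paper itself is terse here, asserting $l^{2}$-membership from $F\in l^{2}$ alone.) A clean repair: first take $\mathbf{u}$ finitely supported, so the tail piece $-\phi_p(D_0^\omega)^{-1}\sum_{q\ge p}F_q^{t}\mathbf{u}_q$ vanishes for large $p$ and the head piece is $F_p$ times a fixed matrix, hence $\mathbf{v}\in l^{2}$, $\mathbf{v}\in\dom(H^{\phi}_\omega)$, and $\mathbf{v}=(H^{\phi}_\omega-z)^{-1}\mathbf{u}$; then pass to general $\mathbf{u}\in l^{2}$ by density, using that for each fixed $p$ both sides of \eqref{eq.resol.green} are continuous in $\mathbf{u}$ (the left side because $\sum_q\|G^{\phi}_\omega(p,q;z)\|^{2}<\infty$, the head being a finite sum and the tail controlled by $F\in l^{2}$).
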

\begin{proof}
Since for each $n\in\mathbb{Z}_+$, $F_{n}\in l^{2}(\mathbb{Z}_{+}; \mathbb{C}^{l})$, it follows that  
\begin{eqnarray*}
\left(\sum_{q} G_{\omega}^{\phi}(p, q; z)\textbf{u}_{q}\right)_{p \in \mathbb{Z}_{+}}\in l^{2}(\mathbb{Z}_{+}; \mathbb{C}^{l}).
\end{eqnarray*}
One needs to prove that
\[
(H_{\omega}^{\phi} - z)\left(\sum_{q} G_{\omega}^{\phi}(p, q; z)\textbf{u}_{q}\right)_{p \in \mathbb{Z}_{+}} = (\textbf{u}_{p})_{p \in \mathbb{Z}_{+}}.
\]

It follows from the definition of $G_{\omega}^{\phi}(p, q; z)$ that (we omit the dependence of $z$ and $\omega$ in $G_{\omega}^{\phi}(p, q; z)$),
\[
\begin{array}{lll}
(H^{\phi} - z)\left(\sum_{q} G^{\phi}(p, q)\textbf{u}_{q}\right)_{p \in \mathbb{Z}_{+}}= &  & \left(\sum_{q} D_{p}G^{\phi}(p + 1, q)\textbf{u}_{q}\right)_{p \in \mathbb{Z}_{+}}+\\
& & \\ 
& & \left(\sum_{q} D_{p - 1}G^{\phi}(p - 1, q)\textbf{u}_{q}\right)_{p \in \mathbb{Z}_{+}}+  \\
& & \\
& & \left(\sum_{q} (V_{p} - z)G^{\phi}(p, q)\textbf{u}_{q}\right)_{p \in \mathbb{Z}_{+}}.
\end{array}
\]

For a fixed $p\in\mathbb{Z}_+$, one needs to consider the following cases.

\textbf{Case} $q < p - 1$:
\[
\begin{array}{lll}
& D_{p}G^{\phi}(p + 1, q)\textbf{u}_{q} + D_{p - 1}G^{\phi}(p - 1, q)\textbf{u}_{q} + (V_{p} - z)G^{\phi}(p, q)\textbf{u}_{q} = & \\
& & \\
& - D_{p}F_{p + 1}D^{-1}_{0} \phi^{t}_{q}\textbf{u}_{q} - D_{p - 1}F_{p - 1} D^{-1}_{0} \phi^{t}_{q}\textbf{u}_{q} - (V_{p} - z) F_{p} D^{-1}_{0}\phi^{t}_{q}\textbf{u}_{q} = & \\
& & \\
 & - \left[ D_{p}F_{p + 1} + D_{p - 1}F_{p - 1} + (V_{p} - z) F_{p} \right]D^{-1}_{0}\phi^{t}_{q}\textbf{u}_{q}  = - \left[ 0 \right] D^{-1}_{0}\phi^{t}_{q}\textbf{u}_{q} = \textbf{0}. & 
\end{array}
\]

\textbf{Case} $q = p - 1$:
\[
\begin{array}{lll}
& D_{p}G^{\phi}(p + 1, q)\textbf{u}_{q} + D_{p - 1}G^{\phi}(p - 1, q)\textbf{u}_{q} + (V_{p} - z)G^{\phi}(p, q)\textbf{u}_{q} = & \\
& & \\
& - D_{p}F_{p + 1} D^{-1}_{0} \phi^{t}_{q}\textbf{u}_{q} - D_{p - 1}\phi_{p - 1} D^{-1}_{0} F^{t}_{q}\textbf{u}_{q} - (V_{p} - z) F_{p} D^{-1}_{0} \phi^{t}_{q}\textbf{u}_{q} = &  \\
&&\\
& - D_{p}F_{p + 1} D^{-1}_{0} \phi^{t}_{q}\textbf{u}_{q} - D_{p - 1}F_{p - 1}D^{-1}_{0} \phi^{t}_{q}\textbf{u}_{q} - (V_{p} - z) F_{p}D^{-1}_{0}\phi^{t}_{q}\textbf{u}_{q} = &\\
& & \\
& - \left[ D_{p}F_{p + 1} + D_{p - 1}F_{p - 1} + (V_{p} - z) F_{p} \right] D^{-1}_{0}\phi^{t}_{q}\textbf{u}_{q} = - \left[ 0 \right] D^{-1}_{0} \psi^{t}_{q} \textbf{u}_{q} =  \textbf{0},& 
\end{array}
\]
where one has applied \eqref{eq.F.phi}-$(a)$ to the second identity.

\textbf{Case} $q=p$: applying \eqref{eq.F.phi}-$(c)$, one has 
\[
D_{q}G^{\phi}(q + 1, q)\textbf{u}_{q} + D_{q - 1}G^{\phi}(q - 1, q)\textbf{u}_{q} + (V_{q} - z)G^{\phi}(q, q)\textbf{u}_{q} = \textbf{u}_{q}.
\]

\textbf{Case} $q \geq p + 1$:
\[
D_{p}G^{\phi}(p + 1, q)\textbf{u}_{q} + D_{p - 1}G^{\phi}(p - 1, q)\textbf{u}_{q} + (V_{p} - z)G^{\phi}(p, q)\textbf{u}_{q} = \textbf{0}.
\]
\end{proof}


\subsection{Spectral Supports}

We note that, for each $\omega\in\Omega$, the Green Function $G_{\omega}^{\phi}(1, 1; \cdot):\mathbb{C}_{+}\rightarrow M(l,\mathbb{C})$ is a matrix-valued Herglotz function (that is, $G_{\omega}^{\phi}(1, 1; \cdot)$ is analytic and $\Im G_{\omega}^{\phi}(1, 1;z)>0$, for each $z\in\mathbb{C}_+$, given that $G$ is the integral kernel of $(H^\phi_\omega-z)^{-1}$ and $\Im (H^\phi_\omega-z)^{-1}>0)$, from which follows that, for $\kappa$-a.e.~$x\in\mathbb{R}$, 
\[
\lim_{y \downarrow 0}\Im G^{\phi}_\omega(1, 1; x \pm iy)<\infty
\] 
(see~\cite{gesztesy97}). By the Spectral Theorem one has, for each $\textbf{u} \in l^{2}(\mathbb{Z}_{+}; \mathbb{C}^{l})$, 
\begin{equation}
\label{eq.med.espec}
\left\langle  (H^{\phi}_\omega - z)^{-1} \textbf{u}, \textbf{u} \right\rangle = \int \frac{1}{x - z} d\mu_{\textbf{u}}(x),
\end{equation}
where $\mu_{\textbf{u}}$ is a finite Borel measure.

Consider the $l$ canonical vectors $(\textbf{e}_{1,k})_{k = 1,\ldots,l}$ in $(\mathbb{C}^{l})^{\mathbb{Z}_{+}}$, where $(\textbf{e}_{1,k})_{n,j} = \delta_{1,n}\delta_{j,k}$. These vectors form  a spectral basis for the operator $H^{\phi}_{\omega}$, in the sense that
\[\overline{\mathrm{span}\{\cup_{k=1}^l\{(H^\phi_\omega)^n(\textbf{e}_{1,k})\mid n\in\mathbb{Z}_+\}\}}=l^{2}(\mathbb{Z}_{+}; \mathbb{C}^{l});\]


 
\noindent 
therefore, in order to obtain the spectral properties of the operator $H_{\omega}^{\phi}$, it is enough to study the properties of the matrix-valued spectral measure $(\mu_{\textbf{e}_{1, i}, \textbf{e}_{1, j}})_{1 \leq i,j \leq l}$.

The next step consists in obtaining a characterization of the absolutely continuous spectrum (including multiplicity) by
establishing minimal supports for the spectral measures. One says that a Borel subset $S \subseteq \mathbb{R}$ is a minimal support of a positive and finite Borel measure $\mu$ if:
\begin{equation*}
\begin{array}{lll}
(i) & \mu (\mathbb{R} \setminus S) = 0; \\
&\\
  (ii) & \mbox{for each}~S_{0}\subset S~ \mbox{such that}~\mu(S_{0}) = 0,~\kappa(S_{0}) = 0
\end{array}
\end{equation*} 
\noindent where $\kappa(\cdot)$ stands for the Lebesgue measure on $\mathbb{R}$.

Let, for each $z \in \mathbb{C}\setminus \supp[\Omega]$,
\[Q(z)=\int_{\mathbb{R}}\frac{1}{x - z}d\Omega(x)\]
be the matrix-valued Herglotz function associated with the finite matrix-valued Borel measure $\Omega$. It follows from Theorem $6.1$ in \cite{gesztesy97} that 
the sets 
\begin{equation}
\label{def.suport.herglo}
\begin{array}{lll}
S_{ac, r} & := & \{x \in \mathbb{R}\mid \lim_{y \downarrow 0} Q(x + iy)<\infty, \rk[\lim_{y \downarrow 0} \Im[Q(x + iy)]] = r\},\\
&&\\
S_{ac} & := & \bigcup^{l}_{r = 1}S_{ac, r},\\
&&\\
S_{s} & := & \{x \in \mathbb{R}\mid \lim_{y \downarrow 0} \Im[\tr[Q(x + iy)]] = \infty \},
\end{array}
\end{equation}
are minimal supports for the absolutely continuous component of multiplicity $r$, absolutely continuous and singular components of the measure $\Omega$, respectively. 

\begin{proposition}
\label{porp.sup.ac}
For each $z \in \mathbb{C} \setminus \mathbb{R}$ and each $\omega \in \Omega$, let $M^{\phi}(z, \omega)$ be the Weyl-Titchmarsh matrix of the operator $H^\phi_\omega$, defined by \eqref{def.m.weyl}. Then, for each $j=1,\ldots,l$, the set
\[
\Sigma^{\phi,\omega}_{ac, j}  :=  \{x \in \mathbb{R}\mid\lim_{y \downarrow 0} M^{\phi}(x + iy, \omega)<\infty, \rk[\lim_{y \downarrow 0} \Im[M^{\phi}(x + iy, \omega)]] = j\}
\]
is a minimal support for the absolutely continuous component of multiplicity $j$, and the set
\[
\Sigma^{\phi,\omega}_{ac} := \bigcup^{l}_{j = 1} \Sigma^{\phi}_{ac, j}
\]
is a minimal support for the absolutely continuous component of the spectral measure of the operator $H^{\phi}_{\omega}$. Finally, the set
\[
\Sigma^{\phi,\omega}_{s} := \{x \in \mathbb{R}\mid \lim_{y \downarrow 0} \Im[\tr[M^{\phi}(x + iy, \omega)]] = \infty \}
\]
is a minimal support for the singular component of the spectral measure of the operator $H^{\phi}_{\omega}$.
\end{proposition}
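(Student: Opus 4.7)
The plan is to identify the Weyl--Titchmarsh matrix $M^\phi(z,\omega)$ with a matrix-valued Herglotz function that represents the matrix-valued spectral measure of $H^\phi_\omega$ with respect to a spectral basis, and then to transfer the abstract characterization of minimal supports from \eqref{def.suport.herglo} directly.

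The key step is the identity
\begin{equation*}
G^\phi_\omega(1,1;z) = M^\phi(z,\omega),
\end{equation*}
which I would establish as follows. Using the definition in Proposition \ref{porp.def.green} with $p=q=1$ and $\phi_1(z,\omega) = \mathbb{I}$, one has $G^\phi_\omega(1,1;z) = -(D^\omega_0)^{-1} F_1^t(z,\omega)$. On the other hand, the definition \eqref{def.m.weyl} gives $M^\phi(z,\omega)^t = -((D^\omega_0)^{-1})^t F_1^t(z,\omega) = -(D^\omega_0)^{-1} F_1^t(z,\omega)$, since $D^\omega_0$ is symmetric. By Proposition \ref{prop.m}$(a)$, $M^\phi$ is symmetric, so $G^\phi_\omega(1,1;z) = M^\phi(z,\omega)$.

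Next, I would invoke \eqref{eq.resol.green} together with the spectral theorem \eqref{eq.med.espec} applied to the canonical vectors $\textbf{e}_{1,i}, \textbf{e}_{1,j}$: because $(\textbf{e}_{1,k})_{q} = \delta_{1,q} e_k$, one obtains
\begin{equation*}
[M^\phi(z,\omega)]_{ij} = \langle G^\phi_\omega(1,1;z)\, e_j, e_i \rangle_{\mathbb{C}^l} = \langle (H^\phi_\omega - z)^{-1} \textbf{e}_{1,j}, \textbf{e}_{1,i}\rangle = \int_{\mathbb{R}} \frac{1}{x - z}\, d\mu_{\textbf{e}_{1,j},\textbf{e}_{1,i}}(x).
\end{equation*}
Hence $M^\phi(\cdot,\omega)$ coincides with the matrix-valued Herglotz function $Q$ associated (via the representation recalled before \eqref{def.suport.herglo}) with the matrix-valued spectral measure $\Omega = (\mu_{\textbf{e}_{1,i},\textbf{e}_{1,j}})_{1\le i,j\le l}$ of $H^\phi_\omega$ on the spectral basis $\{\textbf{e}_{1,k}\}_{k=1}^l$.

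The conclusion then follows directly from Theorem $6.1$ of \cite{gesztesy97}: the sets $S_{ac,r}$, $S_{ac}$, $S_{s}$ defined in \eqref{def.suport.herglo} for an arbitrary matrix-valued Herglotz function are minimal supports for the absolutely continuous component of multiplicity $r$, the full absolutely continuous component, and the singular component, respectively, of the underlying matrix-valued Borel measure. Applying this to $Q = M^\phi(\cdot,\omega)$ yields precisely the three claimed sets $\Sigma^{\phi,\omega}_{ac,j}, \Sigma^{\phi,\omega}_{ac}, \Sigma^{\phi,\omega}_{s}$ as minimal supports for the spectral measure of $H^\phi_\omega$. The only nontrivial point is the computation $G^\phi_\omega(1,1;z) = M^\phi(z,\omega)$; once this identification is in hand, the rest is a direct application of the cited general result, so I do not expect a genuine obstacle.
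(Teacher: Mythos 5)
Your proposal is correct and follows essentially the same route as the paper: both establish $G^{\phi}_{\omega}(1,1;z)=M^{\phi}(z,\omega)$ via $\phi_1=\mathbb{I}$, the definition \eqref{def.m.weyl}, the symmetry of $D^{\omega}_0$ and Proposition \ref{prop.m}$(a)$, then identify $M^{\phi}(\cdot,\omega)$ with the Borel transform of the matrix-valued spectral measure $(\mu_{\textbf{e}_{1,i},\textbf{e}_{1,j}})$ through \eqref{eq.resol.green} and \eqref{eq.med.espec}, and conclude by the minimal-support characterization \eqref{def.suport.herglo} from Theorem 6.1 of \cite{gesztesy97}. No gaps beyond the (standard, implicit in both arguments) polarization needed to pass from \eqref{eq.med.espec} to the off-diagonal entries of the spectral measure.
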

\begin{proof}
By definition, 
\begin{equation}
\label{eq.m.herglotz}
G_{\omega}^{\phi}(1, 1; z) = -\phi_{1} D^{-1}_{0}F^{t}_{1} = D^{-1}_{0}D^{t}_{0}(M^{\phi})^{t} = M^{\phi}(z, \omega).
\end{equation}

It follows from identities \eqref{eq.resol.green} and~\eqref{eq.med.espec} that
\[G_{\omega}^{\phi}(1, 1; z)=\int\dfrac{1}{x-z}d\Omega,\]
where $\Omega=(\mu_{\textbf{e}_{1, i}, \textbf{e}_{1, j}})_{1 \leq i,j \leq l}$; thus, since $G_{\omega}^{\phi}(1, 1; z)$ is a matrix-valued Herglotz function, the result follows from~
\eqref{def.suport.herglo} and~\eqref{eq.m.herglotz}. 
\end{proof}

Finally, the next result relates the absolutely continuous spectral components (of all multiplicities) of operators which differ by a finite rank operator.

\begin{proposition}
\label{prop.supor.pertur.z}
Let $(H_{\omega})_\omega$ be the Jacobi operator defined in $l^{2}(\mathbb{Z}; \mathbb{C}^{l})$ by the law~\eqref{eq.ope.din.jacobi}, and let $(H^{\phi}_{\omega, \pm})_\omega$ be the correspondent Dirichlet operators defined in $l^{2}(\mathbb{Z}_{\pm}; \mathbb{C}^{l})$. 
Then, for each $\omega\in\Omega$ and each $j\in\{1,\ldots,2l\}$,  $\sigma_{ac, j}\left( H_{\omega} \right) = \sigma_{ac, j}\left( H_{\omega, +}^{\phi} \oplus H_{\omega, -}^{\phi} \right)$.
\end{proposition}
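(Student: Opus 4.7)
The plan is to realize $H_\omega$ as a finite-rank perturbation of the decoupled operator $H_{\omega,+}^\phi \oplus H_{\omega,-}^\phi$, and then invoke a Kato--Rosenblum / Birman--Kuroda type invariance principle to obtain unitary equivalence of the absolutely continuous parts (with their multiplicity structure preserved).

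The first step is to make the perturbation explicit. Identifying $l^2(\mathbb{Z};\mathbb{C}^l)=l^2(\mathbb{Z}_-;\mathbb{C}^l)\oplus l^2(\mathbb{Z}_+;\mathbb{C}^l)$ with the splitting placed between $n=0$ and $n=1$, the operator $H_\omega$ and the direct sum $H_{\omega,+}^\phi\oplus H_{\omega,-}^\phi$ agree at every coordinate except at $n=0$ and $n=1$, where the two off-diagonal coupling terms $D(\omega)\mathbf{u}_1$ (contributing to coordinate $0$) and $D^*(\omega)\mathbf{u}_0$ (contributing to coordinate $1$) are present in $H_\omega$ but absent from the decoupled operator. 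Since these terms depend only on the $2l$ complex numbers entering $\mathbf{u}_0$ and $\mathbf{u}_1$, the difference
\[
K_\omega := H_\omega-\bigl(H_{\omega,+}^\phi\oplus H_{\omega,-}^\phi\bigr)
\]
is a bounded self-adjoint operator of rank at most $2l$, in particular trace-class. Boundedness uses the standing hypothesis that $D:\Omega\to GL(l,\mathbb{R})$ is bounded.

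Next, I would apply the Kato--Rosenblum theorem (see, e.g., the standard scattering-theory references) to the pair $\bigl(H_\omega,\,H_{\omega,+}^\phi\oplus H_{\omega,-}^\phi\bigr)$: since $K_\omega$ is trace-class, the wave operators
\[
W_\pm=\slim_{t\to\pm\infty}e^{itH_\omega}e^{-it(H_{\omega,+}^\phi\oplus H_{\omega,-}^\phi)}P_{ac}\bigl(H_{\omega,+}^\phi\oplus H_{\omega,-}^\phi\bigr)
\]
exist and are complete. Consequently, the absolutely continuous part of $H_\omega$ is unitarily equivalent to the absolutely continuous part of $H_{\omega,+}^\phi\oplus H_{\omega,-}^\phi$. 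Because absolutely continuous spectrum together with its multiplicity function is a complete unitary invariant of the a.c.\ part of a self-adjoint operator, this unitary equivalence yields, for every $j\in\{1,\ldots,2l\}$,
\[
\sigma_{ac,j}(H_\omega)=\sigma_{ac,j}\bigl(H_{\omega,+}^\phi\oplus H_{\omega,-}^\phi\bigr),
\]
which is the claim.

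I do not anticipate any serious obstacle: the only subtlety is to be careful that the multiplicity statement, not only the set-equality of a.c.\ spectra, is obtained, and for this the key point is the completeness of the wave operators, which is exactly what Kato--Rosenblum provides in the trace-class setting. The boundedness of $D$ is used only to ensure $K_\omega$ is bounded (hence trace-class), and no additional ergodicity or spectral information about $H_\omega$ is needed at this step.
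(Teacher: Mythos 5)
Your proposal is correct and uses the same key tool as the paper---writing $H_\omega$ as a bounded finite-rank (rank at most $2l$, bounded since $D$ is bounded) perturbation of $H_{\omega,+}^{\phi}\oplus H_{\omega,-}^{\phi}$ and invoking Kato--Rosenblum---but you extract the multiplicity statement by a genuinely different route. The paper fixes the canonical spectral basis $\{\mathbf{e}_{\pm 1,k}\}_{k=1,\ldots,l}$, restricts both operators to the cyclic subspaces generated by these vectors, and applies the set-equality form of Kato--Rosenblum ($\sigma_{ac}(S)=\sigma_{ac}(T)$ when $S-T$ has finite rank) to each restricted pair, recovering the multiplicity by counting cyclic components. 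You instead apply the trace-class Kato--Rosenblum theorem once, to the pair itself: existence and completeness of the wave operators give unitary equivalence of the absolutely continuous parts, and since the a.c.\ spectrum together with its multiplicity function is a unitary invariant, the equality $\sigma_{ac,j}(H_\omega)=\sigma_{ac,j}(H_{\omega,+}^{\phi}\oplus H_{\omega,-}^{\phi})$ for all $j\in\{1,\ldots,2l\}$ follows at once. Your route is, if anything, tighter on the delicate point: the cyclic subspace generated by $\mathbf{e}_{m,k}$ under $H_\omega$ need not coincide with the one generated under the decoupled operator, nor is it evident that the two restricted operators differ by a finite-rank operator, so the paper's componentwise argument requires extra care exactly where your wave-operator argument is automatic. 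What the paper's approach buys in exchange is that it stays within the elementary ``equality of a.c.\ spectra'' statement and the spectral-basis framework already set up in Section~\ref{resolvente}, without appealing to completeness of wave operators; your version needs the stronger (but entirely standard) scattering-theoretic form of the theorem.
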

\begin{proof}
Let $\{ \textbf{e}_{1, 1}, \textbf{e}_{1, 2}, \ldots, \textbf{e}_{1, l}, \textbf{e}_{-1, 1}, \textbf{e}_{-1, 2}, \ldots, \textbf{e}_{-1, l} \}$ be the canonical spectral basis for both $H$ and $\left( H_{+}^{\phi} \oplus H_{-}^{\phi} \right)$. For each $k\in\{1,\ldots,l\}$ and each $m\in\{-1, 1\}$, let $H_{m, k}$ and $\left( H_{+}^{\phi} \oplus H_{-}^{\phi} \right)_{m, k}$ be the operators given by the restrictions of $H$ and $\left( H_{+}^{\phi} \oplus H_{-}^{\phi} \right)$ to the subspace spanned by $\textbf{e}_{m, k}$, respectively.

Kato-Rosenblum's Theorem (see~\cite{simon79}) establishes that if the difference between two bounded self-adjoint operators, say $S$ and $T$, is a finite rank operator, then $\sigma_{ac}(S)=\sigma_{ac}(T)$. 

Since $\left( H_{+}^{\phi} \oplus H_{-}^{\phi} \right)_{m, k}$ is a finite rank perturbation of $H_{m, k}$, it follows that $\sigma_{ac}(H_{m, k})=\sigma_{ac}\left(\left( H_{+}^{\phi} \oplus H_{-}^{\phi} \right)_{m, k}\right)$.
\end{proof}

\section{Lyapunov Exponents and Thouless Formula}\label{thouless}
\zerarcounters

As discussed in Introduction, if the cocycle $(T, A_{z})$, with $A_{z}: \Omega \rightarrow SL(l, \mathbb{C})$ given by \eqref{eq.cociclo.az}, is such that $\log^{+} \left\| A_z(\omega)\right\|: \Omega \rightarrow \mathbb{R} \in L^{1}(\nu)$, then it follows from Oseledec Theorem that its Lyapunov exponents are well defined. 

One may characterize such exponents in terms of the singular values of the transfer matrices $A_n(z,\omega)$ by relation \eqref{eq.ruelle}. We emphazise again that, by the ergocity of $T$, they do not depend on $\omega\in\Omega$. 

If one also assumes that the mapping $D: \Omega \rightarrow M(l, \mathbb{R})$ is bounded, the operator $(H^\phi_\omega)_\omega$ is (as discussed in Section~\ref{resolvente}) in the limit point case at $+\infty$. Hence, one can define the sequence of matrices $(F^{(+)}_{n}(z, \omega))_{n \in \mathbb{Z}_{+}} \in l^{2}(\mathbb{Z}_{+}; M(l, \mathbb{\mathbb{C}}))$ (the Jost solutions), which span the subspace $\mathcal{J}_{+}(z, \omega)$  and satisfies 
\begin{equation}
\label{eq.tran.fmais} 
\left[
\begin{array}{c}
F^{(+)}_{n + 1}(z, \omega)  \\
 \\
D(T^{n}\omega)F^{(+)}_{n}(z, \omega) 
\end{array}
\right]
=
A_{n}(z, \omega)
\left[
\begin{array}{c}
M_{+}(z, \omega)  \\
 \\
D(\omega)
\end{array}
\right].
\end{equation}
One can also define, in the same manner, the sequence $(F^{(-)}_{- n}(z, \omega))_{n \in \mathbb{Z}_{+}} \in l^{2}(\mathbb{Z}_{-}; M(l, \mathbb{\mathbb{C}}))$, which spans the subspace $\mathcal{J}_{-}(z, \omega)$ and satisfies
\begin{equation}
\label{eq.tran.fmenos}
\left[
\begin{array}{c}
F^{(-)}_{- n - 1}(z, \omega)  \\
 \\
D(T^{n}\omega)F^{(-)}_{- n}(z, \omega) 
\end{array}
\right]
=
A_{- n}(z, \omega)
\left[
\begin{array}{c}
M_{-}(z, \omega)  \\
 \\
D(\omega)
\end{array}
\right];
\end{equation}
here, the matrices $A_{- n}(z, \omega)$ are defined as in \eqref{def.mat.trans.erg}.

Note that these subspaces satisfy $\dim(\mathcal{J}_{+}(z, \omega)) = \dim(\mathcal{J}_{-}(z, \omega)) = l$ and their union spans the space of the solutions to the eigenvalue equation~\eqref{eq.autovalor}.  

Since, for each $n\in\mathbb{Z}$ and each $z\in\mathbb{C}$, $A_n(z)\in\mathrm{sp}(2l)$ (that is, $(A_{n}(z))^t\mathbb{J}A_n(z)=\mathbb{J}$), it follows from relation \eqref{eq.ruelle} that the Lyapunov exponents are pairs of symmetric numbers. Moreover, since the sequences $(F^{(\pm)}_{n})_{n \in \mathbb{Z}_{+}}$ are related with the transfer matrices by equations \eqref{eq.tran.fmais} and \eqref{eq.tran.fmenos}, one has (by assuming that the matrices $D(\omega)$ are uniformly bounded), for each $j\in\{1, 2,\ldots, l\}$ and each $z\in\mathbb{C}\setminus\mathbb{R}$,
\[
\lim_{n \rightarrow \infty} \frac{1}{n} \log \left(s_{j}[A_{n}(z, \omega)]\right) 
=
\lim_{n \rightarrow \infty} \frac{1}{n} \log \left(s_{j}[F^{(-)}_{n}(z, \omega)]\right),
\]
and for each $j\in\{l+1,l+2,\ldots, 2l\}$, 
\[
\lim_{n \rightarrow \infty} \frac{1}{n} \log \left(s_{j}[A_{n}(z, \omega)]\right) 
=
\lim_{n \rightarrow \infty} \frac{1}{n} \log \left(s_{j}[F^{(+)}_{n}(z, \omega)]\right).
\]

Let, for each $j\in\{1, 2,\ldots, l\}$ and each $z\in\mathbb{C}\setminus\mathbb{R}$, $\gamma_j^{\pm}(z)$ be the Lyapunov exponents of $F^{(\mp)}_{n}(z, \omega)$. Gathering these results for the cocycle $(T, A_{z})$, one establishes the following proposition.

\begin{proposition}
\label{prop.def.exp.lyap}
Let $(H_{\omega})_{\omega}$ be as above and let $z \in \mathbb{C}\setminus\mathbb{R}$. Then, the cocycle $(T, A_{z})$ has (not necessarily distict) $2l$ Lyapunov exponents which satisfy
\begin{eqnarray*}
 \left\{ \begin{array}{ll} \gamma_j(z)=\gamma^+_j(z), & j=1,2,\ldots,l \\
   \gamma_j(z)=\gamma^-_{2l-j+1}(z), & j=l+1,l+2,\ldots,2l. \end{array}\right.
\end{eqnarray*}
Moreover, for each $j\in\{1,\ldots,l\}$ and for $\nu$-a.e. $\omega\in\Omega$, one has\footnote{For the precise definition of $\Lambda^{j}$, $j\in\{1,\ldots,l\}$, see page $179$ in~\cite{carmona90} and page $43$ in~\cite{petz2014}.} 
\begin{equation*}
\gamma^{+}_{1}(z) + \gamma^{+}_{2}(z) + \ldots + \gamma^{+}_{j}(z) = 
\lim_{n \rightarrow \infty} \frac{1}{n} \log \left\|\Lambda^{j}(F^{(-)}_{n}(z, \omega))\right\|,
\end{equation*}
\begin{equation*}
\gamma^{-}_{1}(z) + \gamma^{-}_{2}(z) + \ldots + \gamma^{-}_{j}(z) = 
\lim_{n \rightarrow \infty} \frac{1}{n} \log \left\|\Lambda^{j}(F^{(+)}_{n}(z, \omega))\right\|,
\end{equation*}
and if $z\in\mathbb{C}$, then for each $j=1,\ldots,l$, $\gamma_j(z)=-\gamma_{2l-j+1}(z)\ge 0$.
\end{proposition}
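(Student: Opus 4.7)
The plan is to combine three inputs: (i) the Oseledec--Ruelle--Raghunathan multiplicative ergodic theorem applied to the cocycle $(T,A_z)$, (ii) the symplectic structure of $A_z$, and (iii) the singular-value comparisons between $A_n$ and the Jost matrices $F^{(\pm)}_n$ already displayed just before the statement. Since $\log^+\|A_z(\cdot)\|\in L^1(\nu)$ and $T$ is ergodic, Oseledec produces the $2l$ Lyapunov exponents $\gamma_1(z)\ge\cdots\ge\gamma_{2l}(z)$, with the limits in \eqref{eq.ruelle} being $\nu$-a.e.\ independent of $\omega$.

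First, I would derive the symmetry $\gamma_j(z)=-\gamma_{2l-j+1}(z)$ from the symplectic identity $A_n^t\mathbb{J}A_n=\mathbb{J}$, which yields $A_n^{-1}=-\mathbb{J}A_n^t\mathbb{J}$ and hence the reciprocal pairing $s_j(A_n)\,s_{2l-j+1}(A_n)=1$. Taking $\lim_n\frac{1}{n}\log(\cdot)$ on both sides and using the singular-value definition of $\gamma_j$ gives the claimed symmetry; since the $\gamma_j$ are ordered decreasingly, this forces $\gamma_j(z)\ge 0$ for each $j\in\{1,\ldots,l\}$.

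Next, to obtain $\gamma_j=\gamma^+_j$ for $j\le l$ and $\gamma_j=\gamma^-_{2l-j+1}$ for $j>l$, I would simply read off the identities from the singular-value comparisons already established between $s_j[A_n]$ and $s_j[F^{(\mp)}_n]$, together with the definitions $\gamma^+_j(z):=$ $j$-th Lyapunov exponent of $F^{(-)}_n$ and $\gamma^-_j(z):=$ $j$-th Lyapunov exponent of $F^{(+)}_n$. The only subtlety is the index reversal in the second identity, which comes from the fact that the Jost solutions at $+\infty$ span the $l$-dimensional stable subspace (the direction of slowest growth, equivalently of the $l$ smallest singular directions of $A_n$). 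Finally, for the partial-sum identities via exterior powers, I would invoke the standard consequence of Raghunathan's refinement of Oseledec's theorem, namely $\lim_n\frac{1}{n}\log\|\Lambda^j B_n\|=\lambda_1+\cdots+\lambda_j$ for any matrix-valued cocycle $(B_n)$ with Lyapunov exponents $\lambda_1\ge\lambda_2\ge\cdots$, which is essentially the Horn-type identity $\|\Lambda^j B_n\|=\prod_{k=1}^j s_k(B_n)$ combined with the Ruelle limit; applied to $B_n=F^{(\mp)}_n$ this yields the two claimed formulas.

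The step I expect to require the most care is the bookkeeping between the $l\times l$ Jost matrices $F^{(\pm)}_n$ and the $2l\times 2l$ transfer matrices $A_n$: one must verify that the singular values of the augmented column blocks $[F^{(\pm)}_{n+1};\,D(T^n\omega)F^{(\pm)}_n]$ appearing in \eqref{eq.tran.fmais}--\eqref{eq.tran.fmenos} grow at the same exponential rate as those of $F^{(\pm)}_n$ alone. Here the uniform boundedness of $D$ on $\Omega$ is what ensures that the auxiliary block $D(T^n\omega)F^{(\pm)}_n$ contributes only a sub-exponential factor, so the Lyapunov rates of $F^{(\pm)}_n$ are faithfully captured by the corresponding halves of the spectrum of the full cocycle $(T,A_z)$.
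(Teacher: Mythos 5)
Your proposal is correct and follows essentially the same route as the paper, which states this proposition as a direct consequence of the immediately preceding discussion: Oseledec's theorem for $(T,A_z)$, the symplectic identity $(A_n)^t\mathbb{J}A_n=\mathbb{J}$ forcing the exponents to occur in symmetric pairs, the identification of $\lim_n\frac{1}{n}\log s_j[A_n]$ with $\lim_n\frac{1}{n}\log s_j[F^{(\mp)}_n]$ via \eqref{eq.tran.fmais}--\eqref{eq.tran.fmenos} and the boundedness of $D$, and the Ruelle/exterior-power formula $\|\Lambda^j B\|=\prod_{k=1}^j s_k(B)$ for the partial sums. Your attention to the comparison between the augmented blocks and $F^{(\pm)}_n$ is exactly the point where the hypothesis on $D$ enters in the paper as well.
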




Set, for each $z\in\mathbb{C}$, 
\begin{equation}\label{gamma}
  \gamma(z):=\sum_{j=1}^l\gamma_j(z);
\end{equation}
it follows from Proposition~\ref{prop.def.exp.lyap} that for $z\in\mathbb{C}\setminus\mathbb{R}$,
\[\gamma(z)=\sum_{j=1}^l\gamma_j^+(z)=-\sum_{j=1}^l\gamma_j^-(z).
\]
One also concludes from Proposition~\ref{prop.def.exp.lyap} that for each $z\in\mathbb{C}\setminus\mathbb{R}$,
$$\gamma^{+}_{1}(z) \geq \ldots \geq \gamma^{+}_{l - 1}(z) \geq \gamma^{+}_{l}(z) \geq 0 \geq  \gamma^{-}_{l}(z) \geq  \gamma^{-}_{l - 1}(z) \geq \ldots \geq \gamma^{-}_{1}(z), 
$$
where, for each $j\in\{1,\ldots,l\}$, $\gamma^{+}_{j}(z) =  - \gamma^{-}_{j}(z)$. 

Our next step consists in establishing Thouless Formula in our setting (Theorem \ref{teo.thou}). 
In order to prove this result, we adapt the proofs presented in \cite{cycon87} and \cite{kotani88}; thus, one needs to define the so-called integrated density of states.

\begin{definition}[Integrated Density of States] 
\label{def.dens.est}
Let $(H_{\omega})_{\omega}$ be as before. One defines the integrated density of states related with this family of operators by the law 
\begin{equation*}
k(\Gamma) :=  \int_{\Omega}\left( \sum_{j = 1}^{l}  \left\langle \mathcal{P}_{\omega}(\Gamma)\textbf{e}_{1,j}, \textbf{e}_{1,j}\right\rangle    \right)d\nu(\omega),
\end{equation*}
where $\Gamma \in \mathcal{A}$ ($\mathcal{A}$ stands for the Borel $\sigma$-algebra of $\mathbb{R}$) and $\mathcal{P}_{\omega}: \mathcal{A} \rightarrow B(l^{2}(\mathbb{Z}; \mathbb{C}^{l}))$ is the resolution of the identity associated with the Dirichlet operator $H^{\phi}_{\omega}$.
\end{definition}

Before we proceed let, for each $N \in \mathbb{N}$ and each $\omega \in \Omega$, $H_{\omega}^{N}$ be the restriction of $H_{\omega}^{\phi}$ to the space $l^{2}(\{0, 1, \ldots, N - 1, N\};\mathbb{C}^l)$. Let also $k_{\omega}^{N}: \mathbb{R} \rightarrow \mathbb{Z}_{+}$ be given by the law
\begin{equation}
\label{def.seq.est}
\begin{array}{lllll}
k_{\omega}^{N}(x) &  := & \dfrac{1}{N} |\{ $eigenvalues of $H_{\omega}^{N}$ least or equal to $x \}|,
\end{array}
\end{equation}
where $|A|$ stands for the cardinality of the set $A$.

In what follows, we show that the sequence of measures $(dk_{\omega}^{N})_{N \in \mathbb{N}}$ converges  weakly to the density of states $dk$.

\begin{proposition}
\label{prop.dens.est}
Let $k$ be as in Definition \ref{def.dens.est}, let $f: \mathbb{R} \rightarrow \mathbb{R}$ be a measurable and bounded function, and let $(k_{\omega}^{N})_{N \in \mathbb{N}}$ be the sequence given by \eqref{def.seq.est}. 
Then, there exists a $\nu$-measurable set $\Omega_{f}$, with $\nu(\Omega_{f}) = 1$, such that, for each $\omega \in \Omega_{f}$,
$$
\lim_{N \rightarrow \infty} \int f(x) dk_{\omega}^{N}(x) = \int f(x) dk(x).
$$
\end{proposition}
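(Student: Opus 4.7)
The plan is to first establish the result for monomials $f(x)=x^k$ using the spectral theorem, dynamical covariance and Birkhoff's ergodic theorem; then to extend the conclusion to bounded continuous $f$ by Weierstrass' approximation theorem, and finally to bounded measurable $f$ by a standard weak convergence argument.

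For a monomial $p(x)=x^k$, the spectral theorem applied to the finite-rank self-adjoint operator $H_\omega^N$ gives
\[\int p(x)\,dk_\omega^N(x)=\frac{1}{N}\tr\bigl(p(H_\omega^N)\bigr)=\frac{1}{N}\sum_{n=0}^{N}\sum_{j=1}^{l}\bigl\langle p(H_\omega^N)\textbf{e}_{n,j},\textbf{e}_{n,j}\bigr\rangle.\]
The key observation is that $H_\omega$ is a band operator with coefficients uniformly bounded in $\omega$, so $(H_\omega^N)^k\textbf{e}_{n,j}=(H_\omega)^k\textbf{e}_{n,j}$ whenever $k\le n\le N-k$, while the difference between the two vectors is bounded in norm by a constant depending only on $k$ (through the uniform bounds on $D$ and $V$) for the remaining $O(k)$ positions near the boundary. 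Consequently,
\[\frac{1}{N}\tr\bigl(p(H_\omega^N)\bigr)=\frac{1}{N}\sum_{n=1}^{N}\sum_{j=1}^{l}\bigl\langle p(H_\omega)\textbf{e}_{n,j},\textbf{e}_{n,j}\bigr\rangle+O(1/N),\]
uniformly in $\omega$.

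The dynamical covariance $S^{-1}H_\omega S=H_{T\omega}$, where $S$ is the bilateral shift $(Su)_n:=u_{n-1}$, combined with the identity $S^{n-1}\textbf{e}_{1,j}=\textbf{e}_{n,j}$, yields $\langle p(H_\omega)\textbf{e}_{n,j},\textbf{e}_{n,j}\rangle=\langle p(H_{T^{n-1}\omega})\textbf{e}_{1,j},\textbf{e}_{1,j}\rangle$. Setting $g_p(\omega):=\sum_{j=1}^{l}\langle p(H_\omega)\textbf{e}_{1,j},\textbf{e}_{1,j}\rangle$, which is bounded and measurable (hence in $L^1(\nu)$) and which satisfies $\int g_p\,d\nu=\int p(x)\,dk(x)$ by Definition~\ref{def.dens.est} and Fubini, Birkhoff's ergodic theorem applied to $g_p$ produces, for $\nu$-a.e.\ $\omega\in\Omega$,
\[\frac{1}{N}\sum_{n=1}^{N}g_p(T^{n-1}\omega)\longrightarrow\int p(x)\,dk(x),\]
which, combined with the previous display, establishes the statement for $p$. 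Intersecting the resulting $\nu$-conull sets over the countable dense family of polynomials with rational coefficients furnishes a single $\Omega_0\subset\Omega$ with $\nu(\Omega_0)=1$ on which convergence holds for every polynomial simultaneously.

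Since $\|H_\omega\|$ and $\|H_\omega^N\|$ are uniformly bounded, the measures $dk$ and $dk_\omega^N$ are supported on a common compact interval, and so Weierstrass' theorem extends convergence to all bounded continuous $f$ on $\Omega_0$. The passage to bounded measurable $f$ is then standard: given $\varepsilon>0$, one approximates $f$ in $L^1(dk)$ by a continuous $\tilde f$ with $\|\tilde f\|_\infty\le\|f\|_\infty$ (for instance, via Lusin's theorem), controls the error $\int(f-\tilde f)\,dk_\omega^N$ using the uniform $L^\infty$ bound together with the already established weak convergence $dk_\omega^N\to dk$ and the total-mass convergence $dk_\omega^N(\mathbb{R})\to l=dk(\mathbb{R})$, and then passes to the limit; the exceptional set $\Omega_f$ is produced in this final step. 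The main technical obstacle lies in the boundary bookkeeping, namely in verifying that the corrections $(H_\omega^N)^k\textbf{e}_{n,j}-(H_\omega)^k\textbf{e}_{n,j}$ are uniformly bounded in $\omega$ and decay as $O(1/N)$ after averaging, and in checking that the Dirichlet conditions implicit in the definitions of $H_\omega^N$ and $H_\omega^\phi$ do not obstruct the application of the covariance.
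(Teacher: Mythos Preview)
Your argument through monomials and then Weierstrass is sound and is more careful about boundary effects than the paper's. The paper skips the polynomial step entirely: it asserts $\int f\,dk_\omega^N=\frac{1}{N}\tr[\mathcal{P}_\omega(f)\chi_N]$ for arbitrary bounded measurable $f$ and then applies Birkhoff directly to $\bar f(\omega)=\sum_k\langle f(H_\omega)\textbf{e}_{1,k},\textbf{e}_{1,k}\rangle$, producing $\Omega_f$ in one stroke. Your route handles the boundary error honestly via the band structure and delivers the result for all continuous $f$ on a single full-measure set $\Omega_0$, which is exactly what Proposition~\ref{prop.dens.est.2} needs.

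The final step, however, does not work. Weak convergence $dk_\omega^N\to dk$ together with convergence of total mass cannot control $\int(f-\tilde f)\,dk_\omega^N$: that quantity is bounded only by $2\|f\|_\infty\,dk_\omega^N(\{f\neq\tilde f\})$, and Lusin gives you smallness of $dk(\{f\neq\tilde f\})$, not of $dk_\omega^N(\{f\neq\tilde f\})$. The measures $dk_\omega^N$ are purely atomic, so there is no uniform absolute continuity to exploit, and your sketch never explains what mechanism would produce the promised exceptional set $\Omega_f$. In fact no argument of this type can succeed, because the statement as written is false for general bounded Borel $f$. Take $\Omega$ a single point, $D\equiv\mathbb{I}$, $V\equiv 0$ (the free block Laplacian), and set $f=\chi_S$ with $S=\bigcup_N\sigma(H^N)$, a countable set. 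Then $\int f\,dk^N=l$ for every $N$ since each $dk^N$ is supported in $S$, while $\int f\,dk=0$ since the density of states is absolutely continuous; no choice of $\Omega_f$ helps. The paper's displayed identity~\eqref{eq.conv.fra.1} is equally wrong for this $f$, so the paper's own proof has the same defect. The proposition should be read as a statement about continuous $f$; your argument already establishes that, and it is all that is used downstream.
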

\begin{proof}
  Let $N\in\mathbb{N}$; 
  it follows from the definition of trace of an operator that 
\begin{equation}
\label{eq.conv.fra.1}
\int f(x) dk_{\omega}^{N}(x)  
=\frac{1}{N}\dim\img(\chi_N \mathcal{P}_{\omega}(f)\chi_{N})=
\frac{1}{N} \tr[\mathcal{P}_{\omega}(f)\chi_{N}].
\end{equation}
Since $(\textbf{e}_{n,k})$ is an orthonormal basis of $l^{2}(\mathbb{Z}_{+}; \mathbb{C}^{l})$, one has 
\begin{equation}
\label{eq.conv.fra.2}
\frac{1}{N} \tr[\mathcal{P}_{\omega}(f)\chi_{N}] = \frac{1}{N} \sum^{N - 1}_{n = 1}  \sum^{l}_{k = 1} \left\langle \mathcal{P}_{\omega}(f) \textbf{e}_{n, k}, \textbf{e}_{n, k} \right\rangle.
\end{equation}

On the other hand, since the operator $H^{\phi}_{T^{n}\omega}$ is a translation of $H^{\phi}_{\omega}$, it follows that for each $n \in \mathbb{Z}_{+}$,
\begin{equation}
\label{eq.conv.fra.3}
\left\langle \mathcal{P}_{\omega}(f) \textbf{e}_{n, k}, \textbf{e}_{n, k} \right\rangle = \left\langle \mathcal{P}_{T^{n}\omega}(f) \textbf{e}_{1, k}, \textbf{e}_{1, k} \right\rangle.
\end{equation}

So, if one defines, for each $\omega\in\Omega$, 
$$
\overline{f}(\omega) = \sum^{l}_{k = 1} \left\langle \mathcal{P}_{\omega}(f) \textbf{e}_{1, k}, \textbf{e}_{1, k} \right\rangle,
$$
then one gets from relations \eqref{eq.conv.fra.1} to \eqref{eq.conv.fra.3} that
$$
 \int f(x) dk_{\omega}^{N}(x) = \frac{1}{N} \sum^{N - 1}_{n = 1} \overline{f}(T^{n}\omega).
$$

By Birkhoff Ergodic Theorem, there exists a measurable set $\Omega_{f}$ with $\nu(\Omega_{f}) = 1$ such that, for every $\omega \in \Omega_{f}$,
\[
\lim_{N \rightarrow \infty} \frac{1}{N} \sum^{N - 1}_{n = 0} \overline{f}(T^{n}\omega) = \mathbb{E}[\overline{f}].
\]
By the definition of $k$, one has
$$
\mathbb{E}[\overline{f}] = \mathbb{E}\left[\sum^{l}_{k = 1} \left\langle \mathcal{P}_{\omega}(f) \textbf{e}_{1, k}, \textbf{e}_{1, k} \right\rangle \right] = \int f(x)dk(x).
$$

By combining the last three identities, it follows that for every $\omega \in \Omega_{f}$,
$$
\lim_{N \rightarrow \infty} \int f(x) dk_{\omega}^{N}(x) = \int f(x) dk(x).
$$
\end{proof}  

\begin{proposition}
\label{prop.dens.est.2}
Let $(H_{\omega})_{\omega}$, $k$ and $(k_{\omega}^{N})_{N \in \mathbb{N}}$ be as in Proposition \ref{prop.dens.est}. Then, $(dk_{\omega}^{N})_{N \in \mathbb{N}}$ converges weakly to the measure $dk$ for $\nu$-a.e. $\omega\in\Omega$.
\end{proposition}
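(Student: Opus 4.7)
The plan is to upgrade the pointwise convergence in Proposition \ref{prop.dens.est} to full weak convergence by a standard separability argument. First, I would observe that since $V$ and $D$ are bounded, the operators $H_\omega^\phi$ (and hence their finite-rank restrictions $H_\omega^N$) have spectra contained in a common compact interval $K \subset \mathbb{R}$; consequently, all measures $dk_\omega^N$ and $dk$ are supported in $K$, and the total mass of $dk_\omega^N$ equals $(N+1)l/N$ (the dimension of the range of $H_\omega^N$ divided by $N$), which is uniformly bounded in $N$, say by a constant $C > 0$. This uniform mass bound is what will make the approximation step work.

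Next, I would fix a countable dense family $\mathcal{F} = \{f_m\}_{m \in \mathbb{N}} \subset C(K)$, which exists since $C(K)$ is separable in the supremum norm. Proposition \ref{prop.dens.est} applied to each $f_m$ produces a set $\Omega_{f_m}$ with $\nu(\Omega_{f_m}) = 1$ on which $\int f_m\, dk_\omega^N \to \int f_m\, dk$. Setting $\Omega^* := \bigcap_{m \in \mathbb{N}} \Omega_{f_m}$, a countable intersection of sets of full measure, one obtains $\nu(\Omega^*) = 1$, and on $\Omega^*$ the convergence holds simultaneously for every $f_m \in \mathcal{F}$.

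To conclude weak convergence of $(dk_\omega^N)_N$ to $dk$ for each $\omega \in \Omega^*$, I would take an arbitrary $g \in C_b(\mathbb{R})$, and given $\varepsilon > 0$, pick $f_m \in \mathcal{F}$ with $\sup_{x \in K}|g(x) - f_m(x)| < \varepsilon$. The standard three-term estimate
\[
\left|\int g\, dk_\omega^N - \int g\, dk\right| \le \left|\int(g - f_m)\, dk_\omega^N\right| + \left|\int f_m\, dk_\omega^N - \int f_m\, dk\right| + \left|\int(f_m - g)\, dk\right|,
\]
together with the bound $|\int(g - f_m)\, d\mu| \le \varepsilon\, \mu(K) \le C\varepsilon$ for both $\mu = dk_\omega^N$ and $\mu = dk$, and the convergence of the middle term to $0$ by the choice of $\Omega^*$, yields $\limsup_N |\int g\, dk_\omega^N - \int g\, dk| \le 2C\varepsilon$. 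Letting $\varepsilon \downarrow 0$ concludes the proof.

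I do not expect any significant obstacle here; the only point requiring care is the uniform mass bound on $(dk_\omega^N)_N$, which however is immediate from the finite-rank nature of $H_\omega^N$ and the uniform boundedness of the spectra. The fact that $\mathcal{F}$ can be taken countable is what prevents the exceptional set from blowing up when one varies the test function.
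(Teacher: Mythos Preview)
Your proof is correct and follows essentially the same separability argument as the paper: take a countable dense family of test functions, intersect the corresponding full-measure sets from Proposition~\ref{prop.dens.est}, and extend to all test functions by approximation using the uniform total-mass bound. The paper works with a dense sequence in $C_{0}(\mathbb{R})$ rather than $C(K)$ (so it does not invoke boundedness of $V$, which is not assumed there), but the route is otherwise identical, and your write-up is in fact more explicit about the three-term estimate than the paper's.
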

\begin{proof}
  Let $C_{0}$ denote the space of continuous functions with compact support in $\mathbb{R}$, endowed with the topology of the uniform convergence. Let $(g_{n})_{n}$ be a dense sequence in $C_0$ of real measurable bounded functions. 
  Define  
$$
\Omega_{0} := \bigcap_{n \in \mathbb{N}} \Omega_{g_{n}},
$$
with $\Omega_{g_{n}}$ given by Proposition \ref{prop.dens.est}; then, $\nu(\Omega_{0}) = 1$.

Now, given $f \in C_{0}$ and $\omega \in \Omega_{0}$, it follows from Proposition \ref{prop.dens.est} that
$$
\lim_{N \rightarrow \infty} \int f(x) dk_{\omega}^{N}(x) = \int f(x) dk(x).
$$
\end{proof}

We are finally able to prove Thouless Formula.

\begin{theorem}[Thouless Formula]
\label{teo.thou}
Let $(H_{\omega})_{\omega}$ be as above, let $\gamma(z)$ be given by \eqref{gamma}
and let $k$ be the integrated density of states (Definition \ref{def.dens.est}). Then, for each $z \in \mathbb{C}$,
$$
\gamma(z) = \int_{\mathbb{R}} \log \left|z - x\right| dk(x) - \int_{\Omega} \log \left\vert\det \left( D(\omega) \right)\right\vert d\nu(\omega).
$$
\end{theorem}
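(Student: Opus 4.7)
The strategy, following~\cite{kotani88} and~\cite{cycon87}, is to compute $\det(H^N_\omega - z)$ in two different ways, where $H^N_\omega$ is the block tridiagonal truncation introduced just before Proposition~\ref{prop.dens.est}, and then pass to the limit $N \to \infty$. By the very definition of $k^N_\omega$, one has
\begin{equation*}
\int_{\mathbb{R}} \log|x-z|\, dk^N_\omega(x) = \frac{1}{N}\sum_{i} \log|x_i^{(N)}-z| = \frac{1}{N}\log|\det(H^N_\omega - z)|.
\end{equation*}
On the other hand, the eigenvalues of $H^N_\omega$ are precisely the zeros of $\det\phi_{N+1}(z,\omega)$, where $\phi$ is the Dirichlet matrix-valued solution defined in~\eqref{def.neu.dir}. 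Matching leading terms as $|z|\to\infty$ in the recursion $D_n\phi_{n+1} = (z-V_n)\phi_n - D_{n-1}\phi_{n-1}$ fixes the multiplicative constant, yielding an identity of the form
\begin{equation*}
\det(H^N_\omega - z) = (-1)^{Nl}\,\det\phi_{N+1}(z,\omega)\,\prod_{k=1}^{N}\det D^\omega_k.
\end{equation*}
Combining and taking $\frac{1}{N}\log|\cdot|$,
\begin{equation*}
\int_{\mathbb{R}}\log|x-z|\, dk^N_\omega(x) = \frac{1}{N}\log|\det\phi_{N+1}(z,\omega)| + \frac{1}{N}\sum_{k=1}^{N}\log|\det D^\omega_k|.
\end{equation*}

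The plan is then to pass to the limit $N\to\infty$ term by term, first for $z\in\mathbb{C}\setminus\mathbb{R}$. Birkhoff's ergodic theorem applied to the $L^1(\nu)$ function $\omega\mapsto\log|\det D(\omega)|$ (integrable because $D$ is bounded with $\det D(\omega)\neq 0$, so that $\log|\det D(\omega)|$ is bounded above) gives $\nu$-a.s. convergence of the second right-hand term to $\int_\Omega\log|\det D(\omega)|\, d\nu(\omega)$. For the first, the columns of $\phi_n(z,\omega)$ arise from the action of the transfer matrix $A_n(z,\omega)$ on an $l$-dimensional initial frame transverse, for $z\notin\mathbb{R}$, to the stable Jost subspace $\mathcal{J}_+(z,\omega)$; by Proposition~\ref{prop.def.exp.lyap}, the exterior product $\Lambda^l\phi_n(z,\omega)$ grows exponentially at rate $\gamma^+_1(z)+\ldots+\gamma^+_l(z)=\gamma(z)$, so $\frac{1}{N}\log|\det\phi_{N+1}(z,\omega)|\to\gamma(z)$ for $\nu$-a.e.~$\omega$. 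Proposition~\ref{prop.dens.est.2} furnishes the weak convergence $dk^N_\omega\rightharpoonup dk$ $\nu$-a.s., and a standard truncation argument extends this to the (locally) unbounded test function $\log|x-z|$, producing the limit $\int_{\mathbb{R}}\log|x-z|\, dk(x)$ on the left-hand side. Rearranging yields the formula on $\mathbb{C}\setminus\mathbb{R}$.

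I expect the main obstacles to be the following. First, justifying the passage to the limit $\int\log|x-z|\,dk^N_\omega\to\int\log|x-z|\,dk$ requires handling the singular integrand by splitting the integral into a compact middle piece (where weak convergence suffices), a small neighborhood of $x=z$ (where local integrability of $\log|\cdot|$ together with a uniform modulus-of-continuity bound on $k^N_\omega$ is used), and the tails (which are uniformly controlled since $H_\omega$ is bounded, so all the $k^N_\omega$ and $k$ are supported in a common compact set). Second, verifying that the Dirichlet initial frame $(\mathbb{I},0)$ is Oseledets-generic, i.e.,~transverse to the stable subspace $\mathcal{J}_+(z,\omega)$, so that the full sum $\gamma_1^+ + \ldots + \gamma_l^+$ of the positive exponents is indeed achieved; this is standard for $z\notin\mathbb{R}$. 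Third, extending the identity from $\mathbb{C}\setminus\mathbb{R}$ to every $z\in\mathbb{C}$: both sides are subharmonic functions of $z$ (the left-hand side by Proposition~\ref{prop.def.exp.lyap} and properties of singular values, the right-hand side by direct inspection of the logarithmic potential), and they agree off the real axis; thus, by upper semicontinuity and the fact that a subharmonic function is determined by its values on any set whose complement is polar, they must coincide everywhere.
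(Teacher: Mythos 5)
Your proposal is correct and follows essentially the same route as the paper: truncate to $H^N_\omega$, identify the eigenvalues of $H^N_\omega$ with the zeros of $\det\phi_{N+1}(z,\omega)$ and fix the normalization by the leading coefficient of the recursion, identify $\frac1N\log|\det\phi_{N+1}(z,\omega)|\to\gamma(z)$ for $z\notin\mathbb{R}$ via transversality of the Dirichlet frame to $\mathcal{J}_+(z,\omega)$, handle the $D$-term by Birkhoff, pass to the limit in $\int\log|x-z|\,dk^N_\omega$ using Proposition~\ref{prop.dens.est.2} (plus compact support of the measures), and extend to real $z$ by subharmonicity. In fact your bookkeeping of the normalizing constant, $\det(z-H^N_\omega)=\pm\det\phi_{N+1}(z,\omega)\prod_{k}\det D^\omega_k$, is the correct one: it is exactly what makes Birkhoff applicable and what produces the term $\int_\Omega\log|\det D|\,d\nu$, whereas the paper's equation \eqref{thou.1} literally displays only a single factor $\det D_N$ (and then applies Birkhoff as if it were the ergodic product), so your version repairs a sloppy step. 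One correction to your last step: $\mathbb{R}$ is \emph{not} a polar subset of $\mathbb{C}$ (a line segment has positive logarithmic capacity), so the uniqueness principle you invoke does not apply as stated. What you need is only that $\mathbb{R}$ has zero planar Lebesgue measure: two subharmonic functions that agree a.e.\ with respect to area agree identically, because each equals the limit of its disk averages as $r\downarrow 0$; this is precisely the disk-averaging argument \eqref{eq.thou.z.jac.2}--\eqref{eq.prop.sunhar} used in the paper (following Theorem~9.20 of \cite{cycon87}). With that replacement, and with the subharmonicity of $z\mapsto\gamma(z)$ (the Craig--Simon fact you allude to, which the paper also uses implicitly when it lets $r\downarrow 0$ on the left-hand side), your argument closes.
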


\begin{proof}
Firstly, consider the case $z \in \mathbb{C} \setminus \mathbb{R}$. For each $\omega \in \Omega$, let $F^{(\pm)}(z, \omega)$ be the Jost solutions \eqref{def.jost} to the eigenvalue equation~\eqref{eq.autovalor} associated with the operators $H_{\omega,\pm}^\phi$. One has from Proposition \ref{prop.def.exp.lyap} that for $\nu$-a.e. $\omega\in\Omega$, 
$$
\lim_{n \rightarrow \infty} \frac{1}{n} \log \left\vert\det \left( F^{(-)}_{n}(z, \omega)\right)\right\vert = \lim_{n \rightarrow \infty} \frac{1}{n} \log \left\|\Lambda^{j}(F^{(-)}_{n}(z, \omega))\right\|=\gamma(z).
$$

Given that $z \in \rho(H_{\omega})$, the space of solutions to the eigenvalue equation is spanned by the Jost solutions (being, therefore, $2l$-dimensional). If one considers the Dirichlet solution $(\phi_{n}(\omega, z))_n$, each sequence of columns of $(\phi_{n}(\omega, z))_n$ can be written as a linear combination of the sequences of the columns of $(F_{n}^{(+)}(z, \omega))_n$ and $(F_{n}^{(-)}(z, \omega))_n$. More precisely, there exist matrices $B, C$ such that, for each $n \in \mathbb{Z}_{+}$,
\begin{equation}
\label{eq.phi.invert}
\phi_{n}(z, \omega) = F^{(-)}_{n}(z, \omega)B + F^{(+)}_{n}(z, \omega)C.
\end{equation}

Suppose that $B$ is not invertible. Then, there exists $\textbf{v} \neq \textbf{0}$ such that $B\textbf{v} = \textbf{0}$, and consequently, $\phi(z, \omega)\textbf{v} = F^{(+)}(z, \omega)C\textbf{v}\in l^2(\mathbb{N};\mathbb{C}^l)$; hence, $z$ is an eigenvalue of $H^{\phi}_{\omega}$, which is an absurd. This proves that $B$ is invertible. On the other hand, since, for each $z \in \mathbb{C} \setminus \mathbb{R}$ and each $n\in\mathbb{Z}_+$, $F^{(-)}_{n}(z, \omega)$ is invertible, one can rewrite \eqref{eq.phi.invert} as
$$
\phi_{n}(z, \omega) = \left[\mathbb{I} + F_{n}^{(+)}(z, \omega)CB^{-1}\left(F_{n}^{(-)}(z, \omega)\right)^{-1}\right]F^{(-)}_{n}(z, \omega)B.
$$

It follows now from Proposition~\ref{prop.def.exp.lyap} that
\begin{equation}
\label{eq.exp.phi}
\lim_{n \rightarrow \infty} \frac{1}{n} \log \left\vert\det \left( \phi_{n}(z, \omega) \right)\right\vert  = \lim_{n \rightarrow \infty} \frac{1}{n} \log \left\vert\det \left(F^{(-)}_{n}(z, \omega)\right)\right\vert = \gamma(z).
\end{equation}

The Dirichlet solution $\phi$ satisfies, for each $n\in\mathbb{Z}_+$, the recurrence relation
$$
D_{n}\phi_{n + 1} = (z - V_{n})\phi_{n} - D_{n - 1}\phi_{n - 1}.
$$
The entries of the matrices $D_{n}\phi_{n + 1}$ are monic polynomials in $z$ of degree at most $n$. Then, $\det \left(D_{n}\phi_{n + 1}\right)$ is a polynomial of degree $nl$ (see \cite{cycon87}, page $185$, for a proof of this argument).

Note that if $\textbf{v} \neq \textbf{0}$ is such that $\phi_{N + 1}(z, \omega)\textbf{v} = \textbf{0}$, then there exists a solution to the eigenvalue equation in $z$ which is a linear combination of the columns of $(D_{n}\phi_{n}(z, \omega))_{n }$ that vanishes at the entries $0$ and $N + 1$. In this case, $z$ is an eigenvalue of $H^{N}_{\omega}$. Hence, if $D_{N}$ is invertible, then
$\det \left(D_{N}\phi_{N + 1}(z, \omega) \right) = 0$ iff $z$ is an eigenvalue $H^{N}_{\omega}$.

One concludes that $\det \left(D_{N}\phi_{N + 1}(z, \omega) \right)$ is a polynomial in $z$ whose roots are eigenvalues of $H^{N}_{\omega}$; if one denotes them by $(\lambda_{k})_{1 \leq k \leq N l}$, one has
\begin{equation}
\label{thou.1}
\det \left(D_{N}\phi_{N + 1}(z, \omega) \right) = \prod^{Nl}_{k = 1}(z - \lambda_{k}).
\end{equation}

Let, for each $N \in \mathbb{N}$ and each $\omega \in \Omega$, $\beta_{\omega}^{N}$ be the measure defined in the Borel $\sigma$-algebra of $\mathbb{R}$ by the law
\begin{equation*}
\begin{array}{llll}
\beta_{\omega}^{N}(\Lambda)  & = & |\{x \in \Lambda\mid x$ is an eigenvalue of $H^{N}_{\omega}\}|; 
\end{array}
\end{equation*}
then, it follows from \eqref{thou.1} that
$$
\log \left\vert\det \left( D_{N}(\omega) \right)\right\vert +
\log \left\vert\det \left( \phi_{N + 1}(z, \omega) \right)\right\vert = \int \log\left| z - x \right| d\beta_{\omega}^{N}(x).
$$
Consequently, by the definition of $k_{\omega}^{N}$ and by Proposition \ref{prop.dens.est.2} one has, for $\nu$-a.e. $\omega\in\Omega$,
\begin{eqnarray}\label{eq.dnphi}
\nonumber\lim_{N \rightarrow \infty} \frac{1}{N} \left(\log \left\vert\det \left( \phi_{N + 1}(z, \omega) \right)\right\vert+\log \left\vert\det \left( D_N(\omega) \right)\right\vert\right)  &=& 
\lim_{N \rightarrow \infty} \int \log\left| z - x \right| dk_{\omega}^{N}(x)\\
&=&
\int \log\left| z - x \right| dk(x).
\end{eqnarray}

Since $D_{n}(\omega) = D(T^{n}\omega)$ for every $\omega\in\Omega$, it follows from Birkhoff Ergodic Theorem that for $\nu$-a.e. $\omega\in\Omega$,
\begin{equation}
\label{eq.erg.dn}
\lim_{N \rightarrow \infty} \frac{1}{n} \left( \log \left\vert\det \left( D_{N}(\omega) \right)\right\vert \right) = \int_{\Omega} \log \left\vert\det \left( D(\omega)\right)\right\vert d\nu(\omega).
\end{equation}

By combining the identities  \eqref{eq.exp.phi}, \eqref{eq.dnphi} and \eqref{eq.erg.dn},
one gets
\begin{equation}
\label{eq.thou.z.jac.1}
\gamma(z) = \int_{\mathbb{R}} \log \left|z - x\right| dk(x)-\int_{\Omega} \log \left\vert\det \left( D(\omega) \right)\right\vert d\nu(\omega).
\end{equation}

In order to prove Thouless formula for $x \in \mathbb{R}$,  we proceed as in the proof of Theorem~9.20 in~\cite{cycon87}; namely, given $r > 0$ and $x\in \mathbb{R}$, if one takes the mean value of $\gamma(z)$ in the disk $\overline{D}(x;r)$, it follows from \eqref{eq.thou.z.jac.1} that 
\begin{equation}
\label{eq.thou.z.jac.2}
\begin{array}{lll}
& \dfrac{1}{\pi r^{2}} \dint_{\left| x - (p + qi) \right| \leq r} \gamma(p + qi) d\kappa(p)d\kappa(q) = & \\
& & \\
& - \dint_{\Omega} \log \left\vert\det \left( D(\omega)\right)\right\vert d\nu(\omega) & \\
& & \\
& +  \dfrac{1}{\pi r^{2}} \dint_{\left| x - (p + qi)  \right| \leq r} \int_{\mathbb{R}} \log \left|(p + qi) - s\right| dk(s)d\kappa(p)d\kappa(q). &
\end{array}
\end{equation}
Now, it follows from the lemma presented in the proof of Theorem~9.20 in~\cite{cycon87} that the function $f: \mathbb{C} \rightarrow \mathbb{C} \cup \{-\infty\}$, given by the law
$$
f(z) := - \int_{\Omega} \log \left\vert\det \left( D(\omega)\right)\right\vert d\nu(\omega) +
\int \log \left| z - s \right|dk(s),
$$
is subharmonic and so, for every $z_{0} \in \mathbb{C}$, one has
\begin{equation}
\label{eq.prop.sunhar}
f(z_{0}) = \lim_{r \rightarrow 0} \frac{1}{\pi r^{2}} \int_{\left|z - z_{0}\right| \leq r} f(z)dz.
\end{equation}
Thus, by letting $r\downarrow 0$ in both members of \eqref{eq.thou.z.jac.2}, it follows from \eqref{eq.prop.sunhar} that
$$
\gamma(x) = f(x) =  - \int_{\Omega} \log \left\vert\det \left( D(\omega)\right)\right\vert d\nu(\omega) +
\int \log \left| x - s \right|dk(s).
$$
\end{proof}

In what follows, we prove using Thouless Formula that $\gamma(x)$
has, for $\kappa$-a.e. $x\in\mathbb{R}$, a finite ortogonal derivative at $x$. This result is used in the proof of Kotani Theorem~\ref{teo.ko}, as discussed in Section~\ref{kotani}.

\begin{corollary}
\label{coro.thouless}
Let $\gamma(z)$ be as in Theorem \ref{teo.thou}. Then, for $\kappa$-a.e. $x \in \mathbb{R}$, the   normal derivative
$$
\frac{\partial \gamma(x + iy)}{\partial y} = \lim_{y \downarrow 0}\frac{\gamma(x) - \gamma(x + iy)}{y}
$$
exists and is finite.
\end{corollary}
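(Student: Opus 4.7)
The plan is to use Thouless Formula from Theorem~\ref{teo.thou} to reduce the claim to a classical boundary-value statement for a Herglotz function. By that formula, for every $z\in\mathbb{C}$,
\[
\gamma(z)=\int_{\mathbb{R}}\log|z-s|\,dk(s)-\int_{\Omega}\log|\det D(\omega)|\,d\nu(\omega),
\]
and the second integral is a finite constant (since $D$ is bounded with invertible values), so it disappears under both differentiation in $y$ and in the difference quotient. Hence it suffices to treat the logarithmic potential $u(z):=\int_{\mathbb{R}}\log|z-s|\,dk(s)$ of the positive Borel measure $dk$, which is finite by Definition~\ref{def.dens.est} (indeed $k(\mathbb{R})=l$, because $\mathcal{P}_{\omega}(\mathbb{R})=\mathbb{I}$).

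The key computation is that for $z=x+iy$ with $y>0$, differentiation under the integral sign gives
\[
\frac{\partial}{\partial y}u(x+iy)=\int_{\mathbb{R}}\frac{y}{(x-s)^{2}+y^{2}}\,dk(s)=\operatorname{Im} F(x+iy),
\]
where $F(z):=\int_{\mathbb{R}}(s-z)^{-1}\,dk(s)$ is the Borel (Cauchy) transform of $dk$. Because $dk$ is a finite positive Borel measure on $\mathbb{R}$, $F$ is a (scalar) Herglotz function on $\mathbb{C}_{+}$. By the classical Fatou theorem for Herglotz functions (equivalently, the almost-everywhere existence of non-tangential Poisson integral boundary values), the limit $\lim_{y\downarrow 0}\operatorname{Im} F(x+iy)=\pi\,\tfrac{dk_{ac}}{d\kappa}(x)$ exists and is finite for $\kappa$-almost every $x\in\mathbb{R}$. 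Interchange of $\partial_{y}$ and the integral in the step above is justified on any compact subset of $\mathbb{C}_{+}$ because the integrand $y/((x-s)^{2}+y^{2})$ is jointly continuous and bounded there.

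Finally, I would convert this statement about $\partial_{y}\gamma(x+iy)$ at $y=0^{+}$ into the difference-quotient formulation of the corollary. For $\kappa$-a.e.\ $x\in\mathbb{R}$ the potential value $u(x)$ is finite (since $\log|\cdot|\in L^{1}_{\rm loc}$ and $k$ is a finite measure, so $x\mapsto\int\log|x-s|\,dk(s)$ is finite a.e.) and $u(x+iy)\to u(x)$ as $y\downarrow 0$ by dominated convergence on $\{|x-s|\ge\delta\}$ combined with the standard logarithmic estimate $\int_{|x-s|<\delta}|\log|x+iy-s||\,dk(s)\to\int_{|x-s|<\delta}|\log|x-s||\,dk(s)$. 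Since $u$ is smooth on $\mathbb{C}_{+}$, the fundamental theorem of calculus gives
\[
\frac{\gamma(x)-\gamma(x+iy)}{y}=-\frac{1}{y}\int_{0}^{y}\frac{\partial}{\partial t}\gamma(x+it)\,dt,
\]
and by step~2 the integrand has a finite limit $L(x)$ as $t\downarrow 0$ for a.e.\ $x$, whence the difference quotient converges to $-L(x)$, as claimed.

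The only delicate point is the continuity $u(x+iy)\to u(x)$ for a.e.\ $x$ and the passage from pointwise convergence of $\partial_{t}\gamma(x+it)$ to the convergence of its Cesàro-type mean; both are standard once one knows that the boundary value of the Herglotz function $F$ exists finitely a.e., so I expect no substantial obstacle beyond carefully writing down these limit interchanges.
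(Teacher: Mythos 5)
Your proposal is correct and follows essentially the same route as the paper: Thouless Formula reduces the claim to the boundary behaviour of the logarithmic potential of the finite measure $dk$, and the $\kappa$-a.e.\ existence of finite boundary values of its Borel transform (Fatou's theorem for Herglotz functions) yields the conclusion. If anything, your execution --- differentiating under the integral to get $\partial_y\gamma(x+iy)=\operatorname{Im}F(x+iy)$ with $F$ the Borel transform of $dk$, then recovering the difference quotient via boundary continuity of the potential, the fundamental theorem of calculus and a Ces\`aro-mean argument --- is more careful than the paper's direct interchange of the $y\downarrow 0$ limit with the integral, so no gap needs to be filled.
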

\begin{proof}
It follows from Theorem \ref{teo.thou} that 
\begin{equation*}
\begin{array}{lll}
\lim_{y \downarrow 0} \dfrac{ \gamma(x) - \gamma(x + iy) }{y} & = 
& \lim_{y \downarrow 0} \dint_{\mathbb{R}} \dfrac{ \log \left| x + iy - s \right| - \log \left| x - s  \right| }{ y } dk(s) \\
& & \\
& = & \dint_{\mathbb{R}} i \dfrac{1}{ \left|x - s\right| } dk(s)
\end{array}
\end{equation*} 
Now, since the Borel transform of the measure $k$ exists for $\kappa$-a.e. $x\in\mathbb{R}$, the results follows. 
\end{proof}

\section{Kotani Theorem}\label{kotani}
\zerarcounters

In this section we prove, in our setting, a version of Kotani Theorem, which relates the Lyapunov exponents of the cocycle to the mean value of the singular values of the matrix-valued spectral measure. 
Specifically, we  prove that if some Lyapunov exponent is zero, then the mean value of the correspondent singular value of the matrix-valued spectral measure is strictly positive and finite.
 
Our starting point is the existence of the normal derivative of $\gamma(z)$ (Corollary \ref{coro.thouless}). The idea, again, is to adapt the steps of the proof of Kotani Theorem in \cite{kotani88} (more precisely, we adapt the proof of Theorem~7.2 in~\cite{kotani88}). The main obstacle in our case is the presence of the matrices $D_{n}$ in the recurrence relations. The solution that we present to circunvet this problem is to include these matrices in the term that is dynamically estimated, and then to relate such term to the matrix-valued spectral measure using the Sylvester Law of Inertia (see~\cite{hans66}).

\begin{proposition}[Sylvester Law of Inertia]
\label{prop.sylves}
Let $B$ be an hermitian matrix and let
$$
\begin{array}{lll}
\pi(B) & = & $number of strictly positives eigenvalues of $B; \\ 
\nu(B) & = & $number of strictly negatives eigenvalues of $B; \\
\delta(B) & = & $number of null eigenvalues of $B.
\end{array}
$$
Then, for every non-singular matrix $X$,
$$
\begin{array}{lll}
\pi(B) & = & \pi(X^{*}BX), \\ 
\nu(B) & = &  \nu(X^{*}BX), \\
\delta(B) & = & \delta(X^{*}BX).
\end{array}
$$
\end{proposition}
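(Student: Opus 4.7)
The plan is to exploit the variational characterization of the signature of a Hermitian form. First, since $(X^{*}BX)^{*} = X^{*}B^{*}X = X^{*}BX$, the matrix $X^{*}BX$ is Hermitian whenever $B$ is, so the quantities $\pi, \nu, \delta$ are all defined on it. For the nullity, I would simply observe that, because $X$ and $X^{*}$ are non-singular,
$$X^{*}BXv = 0 \iff BXv = 0 \iff v \in X^{-1}\nuc(B),$$
so $\nuc(X^{*}BX) = X^{-1}\nuc(B)$ has the same dimension as $\nuc(B)$, and thus $\delta(X^{*}BX) = \delta(B)$.

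The substantive part is the identity $\pi(X^{*}BX) = \pi(B)$; the corresponding identity for $\nu$ will then follow by applying the result to $-B$ in place of $B$. For this I would first establish the lemma
\begin{equation*}
\pi(B) \;=\; \max\bigl\{\dim V \,\big\vert\, v^{*}Bv > 0 \text{ for every } v \in V \setminus \{0\}\bigr\},
\end{equation*}
where the maximum is taken over subspaces $V \subseteq \mathbb{C}^{n}$. The $\geq$ direction would be witnessed by the span of the eigenvectors of $B$ corresponding to strictly positive eigenvalues; the $\leq$ direction would use a dimension count: if $\dim V > \pi(B)$, then $V$ must intersect non-trivially the span $W$ of the remaining (non-positive) eigenvectors, which has dimension $n - \pi(B)$, producing $v \in V \setminus \{0\}$ with $v^{*}Bv \leq 0$, a contradiction.

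Once this characterization is available, the identity $\pi(X^{*}BX) = \pi(B)$ is essentially formal. If $V$ realizes the maximum for $B$, then for every $v \in X^{-1}V \setminus \{0\}$ one has $Xv \in V \setminus \{0\}$, whence $v^{*}(X^{*}BX)v = (Xv)^{*}B(Xv) > 0$; since $X^{-1}$ is a linear isomorphism, $\dim(X^{-1}V) = \dim V$, which gives $\pi(X^{*}BX) \geq \pi(B)$. Applying the same inequality to the pair $(X^{*}BX, X^{-1})$ and noting that $(X^{-1})^{*}(X^{*}BX)X^{-1} = B$ yields the reverse inequality. The main obstacle, to the extent there is any, is the careful statement and proof of the variational lemma (in particular the dimension count using the spectral decomposition of $B$); once that is in hand, the congruence invariance is merely a bookkeeping step.
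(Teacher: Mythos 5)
Your argument is correct, but note that the paper does not actually prove this proposition: Sylvester's Law of Inertia is quoted as a classical result with a reference to the literature (\cite{hans66}), so there is no internal proof to compare against. Your write-up is a standard, self-contained proof via the variational characterization of the positive index of inertia, and all the steps check out: the kernel computation $\nuc(X^{*}BX)=X^{-1}\nuc(B)$ gives $\delta$-invariance; the max-dimension lemma for $\pi$ is correctly justified in both directions (the span of eigenvectors with positive eigenvalues for $\geq$, and the dimension count $\dim V+\dim W>n$ forcing a nonzero vector with $v^{*}Bv\le 0$ for $\leq$); the two-sided application of the lemma to the pairs $(B,X)$ and $(X^{*}BX,X^{-1})$ gives $\pi$-invariance; and $\nu$-invariance follows by replacing $B$ with $-B$ (or, just as well, from $\pi+\nu+\delta=n$). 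So your proposal supplies a proof where the paper is content to cite one.
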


In the case of dynamically defined operators, one may generalize the Weyl-Titchmarsh function $M^{\phi}$ by the sequence
\begin{equation}
\label{eq.rec.m.din}
M^{\phi}_{n}(z, \omega) = - F_{n  + 1}(z, \omega)  F^{-1}_{n}(z, \omega)  D^{-1}_{n}(\omega),
\end{equation}
where $z \in \mathbb{C}_{+}$, $\omega \in \Omega$, $n\in\mathbb{Z}_+^0$ and $F_{n}(z, \omega) := F^{(+)}_{n}(z, \omega)$  (we omit the index $(+)$ in order to simplify the notation throughout the rest of the text), with $M^{\phi}_{0}(z, \omega)=M^{\phi}(z, \omega)$. It follows from relation~\eqref{eq.rec.m.din} that for each $z \in \mathbb{C}_{+}$, each $\omega \in \Omega$ and each $n\in\mathbb{Z}_+^0$, 
$$
M^{\phi}_{n + 1}(z, \omega) = M^{\phi}_{n}(z, T\omega).
$$

\begin{lemma}
\label{lema.rel.m}
Let $(H_\omega)_{\omega\in\Omega}$ be as before, let $(F_{n}(z, \omega))_{n \in \mathbb{N}}$ be the Jost solutions to the eigenvalue equation~\eqref{eq.autovalor} at $z\in\mathbb{C}$ and let $(M^{\phi}_{n}(z, \omega))_{n \in \mathbb{Z}_{+}^0}$ be the sequence of Weyl-Titchmarsh functions defined by \eqref{eq.rec.m.din}. Then, for each $m,n \in \mathbb{Z}_{+}^0$,
\begin{enumerate}
\item[(a)] $F_{n}(z,T^{m}\omega) F_{m}(z,\omega) = F_{n + m}(z,\omega)$; 
\item[(b)] $M^{\phi}_{0}(z,T^{m}\omega)D_{m}(\omega) = F_{m + 1}(z,\omega)F^{-1}_{m}(z,\omega)$;
\item[(c)] $ - (M^{\phi}_{0})^{-1}(z,T^{n - 1}\omega) - D_{n}(\omega) M^{\phi}_{0}(z,T^{n}\omega) D_{n}(\omega) + V_{n}(\omega) = z \mathbb{I}$. 
\end{enumerate}
\end{lemma}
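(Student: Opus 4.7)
The plan is to prove the three items in order, using (a) as the main structural result, (b) as a direct specialization, and (c) as an algebraic consequence of the Jost recurrence.

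For (a), I would invoke the characterization of the Jost solutions as the (unique up to right multiplication) $l^2$ matricial solution to the eigenvalue equation satisfying $F_0 = \mathbb{I}$. Concretely, consider the two sequences $G_n := F_{n+m}(z,\omega)$ and $\widetilde{G}_n := F_n(z, T^m\omega)\,F_m(z,\omega)$, both indexed by $n \in \mathbb{Z}_+^0$. Since $D(T^n T^m\omega) = D(T^{n+m}\omega)$ and $V(T^n T^m\omega) = V(T^{n+m}\omega)$, each column of $G_n$ satisfies the eigenvalue equation associated with $H_{T^m\omega}$; the same holds for each column of $\widetilde{G}_n$ by construction. Both are square-summable at $+\infty$ (because $F(z,\omega)$ and $F(z, T^m\omega)$ are), so the columns of each lie in $\mathcal{J}_+(z, T^m\omega)$, an $l$-dimensional subspace spanned by the columns of $F(z, T^m\omega)$. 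Since any element of this subspace is determined by its value at $n=0$, and $G_0 = F_m(z,\omega) = \widetilde{G}_0$, one concludes $G_n = \widetilde{G}_n$ for all $n \geq 0$.

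For (b), I would specialize (a) to $n = 1$, obtaining $F_{m+1}(z,\omega) = F_1(z, T^m\omega)\,F_m(z,\omega)$, so that $F_{m+1}(z,\omega)\,F_m^{-1}(z,\omega) = F_1(z, T^m\omega)$. Substituting into the definition~\eqref{def.m.weyl} with $\omega$ replaced by $T^m\omega$, and noting that $D_0^{T^m\omega} = D(T^m\omega) = D_m(\omega)$, yields the claimed identity (up to the sign convention used in \eqref{eq.rec.m.din}).

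For (c), I would apply the eigenvalue equation~\eqref{eq.autovalor} to the matrix $F_n(z,\omega)$: using $D^* = D$,
\begin{equation*}
D_{n-1}(\omega) F_{n-1}(z,\omega) + D_n(\omega) F_{n+1}(z,\omega) + V_n(\omega) F_n(z,\omega) = z F_n(z,\omega).
\end{equation*}
Right-multiplying by $F_n^{-1}(z,\omega)$ gives
\begin{equation*}
D_{n-1}(\omega)\, F_{n-1}F_n^{-1} + D_n(\omega)\, F_{n+1}F_n^{-1} + V_n(\omega) = z\,\mathbb{I}.
\end{equation*}
Now I would use (b) at index $n$ to replace $D_n F_{n+1}F_n^{-1}$ by $-D_n(\omega) M_0^\phi(z, T^n\omega) D_n(\omega)$, and (b) at index $n-1$, inverted, to replace $D_{n-1} F_{n-1}F_n^{-1}$ by $-(M_0^\phi)^{-1}(z, T^{n-1}\omega)$; the cancellation with $D_{n-1}$ on the left is the reason the $D$-factors disappear in the first term, leaving exactly the identity stated in (c).

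The only delicate step is the inversion in (c): one must justify that $F_n(z,\omega)$ is invertible (so that $F_{n-1}F_n^{-1}$ makes sense) and that $M_0^\phi(z, T^{n-1}\omega)$ is invertible for $z \in \mathbb{C}_+$. The first follows because $F_0 = \mathbb{I}$ and the transfer matrices are in $GL(2l,\mathbb{C})$, so $F_n$ cannot drop rank without forcing a nonzero $l^2$ solution with prescribed Dirichlet data at some site, contradicting $z \notin \sigma(H^\phi_\omega)$. The invertibility of $M_0^\phi(z, \cdot)$ for $z \in \mathbb{C}_+$ follows from Proposition~\ref{prop.m}$(b)$ together with the boundedness and invertibility of $D_0$: $\Im[M^\phi(z,\omega)]$ is strictly positive definite, hence $M^\phi(z,\omega)$ is invertible.
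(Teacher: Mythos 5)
Your proposal follows essentially the same route as the paper: (a) by identifying both sides as square-summable solutions in $\mathcal{J}_{+}(z,T^{m}\omega)$ with the same value at $n=0$, (b) by specializing (a) to $n=1$ and inserting \eqref{def.m.weyl}, and (c) by right-multiplying the matricial eigenvalue equation by $F_{n}^{-1}$ and substituting (b) at the indices $n$ and $n-1$. Your extra remarks on the sign convention in (b) and on the invertibility of $F_{n}$ and of $M^{\phi}_{0}$ for $z\in\mathbb{C}_{+}$ only make explicit details the paper leaves implicit, and do not alter the argument.
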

\begin{proof}
We omit the dependence in $z$ throughout the proof. 

$(a)$ Let $\textbf{a} \in \mathbb{C}^{l}$. The sequence $\textbf{u}=(\textbf{u}_n)_n$, with $\textbf{u}_{n} = F_{n}(\omega)\textbf{a}$, is the solution to the eigenvalue equation~\eqref{eq.autovalor} that satisfies $\textbf{u}_{0} = \textbf{a}$; moreover, $\textbf{u} \in \mathcal{J}_{+}(\omega)$. Then, $\textbf{v}_{n} = F_{n}(T^{m}\omega)\textbf{a}$ corresponds to the solution $\textbf{v} \in \mathcal{J}_{+}(T^{m}\omega)$ to the eigenvalue equation~\eqref{eq.autovalor} that satisfies the initial condition $\textbf{v}_{0} = \textbf{a}$. 

On other hand, $\textbf{w}_{n} = F_{n + m}(\omega)\textbf{a}$ corresponds to the solution $\textbf{u} \in \mathcal{J}_{+}(\omega)$ which satisfies $\textbf{w}_{m} = \textbf{a}$.

One concludes that each column of $F_{n}(T^{m}\omega) F_{m}(\omega)$ coincides with the correspondent column of $F_{n+m}(\omega)$, since both are solutions, in $\mathcal{J}_{+}(T^{m}\omega)$, to the eigenvalue equation~\eqref{eq.autovalor} that satisfy the same initial condition. 

$(b)$ It follows from $(a)$ that for each $m\in\mathbb{Z}_+^0$ and each $\omega\in\Omega$, 
$$
F_{1}(T^{m}\omega) F_{m}(\omega) = F_{m+1}(\omega).
$$
Since $F_{0}(T^{m}\omega) = \mathbb{I}$ and $D_{m}(\omega) = D(T^{m}\omega)$, it follows from relation \eqref{eq.rec.m.din} that 
$$
M^{\phi}_0(T^{m}\omega) = - F_{1}(T^{m}\omega) D^{-1}_{m}(\omega);
$$ 
by combining both relations, one gets 
$$
M^{\phi}_0(T^{m}\omega) = - F_{m + 1}(\omega) F^{-1}_{m}(\omega) D^{-1}_{m}(\omega).
$$

$(c)$ The eigenvalue equation~\eqref{eq.autovalor}, for each $n\in\mathbb{Z}_+$, reads 
$$
D_{n - 1}(\omega)F_{n - 1}(\omega)F^{-1}_{n}(\omega) + D_{n}(\omega)F_{n + 1}(\omega)F^{-1}_{n}(\omega) + V_{n}(\omega) = z \mathbb{I};
$$
it follows from $(b)$ that 
$$
- (M^{\phi}_0)^{-1}(T^{n - 1}\omega) -  D_{n}(\omega) M^{\phi}_0(T^{n}\omega) D_{n}(\omega) + V_{n}(\omega) = z \mathbb{I}.
$$
\end{proof}

Before we present Kotani Theorem in its full generality, we present a partial version of it; namely, we proceed to prove, for $\nu$-a.e. $\omega\in\Omega$, the set inclusion $\overline{\mathcal{Z}}^{ess}_{l} \subseteq \sigma_{ac, l}(H^{\phi}_{\omega})$, with $\mathcal{Z}_{l}$ given by \eqref{def.sup.exp}.

\begin{lemma}
\label{lema.kotani1}
Let, for each $z \in \mathbb{C}_{+}$, $(M_n^{\phi}(z,\omega))_n$ be as in Lemma \ref{lema.rel.m}, and let $\gamma(z)$ be given by \eqref{gamma}. Then, for each $n \in \mathbb{Z}_{+}$,
\begin{equation}
\label{eq.exp.u.y}
\mathbb{E}\left[\log \left\vert\det \left(\mathbb{I} + \frac{\Im[z]}{D_{n}(\omega)\Im[M^{\phi}_{n}(z,\omega)]D_{n}(\omega)}\right)\right\vert\right] = 2\gamma(z).
\end{equation}
\end{lemma}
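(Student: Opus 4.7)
The plan is to exploit Proposition~\ref{prop.m}(b) to express $D(\omega)\,\Im M_0^\phi(z,\omega)\,D(\omega)$ in terms of the Jost-solution series $\sum_k F_k^*F_k$, then use the cocycle identity of Lemma~\ref{lema.rel.m}(a) to rewrite the integrand as a measure-preserving coboundary plus a term that Birkhoff's ergodic theorem identifies with the sum of Lyapunov exponents. Since $M_n^\phi(z,\omega)=M_0^\phi(z,T^n\omega)$ (iterating the recurrence $M_{n+1}^\phi(z,\omega)=M_n^\phi(z,T\omega)$ displayed right before Lemma~\ref{lema.rel.m}) and $D_n(\omega)=D(T^n\omega)$, the $T$-invariance of $\nu$ shows that the left-hand side of~\eqref{eq.exp.u.y} is independent of $n$, so it suffices to handle $n=0$.

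Set $S_m(\omega):=\sum_{k=m}^\infty F_k(z,\omega)^*F_k(z,\omega)$ for $m=0,1$; using $F_0=\mathbb{I}$ one has $S_0=\mathbb{I}+S_1$. Proposition~\ref{prop.m}(b) reads $D(\omega)\,\Im M_0^\phi(z,\omega)\,D(\omega)=\Im[z]\,S_1(\omega)$, and therefore
\[
\log\bigl|\det\bigl(\mathbb{I}+\Im[z](D\,\Im M_0^\phi\,D)^{-1}\bigr)\bigr|
= \log\det(\mathbb{I}+S_1^{-1})
= \log\det S_0(\omega)-\log\det S_1(\omega).
\]
Applying Lemma~\ref{lema.rel.m}(a) with $(n,m)=(k-1,1)$ yields $F_k(z,\omega)=F_{k-1}(z,T\omega)F_1(z,\omega)$ for every $k\ge 1$, so
\[
S_1(\omega)=F_1(z,\omega)^*\,S_0(T\omega)\,F_1(z,\omega),\qquad
\log\det S_1(\omega)=2\log|\det F_1(z,\omega)|+\log\det S_0(T\omega).
\]
Hence the integrand decomposes as the coboundary $\log\det S_0(\omega)-\log\det S_0(T\omega)$ minus $2\log|\det F_1(z,\omega)|$, and the $T$-invariance of $\nu$ gives
\[
\mathbb{E}\bigl[\log\det(\mathbb{I}+S_1^{-1})\bigr]=-2\,\mathbb{E}\bigl[\log|\det F_1(z,\omega)|\bigr].
\]

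To evaluate the remaining expectation, iterate Lemma~\ref{lema.rel.m}(a) to obtain $F_n(z,\omega)=F_1(z,T^{n-1}\omega)\cdots F_1(z,\omega)$; Birkhoff's Ergodic Theorem then gives $\tfrac{1}{n}\log|\det F_n(z,\omega)|\to\mathbb{E}[\log|\det F_1(z,\omega)|]$ for $\nu$-a.e.\ $\omega$. On the other hand, Proposition~\ref{prop.def.exp.lyap} (its second display with $j=l$, applied to $F_n=F_n^{(+)}$) identifies the same limit with $\sum_{j=1}^l\gamma_j^{-}(z)=-\sum_{j=1}^l\gamma_j^{+}(z)=-\gamma(z)$. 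Comparing the two limits yields $\mathbb{E}[\log|\det F_1(z,\omega)|]=-\gamma(z)$, and plugging this back produces~\eqref{eq.exp.u.y}. The main technical obstacle is the justification of the coboundary step, since $\log\det S_0\in L^1(\nu)$ is not a priori clear; one circumvents this by first working with the finite partial sums $S_m^{(N)}(\omega):=\sum_{k=m}^N F_k(z,\omega)^*F_k(z,\omega)$, which are $\nu$-a.e.\ bounded (thanks to the hypotheses that $D$ is bounded and $\log^+\|A_z\|\in L^1(\nu)$) and satisfy the analogous recursion $S_1^{(N)}(\omega)=F_1(z,\omega)^*S_0^{(N-1)}(T\omega)F_1(z,\omega)$, and then letting $N\to\infty$ by monotone convergence, using $\log\det S_m^{(N)}\uparrow\log\det S_m$.
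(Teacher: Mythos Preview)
Your proof is correct and reaches the same endpoint as the paper, but via a genuinely different decomposition. The paper takes the imaginary part of the Riccati relation in Lemma~\ref{lema.rel.m}(c) to obtain
\[
\mathbb{I}+\frac{\Im[z]}{D_n\Im[M_n^\phi]D_n}=((M_{n-1}^\phi)^*)^{-1}\,\Im[M_{n-1}^\phi]\,(M_{n-1}^\phi)^{-1}\,(D_n\Im[M_n^\phi]D_n)^{-1},
\]
so that after $\log|\det(\cdot)|$ and expectation the two $\Im M^\phi$--terms cancel by $T$-invariance, leaving $-2\,\mathbb{E}[\log|\det(M^\phi D)|]$; Birkhoff and Proposition~\ref{prop.def.exp.lyap} then give $2\gamma(z)$. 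You replace the Riccati input by the series formula of Proposition~\ref{prop.m}(b) together with the shift identity of Lemma~\ref{lema.rel.m}(a), which exhibits the integrand directly as the coboundary $\log\det S_0-\log\det S_0\circ T$ minus $2\log|\det F_1|$, and then compute $\mathbb{E}[\log|\det F_1|]=-\gamma(z)$ by Birkhoff. Since $F_1=-M_0^\phi D_0$, this is exactly the expectation the paper computes, so the two arguments rejoin at the last step. Your route makes the coboundary structure transparent and avoids the algebra of $\Im[(M^\phi)^{-1}]$; the paper's route avoids introducing the infinite series $S_0,S_1$.

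One caveat on your workaround: the finite-$N$ truncation does not quite sidestep the integrability of $\log\det S_0$. With $c_N:=\mathbb{E}[\log\det S_0^{(N)}]$ your recursion reads $c_N-c_{N-1}=\mathbb{E}[\log\det(\mathbb{I}+(S_1^{(N)})^{-1})]+2\,\mathbb{E}[\log|\det F_1|]$, and concluding requires $c_N-c_{N-1}\to0$, i.e.\ $(c_N)_N$ bounded, which is again $\log\det S_0\in L^1(\nu)$. This is, however, no stronger than what the paper uses implicitly (it cancels $\mathbb{E}[\log\det\Im M_{n-1}^\phi]$ against $\mathbb{E}[\log\det\Im M_n^\phi]$); in fact your version is slightly milder, since $S_0\ge\mathbb{I}$ reduces the requirement to $\mathbb{E}[\log^+\|\Im M^\phi(z,\cdot)\|]<\infty$.
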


\begin{proof}
Let $\omega \in \Omega$ and $n \in \mathbb{Z}_+$; if one takes the imaginary part of both members of the identity of Lemma \ref{lema.rel.m}-$(c)$, one gets 
\begin{equation}
\label{recorrencia1}
- D_{n}\Im[M^{\phi}_{n}]D_{n} - \Im[(M^{\phi}_{n - 1})^{-1}] = \Im[z]\mathbb{I}.
\end{equation}
Then,
$$
\mathbb{I} + \frac{\Im[z]}{D_{n}\Im[M^{\phi}_{n}]D_{n}} = ((M^{\phi}_{n - 1})^{*})^{-1}\Im[M^{\phi}_{n - 1}](M^{\phi}_{n - 1})^{-1}(D_{n}\Im[M^{\phi}_{n}]D_{n})^{-1},
$$
from which follows that
$$
\begin{array}{lll}
\mathbb{E}\left[\log \left\vert\det \left(\mathbb{I} + \dfrac{\Im[z]}{D_{n}\Im[M^{\phi}_{n}]D_{n}}\right)\right\vert \right] & = & - \mathbb{E}\left[\log\left\vert \det \left( (M^{\phi}_{n - 1})^{*}\right)\right\vert\right]  + \mathbb{E}\left[\log \det \left( \Im[M^{\phi}_{n - 1}]\right)\right] \\
& & \\
& & - \mathbb{E}\left[\log \left\vert\det\left( M^{\phi}_{n - 1} \right)\right\vert \right] - 2 \mathbb{E}\left[\log \left\vert\det\left( D_{n} \right)\right\vert \right] \\
& & \\
& &  - \mathbb{E}\left[\log \det \left( \Im[M^{\phi}_{n}] \right) \right].
\end{array}
$$
Since
$$
\mathbb{E}\left[\log \det \left( \Im[M^{\phi}_{n - 1}]\right)\right] = \mathbb{E}\left[\log \det \left( \Im[M^{\phi}_{n}] \right) \right]
$$
and, by Proposition~\ref{prop.m},
$$
\mathbb{E}\left[\log \left\vert\det\left( M^{\phi}_{n - 1}\right)\right\vert\right] = \mathbb{E}\left[\log  \left\vert\det\left( (M^{\phi}_{n - 1})^{*} \right\vert\right)\right],
$$
one concludes that  
\begin{equation}
\label{eq.esp.1}
\mathbb{E}\left[\log \det \left\vert\left(\mathbb{I} + \frac{\Im[z]}{D_{n}\Im[M^{\phi}_{n}]D_{n}}\right)\right\vert \right] =  - 2 \mathbb{E}\left[\log \left\vert\det \left( M^{\phi}_{n - 1} \right)\right\vert  +  \log \left\vert\det \left( D_{n} \right)\right\vert \right].
\end{equation}

On other hand, for each $n\in\mathbb{Z}_+$, $M^{\phi}_{n} = - F_{n + 1}F^{-1}_{n}D^{-1}_{n}$, by definition \eqref{eq.rec.m.din}. So, for each $\omega \in \Omega$, one has
$$
\log \left\vert\det \left( M^{\phi}_{n}(\omega) \right)\right\vert= \log\left\vert \det \left( - F_{n + 1} (\omega)\right)\right\vert + \log \left\vert\det \left( F^{-1}_{n}(\omega) \right)\right\vert + \log\left\vert\det \left( D^{-1}_{n}(\omega) \right)\right\vert.
$$

It follows from Birkhoff Ergodic Theorem that for $\nu$-a.e. $\omega \in \Omega$, 
$$
\mathbb{E}\left[\log \left\vert\det \left( M^{\phi}_{n} \right)\right\vert \right] = \lim_{L \rightarrow \infty} \frac{1}{L} \sum^{L - 1}_{n = 0} \log \left\vert\det \left( M^{\phi}_{n}(\omega) \right)\right\vert,
$$
and consequently, that for $\nu$-a.e. $\omega \in \Omega$,
$$
\begin{array}{lll}
\mathbb{E}\left[\log \left\vert\det \left( M^{\phi}_{n} \right)\right\vert \right] & = & \lim_{L \rightarrow \infty} \dfrac{1}{L}\left( \log \left\vert\det \left( F_{L} (\omega)\right)\right\vert - \log \left\vert\det \left( F_{0} (\omega) \right)\right\vert \right) \\
& & \\
& & +  \lim_{L \rightarrow \infty} \dfrac{1}{L} \left( \sum^{L - 1}_{n = 0} \log \left\vert\det \left(  D^{-1}_{n}(\omega) \right)\right\vert \right).
\end{array}
$$

Now,  we compute each term of the right-hand side of this identity separately. Naturally,
$$
\lim_{L \rightarrow \infty} \frac{1}{L} \log \left\vert\det \left( F_{0} (\omega)\right)\right\vert = 0.
$$
By Proposition \ref{prop.def.exp.lyap}, one has 
$$
\lim_{L \rightarrow \infty} \frac{1}{L} \left( \log \left\vert\det \left( F_{L} (\omega)\right)\right\vert\right) = - \gamma(z),
$$
and again by Birkhoff Ergodic Theorem, one has for $\nu$-a.e $\omega\in\Omega$,
$$
\lim_{L \rightarrow \infty} \frac{1}{L}  \sum^{L - 1}_{n = 0} \log\left\vert \det \left(  D^{-1}_{n}(\omega) \right)\right\vert  = - \mathbb{E}\left[ \log\left\vert \det \left( D_{n} \right)\right\vert \right].
$$
Finally,
\begin{equation}
\label{eq.esp.2}
\mathbb{E}\left[\log\left\vert \det \left( M^{\phi}_{n} \right)\right\vert \right] = - \gamma(z) - \mathbb{E}\left[ \log \left\vert\det \left( D_{n} \right)\right\vert \right],
\end{equation}
and the result is now a consequence of relations \eqref{eq.esp.1} and \eqref{eq.esp.2}.
\end{proof}

\begin{proposition}
\label{prop.kotani2}
Let, for each $z \in \mathbb{C}_{+}$, $(M_n^{\phi}(z,\omega))$ and $\gamma(z)$ be as in Lemma~\ref{lema.kotani1}. 
Then, for each $n \in \mathbb{Z}_{+}$,
$$
\mathbb{E}\left[\frac{1}{\tr\left[D_{n}\Im[M^{\phi}_{n}]D_{n} + \frac{\Im[z]}{2}\right]}\right] \leq 2\frac{\gamma(z)}{\Im[z]}.
$$
\end{proposition}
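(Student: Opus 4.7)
The plan is to deduce this inequality from Lemma~\ref{lema.kotani1} by comparing $\log\det(\mathbb{I}+\Im[z]B^{-1})$ to $\Im[z]/\tr[B+(\Im[z]/2)\mathbb{I}]$ pointwise, where $B(\omega):=D_n(\omega)\Im[M^\phi_n(z,\omega)]D_n(\omega)$. The key observation is that this matrix is real symmetric and positive definite: by Proposition~\ref{prop.m}~(a) the Weyl-Titchmarsh matrix $M^\phi_n$ is complex symmetric, so its entrywise imaginary part $\Im[M^\phi_n]$ is real symmetric, while Proposition~\ref{prop.m}~(b) applied at $T^n\omega$ (using $M^\phi_n(z,\omega)=M^\phi_0(z,T^n\omega)$) shows that $B(\omega)$ equals $\Im[z]\sum_{k\geq 1}F_k^\ast F_k$, which is strictly positive for $\Im[z]>0$; since $D_n(\omega)$ is real symmetric and invertible, the whole product is real symmetric positive definite.

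Let $\lambda_1,\ldots,\lambda_l>0$ be the eigenvalues of $B(\omega)$. In its diagonalizing basis,
\[
\log\det\!\left(\mathbb{I}+\frac{\Im[z]}{B(\omega)}\right)=\sum_{j=1}^{l}\log\!\left(1+\frac{\Im[z]}{\lambda_j}\right).
\]
The analytic core of the proof is the elementary inequality
\[
\log(1+x)\ \geq\ \frac{2x}{2+x},\qquad x\geq 0,
\]
which follows from $f(x):=\log(1+x)-\tfrac{2x}{2+x}$ satisfying $f(0)=0$ and $f'(x)=x^2/[(1+x)(2+x)^2]\geq 0$. Setting $x=\Im[z]/\lambda_j$ gives
\[
\log\!\left(1+\frac{\Im[z]}{\lambda_j}\right)\ \geq\ \frac{\Im[z]}{\lambda_j+\Im[z]/2}.
\]
Summing over $j$ and then using the trivial bound $\sum_{j}\mu_j^{-1}\geq(\sum_j\mu_j)^{-1}$, valid for any positive reals $\mu_j$, with $\mu_j=\lambda_j+\Im[z]/2$, yields the pointwise estimate
\[
\log\det\!\left(\mathbb{I}+\frac{\Im[z]}{B(\omega)}\right)\ \geq\ \frac{\Im[z]}{\tr\!\left[B(\omega)+(\Im[z]/2)\mathbb{I}\right]}.
\]

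Taking the expectation with respect to $\nu$ and applying Lemma~\ref{lema.kotani1} finishes the argument:
\[
2\gamma(z)\ =\ \mathbb{E}\!\left[\log\det\!\left(\mathbb{I}+\frac{\Im[z]}{D_n\Im[M^\phi_n]D_n}\right)\right]\ \geq\ \Im[z]\cdot\mathbb{E}\!\left[\frac{1}{\tr\!\left[D_n\Im[M^\phi_n]D_n+(\Im[z]/2)\mathbb{I}\right]}\right],
\]
and the claim follows on dividing by $\Im[z]>0$. The main obstacle is the choice of the scalar inequality: the denominator $\lambda_j+\Im[z]/2$ in the statement forces the use of $\log(1+x)\geq 2x/(2+x)$ rather than a simpler tangent-line bound such as $\log(1+x)\geq x/(1+x)$, which would yield the wrong denominator $\lambda_j+\Im[z]$.
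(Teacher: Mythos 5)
Your proof is correct and follows essentially the same route as the paper: both reduce the claim to Lemma~\ref{lema.kotani1}, pass to the spectral decomposition of $D_n\Im[M^\phi_n]D_n$, apply the scalar bound $\log(1+x)\ge x/(1+\tfrac{x}{2})=2x/(2+x)$, and then dominate the reciprocal of the trace by the sum of reciprocals. The only cosmetic differences are that you phrase the last step as the trivial bound $\bigl(\sum_j\mu_j\bigr)^{-1}\le\sum_j\mu_j^{-1}$ where the paper invokes the arithmetic--harmonic mean inequality, and that you work with eigenvalues after explicitly verifying positive definiteness via Proposition~\ref{prop.m}, while the paper speaks of singular values.
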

\begin{proof}
Given a square matrix $A$, it is known that\footnote{See, for instance, page $25$ in~\cite{taylor11}.}   
$$
\det \left( e^{A} \right) = e^{\tr[A]}.
$$
Now, if $B$ is such that $\mathbb{I} + B =  e^{A}$\footnote{The matricial logarithm is defined for every matrix with positive eigenvalues; see page $269$ in~\cite{higham08}.}, then  
$$
\det \left( \mathbb{I} + B \right) = e^{\tr\left[\log(\mathbb{I} + B)\right]},
$$ 
that is,  
$$
\log \det \left(  \mathbb{I} + B \right) = \tr\left[ \log(\mathbb{I} + B)\right].
$$
Thus, relation \eqref{eq.exp.u.y} can be written as
$$
\mathbb{E}\left[\tr \left[\log \left(\mathbb{I} + \frac{\Im[z]}{D_{n}\Im[M^{\phi}_{n}]D_{n}}\right)\right]\right] = 2\gamma(z),
$$
which can be expressed in terms of singular the values, $\mu_{k}$, of ${D_{0}\Im[M^{\phi}_{0}]D_{0}}$ as
\begin{equation}\label{eq.singv.le}
\mathbb{E}\left[\sum_{k = 1}^{l} \left(\log \left(1 + \frac{\Im[z]}{\mu_{k}}\right)\right)\right] = 2\gamma(z)
\end{equation}
(namely, given a positive semi-definite matrix $A$ of size $l \times l$, if $f: \mathbb{R}_{+} \rightarrow \mathbb{R}_{+}$ is an increasing function, one has, for each $k\in\{1,\ldots,l\}$, $s_{k}[f(A)] = f(s_{k}[A])$; see Theorem $6.7$ in \cite{petz2014}).

It is also known that for each $x \geq 0$,   
\begin{equation}
\label{des.log}
\log(1 + x) \geq \frac{x}{1 + \frac{x}{2}}.
\end{equation}
Hence,
$$
\mathbb{E}\left[\sum_{k = 1}^{l}  \frac{ \left(\frac{\Im[z]}{\mu_{k}} \right)}{1 + \left(\frac{\Im[z]}{2\mu_{k}}\right)} \right] \leq 2\gamma(z),
$$
and then,
$$
\Im[z] \mathbb{E}\left[\sum_{k = 1}^{l}  \frac{1}{\mu_{k} + \frac{\Im[z]}{2}} \right] \leq 2\gamma(z).
$$
Finally, since the arithmetic mean of a finite set of non-negative numbers is greater than its harmonic mean, it follows that 
$$
\mathbb{E}\left[\frac{1}{\tr \left[D_{n}\Im[M^{\phi}_{n}]D_{n} + \frac{\Im[z]}{2}\right]}\right] \leq
\mathbb{E}\left[\sum_{k = 1}^{l}  \frac{1}{\mu_{k} + \frac{\Im[z]}{2}} \right]
\leq 2\frac{\gamma(z)}{\Im[z]}.
$$
\end{proof}

\


It follows from Corollary \ref{coro.thouless} that the normal derivative $\frac{\partial \gamma(x + iy)}{\partial y}(x)$ exists for $\kappa$-a.e. $x\in\mathbb{R}$. So, for each  $x \in \mathbb{R}$ such that this derivative exists and such that $\gamma(x) = 0$, one has
$$
\frac{\partial \gamma(x + iy)}{\partial y}(x) = \lim_{y \downarrow 0} \frac{\gamma(x + iy) - \gamma(x)}{y} = \lim_{y \downarrow 0} \frac{\gamma(x + iy)}{y}.
$$

It follows from Fatou Lemma and Proposition \ref{prop.kotani2} that, for $\kappa$-a.e. $x\in\mathbb{R}$
$$
\begin{array}{lll}
&\mathbb{E}\left[ \liminf_{y \downarrow 0} \left(\tr\left[D_{n}\Im[M^{\phi}_{n}(x + iy)]D_{n}+ \dfrac{\Im[z]}{2}\right]\right)^{-1}\right] \leq & \\
& & \\
\leq & \limsup_{y \downarrow 0} \mathbb{E}\left[\left(\tr\left[D_{n}\Im[M^{\phi}_{n}(x + iy)]D_{n}+ \dfrac{\Im[z]}{2}\right]\right)^{-1}\right]<\infty.&
\end{array}
$$
Then, for $\kappa$-a.e. $x\in\mathbb{R}$ such that $\gamma(x) = 0$ and for $\nu$-a.e. $\omega \in \Omega$,
$$
\liminf_{y \downarrow 0} \tr\left[ D_{n}\Im[M^{\phi}_{n}(x + iy, \omega)]D_{n} \right] > 0.
$$

On other hand, since for each $n\in\mathbb{Z}_+$, $M^{\phi}(z)$ is a Herglotz function (by \eqref{eq.m.herglotz}), one has for $\kappa$-a.e. $x\in\mathbb{R}$ and each $\omega \in \Omega$, 
$$
\lim_{y \downarrow 0} \tr\left[ D_{n}\Im[M^{\phi}_{n}(x + iy, \omega)]D_{n} \right]<\infty.
$$
Hence, for $\kappa$-a.e. $x\in\mathbb{R}$ such that $\gamma(x) = 0$ and for $\nu$-a.e. $\omega \in \Omega$,
\begin{equation}
\label{rel.sup.mult}
0 < \lim_{y \downarrow 0} \tr \left[D_{n}\Im[M^{\phi}_{n}(x + iy, \omega)]D_{n}\right] < \infty.
\end{equation}

Finally, it follows from 
Proposition \ref{prop.sylves} and relation \eqref{rel.sup.mult} that for $\kappa$-a.e. $x\in\mathbb{R}$ such that $\gamma(x) = 0$ and for $\nu$-a.e. $\omega \in \Omega$, 
$$
0 < \lim_{y \downarrow 0} \tr \left[\Im[M^{\phi}_{n}(x + iy, \omega)] \right] < \infty;
$$ 
the set inclusion $\overline{\mathcal{Z}}^{ess}_{l} \subseteq \sigma_{ac, l}(H^{\phi}_{\omega})$ is now a consequence of Proposition \ref{porp.sup.ac}.


In fact, as in \cite{kotani88}, one can go even further and obtain 
from Proposition \ref{prop.kotani2} a stratified charaterization (in terms of the multiplicity) of the absolutely continuous spectrum with respect to the sets \eqref{def.sup.exp}. 

The idea is to relate the greatest singular values of the Weyl-Titchmarsh matrix-valued function $M^{\phi}$ with the greatest Lyapunov exponents, generalizing, therefore, relation~\eqref{eq.singv.le}.

\begin{proposition}
\label{lema.lyapunov.parcial}
Let $z\in\mathbb{C}_+$, $M^{\phi}(z,\omega)$ and $\gamma(z)$ be as in Lemma \ref{lema.kotani1}. Let $\mu_{1}(\omega) \geq \mu_{2}(\omega) \geq \ldots \geq \mu_{l}(\omega)$ be the singular values of $D_{0}\Im[M^{\phi}(\omega)]D_{0}$. Then, for each $j\in\{1,\ldots,l\}$,
$$
\sum^{j}_{k = 1} \mathbb{E}\left[\log \left(1 + \frac{\Im[z]}{\mu_{k}}\right)\right] \leq 2 (\gamma_{l + 1 - j}(z) + \ldots + \gamma_{l}(z)).
$$
\end{proposition}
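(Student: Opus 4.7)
The plan is to establish the proposition by lifting the argument of Proposition~\ref{prop.kotani2} from the trace to the $j$-th exterior power, adapting the Kotani--Simon strategy of Theorem~7.2 in \cite{kotani88}.

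The starting point is the matricial identity that emerges inside the proof of Lemma~\ref{lema.kotani1}, namely
\begin{equation*}
(M^{\phi}_{n-1})^{-*}\,\Im[M^{\phi}_{n-1}]\,(M^{\phi}_{n-1})^{-1}\;=\;D_{n}\,\Im[M^{\phi}_{n}]\,D_{n} + \Im[z]\,\mathbb{I},
\end{equation*}
whose right-hand side is a positive-definite $l\times l$ matrix with eigenvalues $\mu_{k}(T^{n}\omega)+\Im[z]$, where $\mu_{1}(\omega)\ge\ldots\ge\mu_{l}(\omega)$ are the singular values of $D_{0}\Im[M^{\phi}_{0}(\omega)]D_{0}$. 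Rewriting this as
$\mathbb{I}+\Im[z]\,(D_{n}\Im[M^{\phi}_{n}]D_{n})^{-1}=(M^{\phi}_{n-1})^{-*}\,\Im[M^{\phi}_{n-1}]\,(M^{\phi}_{n-1})^{-1}\,(D_{n}\Im[M^{\phi}_{n}]D_{n})^{-1}$, I would apply $\Lambda^{j}$ to both sides and bound $\log\|\Lambda^{j}(\cdots)\|$ using submultiplicativity $\|\Lambda^{j}(AB)\|\le\|\Lambda^{j}A\|\,\|\Lambda^{j}B\|$, together with the symmetry $(M^{\phi})^{t}=M^{\phi}$ from Proposition~\ref{prop.m}(a), which yields $\|\Lambda^{j}(M^{\phi}_{n-1})^{-*}\|=\|\Lambda^{j}(M^{\phi}_{n-1})^{-1}\|$.

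Taking expectations, I would exploit $T$-invariance of $\nu$ to cancel terms of the form $\mathbb{E}[\log\|\Lambda^{j}\Im M^{\phi}_{n-1}\|]=\mathbb{E}[\log\|\Lambda^{j}\Im M^{\phi}_{n}\|]$, and handle the $\mathbb{E}[\log\|\Lambda^{j}M^{\phi}_{n-1}\|]$ contribution via the recurrence $M^{\phi}_{n}=-F_{n+1}F_{n}^{-1}D_{n}^{-1}$ and Birkhoff's theorem applied to the telescoping expression $\log\|\Lambda^{j}F_{L}\|$. Proposition~\ref{prop.def.exp.lyap}, generalized to exterior powers via Oseledec's theorem, then gives $\lim_{L\to\infty}\tfrac{1}{L}\log\|\Lambda^{j}F_{L}(z,\omega)\|=-(\gamma_{l-j+1}(z)+\ldots+\gamma_{l}(z))$, i.e.\ the sum of the top $j$ Lyapunov exponents of the Jost cocycle, which coincide with the negatives of the $j$ smallest non-negative exponents of the transfer cocycle. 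Since $\|\Lambda^{j}P\|$ for a positive-definite $l\times l$ matrix $P$ equals the product of the top $j$ eigenvalues of $P$, and the eigenvalues of $\mathbb{I}+\Im[z](D_{0}\Im M^{\phi}_{0}D_{0})^{-1}$ are exactly $1+\Im[z]/\mu_{k}$, a careful bookkeeping matches the partial sum $\sum_{k=1}^{j}\log(1+\Im[z]/\mu_{k})$ on the left-hand side of the proposition (the $j$ \emph{smallest} values of $\log(1+\Im[z]/\mu_{k})$, coming from the $j$ \emph{largest} $\mu_{k}$'s) with the $j$ smallest non-negative Lyapunov exponents on the right, via the factorization $\sum_{k=1}^{j}\log(1+\Im[z]/\mu_{k})=\log\det\bigl(\mathbb{I}+\Im[z](D_{0}\Im M^{\phi}_{0}D_{0})^{-1}\bigr)-\log\bigl\|\Lambda^{l-j}\bigl(\mathbb{I}+\Im[z](D_{0}\Im M^{\phi}_{0}D_{0})^{-1}\bigr)\bigr\|$ combined with Lemma~\ref{lema.kotani1}.

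The main obstacle is keeping the inequality pointing in the correct direction: submultiplicativity of $\|\Lambda^{j}\|$ naturally gives upper bounds on norms of products and hence lower bounds only when applied to the factorization of the inverse matrix. One must therefore be strategic about whether to apply $\Lambda^{j}$ or $\Lambda^{l-j}$, and invoke Weyl's majorant inequality $\prod_{k=1}^{j}|\lambda_{k}(B)|\le\|\Lambda^{j}B\|$ on the product of factors on the right-hand side of the identity above. Note that for $j=l$ the argument reduces to equality and reproduces Lemma~\ref{lema.kotani1}, so the strict direction of the inequality for $j<l$ is governed precisely by the slack in Weyl's inequality for non-commuting products of Hermitian matrices.
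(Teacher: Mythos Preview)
Your proposal has a genuine gap. The determinant argument of Lemma~\ref{lema.kotani1} lifts to exterior powers only because $\det$ is multiplicative; $\|\Lambda^{j}(\cdot)\|$ is merely submultiplicative, and this breaks both of your key steps.

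First, the cancellation ``$\mathbb{E}[\log\|\Lambda^{j}\Im M^{\phi}_{n-1}\|]=\mathbb{E}[\log\|\Lambda^{j}\Im M^{\phi}_{n}\|]$'' does not do what you want. In the factorization $(M_{n-1}^{\phi})^{-*}\Im M_{n-1}^{\phi}(M_{n-1}^{\phi})^{-1}(D_{n}\Im M_{n}^{\phi}D_{n})^{-1}$, submultiplicativity produces $\|\Lambda^{j}\Im M_{n-1}^{\phi}\|$ (product of the \emph{largest} $j$ eigenvalues) against $\|\Lambda^{j}(D_{n}\Im M_{n}^{\phi}D_{n})^{-1}\|$ (reciprocal of the product of the \emph{smallest} $j$ eigenvalues of $D_{n}\Im M_{n}^{\phi}D_{n}$). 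These do not match under $T$-invariance, and the mismatch is exactly the quantity you are trying to bound. Second, your telescoping of $\mathbb{E}[\log\|\Lambda^{j}M^{\phi}\|]$ via $M^{\phi}_{n}=-F_{n+1}F_{n}^{-1}D_{n}^{-1}$ fails for the same reason: $\|\Lambda^{j}(F_{n+1}F_{n}^{-1})\|$ is not $\|\Lambda^{j}F_{n+1}\|/\|\Lambda^{j}F_{n}\|$, so the Birkhoff sum $\frac{1}{L}\sum_{n}\log\|\Lambda^{j}M^{\phi}_{n}(\omega)\|$ does not collapse to $\frac{1}{L}\log\|\Lambda^{j}F_{L}\|$. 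You diagnose the direction problem correctly in your last paragraph, but Weyl's inequality does not repair it.

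The paper bypasses both obstructions with a single device you are missing: it conjugates the recurrence by $F_{n}$, setting $P_{n}:=\sqrt{D_{n}\Im[M^{\phi}_{n}]D_{n}}\,F_{n}$, and obtains the exact identity
\[
P_{n-1}^{*}P_{n-1}=P_{n}^{*}\Bigl(\mathbb{I}+\frac{\Im[z]}{D_{n}\Im[M^{\phi}_{n}]D_{n}}\Bigr)P_{n}.
\]
Now a single application of submultiplicativity for $\|\Lambda^{l-j}(\cdot)\|$ per step yields a genuinely telescoping inequality
\[
\log\bigl\|\Lambda^{l-j}(P_{0}^{*}P_{0})\bigr\|\le\sum_{m=1}^{n}\log\Bigl\|\Lambda^{l-j}\Bigl(\mathbb{I}+\tfrac{\Im[z]}{D_{m}\Im[M^{\phi}_{m}]D_{m}}\Bigr)\Bigr\|+\log\bigl\|\Lambda^{l-j}(P_{n}^{*}P_{n})\bigr\|,
\]
and the tail term is bounded above by $\|\Lambda^{l-j}(D_{n}\Im M^{\phi}_{n}D_{n})\|\cdot\|\Lambda^{l-j}(F_{n}^{*}F_{n})\|$, whose exponential rate is a sum of Lyapunov exponents by Proposition~\ref{prop.def.exp.lyap}. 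Dividing by $n$, letting $n\to\infty$, applying Birkhoff to the sum, and finally subtracting from Lemma~\ref{lema.kotani1} gives the stated bound. The introduction of $P_{n}$ is the missing idea.
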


\begin{proof}
It follows from \eqref{recorrencia1} that, for each $n \in \mathbb{Z}_+$,
$$
-D_{n}\Im[M^{\phi}_{n}]D_{n} = \Im[z]\mathbb{I} - ((M^{\phi}_{n - 1})^{*})^{-1}\Im[M^{\phi}_{n - 1}](M^{\phi}_{n - 1})^{-1}.
$$
Now, by multiplying both members of this identity to the left by $F^{*}_{n}$ and to the right by $F_{n}$, one gets
$$
- F^{*}_{n}D_{n}\Im[M^{\phi}_{n}]D_{n}F_{n}  =  F^{*}_{n}\Im[z]F_{n} - F^{*}_{n}((M^{\phi}_{n - 1})^{*})^{-1}\Im[M^{\phi}_{n - 1}](M^{\phi}_{n - 1})^{-1}F_{n}.
$$

Since, by definition~\eqref{eq.rec.m.din},
$$
\left\{
\begin{array}{lll}
F^{*}_{n}((M^{\phi}_{n - 1})^{*})^{-1} & =  & - F^{*}_{n - 1}D_{n - 1}, \\
&&\\ 
(M^{\phi}_{n - 1})^{-1}F_{n} & = & - D_{n - 1}F_{n - 1},
\end{array}\right.
$$
the last identity can be written as 
$$
- F^{*}_{n}D_{n}\Im[M^{\phi}_{n}]D_{n}F_{n} = \Im[z]F^{*}_{n}F_{n} -  F^{*}_{n - 1}D_{n - 1}\Im[M^{\phi}_{n - 1}]D_{n - 1}F_{n - 1};
$$
note that this is a relation between two positive semi-definite matrices. By setting $P_{n} := \sqrt{D_{n}\Im[M^{\phi}_{n}]D_{n}}F_{n}$, one has
$$
P^{*}_{n - 1}P_{n - 1} =  \Im[z]F^{*}_{n}F_{n} + P^{*}_{n}P_{n}, 
$$
from which follows the recurrence relation
\begin{equation}
\label{eq.rec.p}
P^{*}_{n - 1}P_{n - 1} = P^{*}_{n}\left(\mathbb{I} + \frac{\Im[z]}{D_{n}\Im[M^{\phi}_{n}]D_{n}}\right)P_{n}.
\end{equation}

Now, we apply the operators 
 $\Lambda^{k}$ to both members of \eqref{eq.rec.p}. For each $j\in\{0,\ldots,l - 1\}$, one gets 
$$
\begin{array}{lll}
\Lambda^{l - j}\left(P^{*}_{n - 1}P_{n - 1}\right) & = & \Lambda^{l - j}\left( P^{*}_{n}\left(\mathbb{I} + \dfrac{\Im[z]}{D_{n}\Im[M^{\phi}_{n}]D_{n}}\right)P_{n} \right)\\
& & \\
& = & \Lambda^{l - j}\left(\left(\mathbb{I} + \dfrac{\Im[z]}{D_{n}\Im[M^{\phi}_{n}]D_{n}}\right) P_{n}P^{*}_{n} \right),
\end{array}
$$
and then,
\begin{equation}
\label{eq.rec.p2}
\left\| \Lambda^{l - j}(P^{*}_{n - 1}P_{n - 1}) \right\| \leq \left\|  \Lambda^{l - j}\left(\mathbb{I} + \frac{\Im[z]}{D_{n}\Im[M^{\phi}_{n}]D_{n}}\right)   \right\|  \left\| \Lambda^{l - j}(P^{*}_{n}P_{n})\right\|.
\end{equation}
Since
$$
\left\| \Lambda^{l - j}\left(\mathbb{I} + \frac{\Im[z]}{D_{n}\Im[M^{\phi}_{n}]D_{n}}\right)   \right\| = \prod^{l - j}_{k = 1}\left(1 + \frac{\Im[z]}{\mu_{l + 1 - k}(T^{n}\omega)}\right),
$$
it follows from \eqref{eq.rec.p2} that
\begin{equation*}
 \log \left\| \Lambda^{l - j}(P^{*}_{n - 1}P_{n - 1}) \right\|  \leq 
 \sum^{l - j}_{k = 1} \log \left(1 + \frac{\Im[z]}{\mu_{l + 1 - k}(T^{n}\omega)}\right)
+  \log \left\| \Lambda^{l - j}(P^{*}_{n}P_{n})\right\|,
\end{equation*}

and so, for each $n\in\mathbb{Z}_+$,
$$
\log \left\|  \Lambda^{l - j}(P^{*}_{0}P_{0}) \right\| \leq  \sum^{n}_{j = 1}\sum^{l - j}_{k = 1}  \log \left(1 + \frac{\Im[z]}{\mu_{l + 1 - k}(T^{j}\omega)}\right) + \log \left\|  \Lambda^{l - j}(P^{*}_{n}P_{n})  \right\|. 
$$

If one multiplies both sides of the last inequality by $\frac{1}{n}$ and let $n \rightarrow \infty$, one gets by Birkhoff Ergodic Theorem that, for $\nu$-a.e. $\omega \in \Omega$, 
\begin{equation}
\label{eq.rec.p3}
0 \leq \sum^{l}_{k = 1} \mathbb{E}\left[ \log \left(1 + \frac{\Im[z]}{\mu_{l + 1 - k}}\right)\right] + 
\lim_{n \rightarrow \infty} \frac{1}{n} \log \left\| \Lambda^{l - j}(P^{*}_{n}P_{n})\right\|.
\end{equation}

Nevertheless, by definition, $F$ and $P$ satisfies for each $n \in \mathbb{Z}_{+}$, 
$$
P^{*}_{n}P_{n} = F^{*}_{n} D_{n}\Im[M^{\phi}_{n}]D_{n} F_{n};
$$
thus, for each $j\in\{0,\ldots,l-1\}$,
$$
\begin{array}{lll}
\Lambda^{l - j} \left( P^{*}_{n}P_{n} \right) & = &  \Lambda^{l - j} \left( F^{*}_{n} D_{n}\Im[M^{\phi}_{n}]D_{n} F_{n} \right) \\
& & \\
& = & \Lambda^{l - j} \left(  D_{n}\Im[M^{\phi}_{n}]D_{n} F_{n}F^{*}_{n} \right),
\end{array}
$$
from which follows that 
\begin{equation*}
\frac{1}{n} \log \left\|  \Lambda^{l - j}(P^{*}_{n}P_{n})  \right\| \leq  \frac{1}{n} \log \left\| \Lambda^{l - j}(D_{n}\Im[M^{\phi}_{n}]D_{n}) \right\| + \frac{1}{n} \log \left\|  \Lambda^{l - j}(F^{*}_{n}F_{n})\right\|.
\end{equation*}

Now, it follows from Proposition \ref{prop.def.exp.lyap} that for $\nu$-a.e. $\omega \in \Omega$, 
$$
- (\gamma_{1}^+(z) + \ldots + \gamma_{l - j}^+(z)) = \lim_{n \rightarrow \infty} \frac{1}{n} \log\left\|\Lambda^{l - j}(F_{n}(\omega, z))\right\|,
$$
and since  $\log \left\| \Lambda^{l - j}(D_{0}\Im[M^{\phi}_{0}]D_{0})(\omega) \right\|<\infty$, one has 
\begin{equation}
\label{eq.rec.p5}
\lim_{n \rightarrow \infty} \frac{1}{n}  \log \left\|  \Lambda^{l - j}(P^{*}_{n}P_{n}) \right\| \leq - 2 (\gamma_{1}(z) + \ldots + \gamma_{l - j}(z)).
\end{equation}
By combining relations \eqref{eq.rec.p3} and \eqref{eq.rec.p5}, one gets 
\begin{equation}
\label{eq.soma.expo.1}
\sum^{l - j}_{k = 1} \mathbb{E}\left[ \log \left(1 + \frac{\Im[z]}{\mu_{l + 1 - k}}\right)\right] \geq 2 (\gamma_{1}(z) + \ldots + \gamma_{l - j}(z)).
\end{equation}

Finally, by Lemma \ref{lema.kotani1} and relation \eqref{eq.soma.expo.1}, it follows that
$$
\sum^{j}_{k = 1} \mathbb{E}\left[ \log \left(1 + \frac{\Im[z]}{\mu_{k}}\right)\right] = \sum^{l}_{k = l - j + 1} \mathbb{E}\left[ \log \left(1 + \frac{\Im[z]}{\mu_{l + 1 - k}}\right)\right] \leq 2 (\gamma_{l + 1 - j}(z) + \ldots + \gamma_{l}(z)).
$$
\end{proof}

Before we present the proof of Kotani Theorem, some final preparation is required.

Suppose now that for a given $x \in \mathbb{R}$, there exists a solution to the eigenvalue equation~\eqref{eq.autovalor} for the operator $H_{\omega}$ which is square-sumable. Given $y > 0$, since $x + iy\in\rho(H_\omega)$, the correspondent solution for $x + iy$ also exists. Then, by Green Formula, one can compare the norms of these solutions in $l^{2}(\mathbb{N}; \mathbb{C}^{l})$.

\begin{lemma}
\label{lema.m.desi}
Let $\omega \in \Omega$, $k\in\{1,\ldots,k\}$, $n \in \mathbb{N}$, and let $\textbf{f}_{n}^{(k)}(x + iy, \omega)$ be the $k$-th column of the matrix $F_{n}(x + iy, \omega)$. If 
$\textbf{f}^{(k)}(x, \omega)\in l^2(\mathbb{N};\mathbb{C}^l)$, then
$$	
\sum^{\infty}_{m = 1} 
\left\| \textbf{f}_{m}^{(k)}(x + iy) \right\|_{\mathbb{C}^{l}}^{2}
\leq
\sum^{\infty}_{m = 1}
\left\| \textbf{f}_{m}^{(k)}(x) \right\|_{\mathbb{C}^{l}}^{2}.
$$
\end{lemma}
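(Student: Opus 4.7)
The plan is to invoke Lemma~\ref{lema.wrons.zx} with the pair of Jost solutions at $z=x+iy$ and at $x$, namely $\textbf{u}=\textbf{r}:=\textbf{f}^{(k)}(x+iy,\omega)$ and $\textbf{v}=\textbf{s}:=\textbf{f}^{(k)}(x,\omega)$. Since $F_0(z,\omega)=\mathbb{I}$ for every $z$, both sequences share the initial condition $\textbf{u}_0=\textbf{v}_0=e^{k}$, so $\textbf{u}-\textbf{v}$ (and its conjugate) vanish at $n=0$; in particular, the Wronskian $W^{\omega}_{[\textbf{u}-\textbf{v},\bar{\textbf{u}}-\bar{\textbf{v}}]}(1)$ is zero.

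Because $y>0$ forces $x+iy\in\rho(H_{\omega}^{\phi})$, the Jost solution $\textbf{u}$ belongs to $l^{2}(\mathbb{N};\mathbb{C}^{l})$, and by hypothesis so does $\textbf{v}$; thus both $\textbf{u}-\textbf{v}$ and $\bar{\textbf{u}}-\bar{\textbf{v}}$ lie in $l^{2}(\mathbb{N};\mathbb{C}^{l})$. The boundedness of $D$, via the limit-point criterion~\eqref{criterio.ponto}, then implies that $W^{\omega}_{[\textbf{u}-\textbf{v},\bar{\textbf{u}}-\bar{\textbf{v}}]}(n+1)\to 0$ as $n\to\infty$. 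Taking $m=1$ in Lemma~\ref{lema.wrons.zx} and letting $n\to\infty$, and noting that under the identifications $\textbf{u}=\textbf{r}$, $\textbf{v}=\textbf{s}$,
\[
2\langle\textbf{u}_k,\textbf{r}_k\rangle-\langle\textbf{u}_k,\textbf{s}_k\rangle-\langle\textbf{v}_k,\textbf{r}_k\rangle
=2\bigl(\|\textbf{u}_k\|_{\mathbb{C}^{l}}^{2}-\Re\langle\textbf{u}_k,\textbf{v}_k\rangle_{\mathbb{C}^{l}}\bigr),
\]
one obtains $0=2iy\sum_{k=1}^{\infty}\bigl(\|\textbf{u}_k\|^{2}-\Re\langle\textbf{u}_k,\textbf{v}_k\rangle\bigr)$.

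Dividing by $2iy\neq 0$ gives $\sum_{k=1}^{\infty}\|\textbf{u}_k\|^{2}=\Re\sum_{k=1}^{\infty}\langle\textbf{u}_k,\textbf{v}_k\rangle$. Viewing $\textbf{u}$ and $\textbf{v}$ as elements of the Hilbert space $l^{2}(\mathbb{N};\mathbb{C}^{l})$ and applying Cauchy--Schwarz,
\[
\sum_{k=1}^{\infty}\|\textbf{u}_k\|^{2}
\;\leq\;\Bigl|\sum_{k=1}^{\infty}\langle\textbf{u}_k,\textbf{v}_k\rangle\Bigr|
\;\leq\;\Bigl(\sum_{k=1}^{\infty}\|\textbf{u}_k\|^{2}\Bigr)^{1/2}\Bigl(\sum_{k=1}^{\infty}\|\textbf{v}_k\|^{2}\Bigr)^{1/2},
\]
and dividing through by $\bigl(\sum\|\textbf{u}_k\|^{2}\bigr)^{1/2}$ (the case where this quantity vanishes being trivial) yields the desired inequality.

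The one step deserving care is the vanishing of $W^{\omega}_{[\textbf{u}-\textbf{v},\bar{\textbf{u}}-\bar{\textbf{v}}]}(n+1)$ at infinity, which rests on the limit-point assumption~\eqref{criterio.ponto} applied to the four $l^{2}$ vectors involved; once this is granted, the argument is essentially algebraic, combining Lemma~\ref{lema.wrons.zx} with the shared initial condition of the Jost solutions and a single application of Cauchy--Schwarz.
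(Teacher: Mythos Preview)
Your argument is correct and follows essentially the same approach as the paper: apply Lemma~\ref{lema.wrons.zx} with $\textbf{u}=\textbf{r}=\textbf{f}^{(k)}(z)$ and $\textbf{v}=\textbf{s}=\textbf{f}^{(k)}(x)$, use $\textbf{u}_0=\textbf{v}_0$ to kill the Wronskian at $m=1$ and square-summability plus~\eqref{criterio.ponto} to kill it at infinity, and deduce $\|\textbf{u}\|^{2}=\Re\langle\textbf{u},\textbf{v}\rangle$. The only cosmetic difference is the last step: the paper rewrites this identity as $\|\textbf{u}-\textbf{v}\|^{2}=\|\textbf{v}\|^{2}-\|\textbf{u}\|^{2}\ge 0$, whereas you invoke Cauchy--Schwarz, which is an equivalent way of concluding.
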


\begin{proof}
We omit the dependence in $\omega$ throughout the proof. Let $z = x + iy$. By applying Lemma \ref{lema.wrons.zx} to $\textbf{f}^{(k)}(z)$ and $\textbf{f}^{(k)}(x)$, one gets
$$
\Im[z]\left(2\left\| \textbf{f}^{(k)}(z) \right\|^{2} - \left\langle \textbf{f}^{(k)}(z) , \textbf{f}^{(k)}(x) \right\rangle  - \left\langle \textbf{f}^{(k)}(x) , \textbf{f}^{(k)}(z)\right\rangle \right) = 0;
$$
namely, given that these solutions are square-sumable at $+\infty$, one has
$$
\lim_{n \rightarrow \infty}W_{[\textbf{f}^{(k)}(z) - \textbf{f}^{(k)}(x) , \overline{\textbf{f}}^{(k)}(z) - \overline{\textbf{f}}^{(k)}(x)]}(n + 1) = 0,
$$
and since $\textbf{f}^{(k)}_{0}(z) - \textbf{f}^{(k)}_{0}(x) = \textbf{0}$, one also has
$$
W_{[\textbf{f}^{(k)}(z) - \textbf{f}^{(k)}(x), \overline{\textbf{f}}^{(k)}(z) - \overline{\textbf{f}}^{(k)}(x)]}(1) = 0.
$$
Hence, 
$$
\begin{array}{lllll}
\left\| \textbf{f}^{(k)}(z) - \textbf{f}^{(k)}(x)  \right\|^{2} & = & \left\| \textbf{f}^{(k)}(z) \right\|^{2} + \left\| \textbf{f}^{(k)}(x) \right\|^{2} \\
& & \\
& & - \left\langle \textbf{f}^{(k)}(z) , \textbf{f}^{(k)}(x) \right\rangle  - \left\langle \textbf{f}^{(k)}(x) , \textbf{f}^{(k)}(z)\right\rangle \\
& & \\
& = & \left\| \textbf{f}^{(k)}(x) \right\|^{2} - \left\| \textbf{f}^{(k)}(z) \right\|^{2},
\end{array}
$$
and consequently,
$$
\left\|\textbf{f}^{(k)}(x) \right\|^{2} - \left\| \textbf{f}^{(k)}(z) \right\|^{2} > 0.
$$
\end{proof}

\begin{corollary}
\label{coro.exp.dec}
Let $z\in\mathbb{C}_+$, $(H_{\omega})_{\omega}$ and $M^{\phi}(z,\omega)$ be as in Lemma \ref{lema.kotani1}. Suppose that there exist $x \in \mathbb{R}$ and $j\in\{1,\ldots, l\}$ such that $\gamma_{j}(x) > 0$. Then, for each $y > 0$, $\gamma_{j}(x) \leq \gamma_j(x + iy)$.
\end{corollary}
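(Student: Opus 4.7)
My plan is to combine Lemma~\ref{lema.m.desi} with Oseledec's theorem applied to the cocycle at the real energy $x$.

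Since $\gamma_j(x)>0$ and the exponents are ordered as $\gamma_1(x)\geq\cdots\geq\gamma_j(x)>0$, Oseledec's theorem applied to $(T,A_x)$ produces, for $\nu$-a.e.\ $\omega$, a $j$-dimensional stable subspace of initial conditions in $\mathbb{C}^{2l}$ whose forward orbits decay exponentially at $+\infty$ at rate at least $\gamma_j(x)$. Combining this with the existence, for $\kappa$-a.e.\ $x$, of the boundary value $M^\phi(x,\omega):=\lim_{y\downarrow 0}M^\phi(x+iy,\omega)$ — guaranteed by the Herglotz property of $M^\phi$ — one defines $F_n(x,\omega):=\lim_{y\downarrow 0}F_n(x+iy,\omega)$ via \eqref{equa.m.jost}. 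After a linear change of basis of $\mathbb{C}^l$ chosen so that the first $j$ canonical directions span the intersection of $\mathcal{J}_+(x,\omega)$ with the Oseledec stable subspace, the corresponding columns $\textbf{f}^{(1)}(x,\omega),\ldots,\textbf{f}^{(j)}(x,\omega)$ of $F_n(x,\omega)$ lie in $l^2(\mathbb{N};\mathbb{C}^l)$; note that the proof of Lemma~\ref{lema.m.desi} carries over to this rebased setup, as it uses only that the initial conditions at $n=0$ agree at $x$ and at $x+iy$.

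Next, Lemma~\ref{lema.m.desi} applied to each of these $j$ columns yields, for $k=1,\ldots,j$,
$$\sum_{m=1}^{\infty}\|\textbf{f}^{(k)}_m(x+iy,\omega)\|^2\;\leq\;\sum_{m=1}^{\infty}\|\textbf{f}^{(k)}_m(x,\omega)\|^2\;<\;\infty.$$
To translate this into a Lyapunov-exponent inequality, I would mimic the exterior-algebra argument of Proposition~\ref{lema.lyapunov.parcial}: applying $\Lambda^j$ to the Gramians $F_n^*F_n$ expresses the product of the top $j$ singular values of $F_n(x+iy,\omega)$ in terms of the $l^2$-norms just bounded above. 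Taking $\lim_{n\to\infty}\tfrac{1}{n}\log$ (available via Birkhoff's ergodic theorem) and comparing with the analogous quantity at $x$ produces the cumulative inequality $\sum_{k=1}^{j}\gamma_k(x)\leq\sum_{k=1}^{j}\gamma_k(x+iy)$; the per-index conclusion $\gamma_j(x)\leq\gamma_j(x+iy)$ then follows by induction on $j$, the base case $j=1$ being a direct computation with the largest singular value.

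The main obstacle will be the passage from the cumulative inequality on the sum of the top $j$ exponents to the per-index inequality on $\gamma_j$. The induction requires the corresponding statement to hold for $j-1$, which must either be secured by applying the entire argument at a smaller index or by a careful direct comparison of singular values that isolates the $j$-th level of the Oseledec filtration. A secondary subtlety is the step of identifying $j$ Jost columns at the real energy $x$ in $l^2$: this rests on the fact that $M^\phi(x,\omega)$ has boundary values $\kappa$-a.e. and that the limit-point condition lets $\mathcal{J}_+(z,\omega)$ be continuously tracked as $z$ approaches the real axis from above.
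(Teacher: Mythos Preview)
Your approach is considerably more elaborate than the paper's, which is a three-line argument: Oseledec's theorem produces a solution $\textbf{f}(x,\omega)$ decaying exponentially at rate $\gamma_j(x)$; Lemma~\ref{lema.m.desi} is then invoked to assert that the corresponding solution $\textbf{f}(x+iy,\omega)$ decays \emph{at least with the same exponential rate}; hence $\gamma_j(x)\leq\gamma_j(x+iy)$. No exterior algebra, no cumulative sums, no induction, no Herglotz boundary values of $M^\phi$, and no change of basis are used.

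Your route has two genuine gaps. First, the exterior-algebra step does not do what you claim: Lemma~\ref{lema.m.desi} bounds the \emph{total} $l^2$-norm $\sum_{m\geq 1}\|\textbf{f}^{(k)}_m\|^2$, whereas $\Lambda^j(F_n^*F_n)$ captures the product of singular values of $F_n$ at a \emph{fixed} $n$. There is no mechanism in your argument that converts a bound on the former into a bound on the latter, so the cumulative inequality $\sum_{k\leq j}\gamma_k(x)\leq\sum_{k\leq j}\gamma_k(x+iy)$ is not established. Second, even granting all cumulative inequalities, the per-index conclusion does not follow by induction: from $\sum_{k\leq j}a_k\leq\sum_{k\leq j}b_k$ and $\sum_{k\leq j-1}a_k\leq\sum_{k\leq j-1}b_k$ one cannot infer $a_j\leq b_j$, since inequalities do not subtract.

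The paper's shortcut is to claim that the $l^2$ comparison of Lemma~\ref{lema.m.desi} already controls the exponential \emph{rate} of decay of the transported solution, not just its membership in $l^2$. That is the step you should try to justify directly, rather than detouring through wedge powers and an induction that cannot close.
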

\begin{proof}
If $\gamma_{j}(x) > 0$ then, for $\nu$-a.e. $\omega\in\Omega$, there exists a solution to eigenvalue equation at $x$, say $\textbf{f}(x, \omega)$, which decays exponentially fast. By Lemma \ref{lema.m.desi}, it follows that for $\nu$-a.e. $\omega$, the correspondent solution $\textbf{f}(x + iy, \omega)$ decays at least with the same exponential rate. Hence, $\gamma_{j}(x) \leq \gamma_{j}(x + iy)$.
\end{proof}

\begin{proposition}
\label{prop.kotani}
Let $(H_{\omega})_{\omega}, M^{\phi}(\omega), \gamma$ and $(\mu_{k})^{l}_{k = 1}$ be as in Proposition \ref{lema.lyapunov.parcial}. Let $x \in \mathbb{R}$, $j\in\{1,\ldots, l\}$, and suppose that there exists, for each $k = 1, 2, \ldots, l - j$, the limit
$$
\gamma_{k}(x) = \lim_{y \downarrow 0} \gamma_{k}(x + iy).
$$ 
Then, for each $y > 0$,
$$
\begin{array}{lll}
\mathbb{E}\left( \dsum^{j}_{k = 1} \dfrac{1}{\mu_{k}(x + iy) + \frac{y}{2}} \right) & \leq & \dfrac{2}{y} \left(\dsum^{j}_{k = 1}\gamma_{l + 1 - k}(x + iy) \right) \\
& & \\
& & + \dfrac{2}{y} \left( \dsum^{l}_{k = j + 1}(\gamma_{l + 1 - k}(x + iy) - \gamma_{l + 1 - k}(x))  \right).
\end{array}
$$
\end{proposition}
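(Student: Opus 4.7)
The plan is to combine Proposition~\ref{lema.lyapunov.parcial}, which controls the expected sum of $\log(1+y/\mu_k)$ over the top $j$ singular values of $D_0\Im[M^\phi]D_0$, with the elementary inequality \eqref{des.log}, $\log(1+t)\geq t/(1+t/2)$ for $t\geq 0$, so as to convert the logarithmic bound into one involving the rational expressions $1/(\mu_k+y/2)$ that appear on the left-hand side.

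Concretely, I would fix $y>0$ and set $z=x+iy$. Applying \eqref{des.log} pointwise with $t=y/\mu_k(z,\omega)$ gives $\log(1+y/\mu_k)\geq y/(\mu_k+y/2)$ for each $k=1,\ldots,j$. Summing over $k$ and taking expectations yields
\[
y\,\mathbb{E}\left[\sum_{k=1}^{j}\frac{1}{\mu_k(z)+y/2}\right]\ \leq\ \sum_{k=1}^{j}\mathbb{E}\left[\log\left(1+\frac{y}{\mu_k(z)}\right)\right].
\]
By Proposition~\ref{lema.lyapunov.parcial}, the right-hand side is bounded above by $2\sum_{k=1}^{j}\gamma_{l+1-k}(z)$; dividing by $y$ then produces the first summand on the right-hand side of the desired inequality.

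To finish, I would add the second summand $\frac{2}{y}\sum_{k=j+1}^{l}(\gamma_{l+1-k}(z)-\gamma_{l+1-k}(x))$ and verify its non-negativity. After reindexing $m=l+1-k$, this sum equals $\frac{2}{y}\sum_{m=1}^{l-j}(\gamma_m(z)-\gamma_m(x))$, a sum over the top $l-j$ Lyapunov exponents, which is well defined by the standing hypothesis that the limits $\gamma_m(x)=\lim_{y\downarrow 0}\gamma_m(x+iy)$ exist for $m=1,\ldots,l-j$. Corollary~\ref{coro.exp.dec} gives $\gamma_m(z)\geq\gamma_m(x)$ whenever $\gamma_m(x)>0$, while if $\gamma_m(x)=0$ the same inequality holds because the top $l$ Lyapunov exponents are non-negative on $\mathbb{C}\setminus\mathbb{R}$ by Proposition~\ref{prop.def.exp.lyap}. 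Hence each term is non-negative, and adding this quantity to the right-hand side only weakens the inequality.

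I do not foresee a serious obstacle: the statement follows essentially at once from Proposition~\ref{lema.lyapunov.parcial}, inequality \eqref{des.log}, and the non-negativity of the correction term. The apparent redundancy of that correction is presumably there to set up a clean limit $y\downarrow 0$ in the subsequent proof of Kotani's Theorem, since the right-hand side of the proposition can be rewritten as $\frac{2}{y}\bigl(\gamma(x+iy)-\sum_{m=1}^{l-j}\gamma_m(x)\bigr)$, a quantity that, once $\gamma_m(x)=0$ is known for the remaining $j$ exponents, reduces to $\frac{2}{y}(\gamma(x+iy)-\gamma(x))$ and therefore has a finite limit controlled by the normal derivative of $\gamma$ furnished by Corollary~\ref{coro.thouless}.
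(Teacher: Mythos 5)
Your proof is correct and follows essentially the same route as the paper: Proposition \ref{lema.lyapunov.parcial} combined with the inequality \eqref{des.log} (the "same arguments as in Proposition \ref{prop.kotani2}" step), followed by non-negativity of the correction term via Corollary \ref{coro.exp.dec}. Your extra remark covering the case $\gamma_m(x)=0$ (using $\gamma_m\ge 0$ from Proposition \ref{prop.def.exp.lyap}) is a slightly more careful justification of that non-negativity than the paper gives, but it is the same argument in substance.
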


\begin{proof}
It follows from Proposition \ref{lema.lyapunov.parcial} that
$$
\sum^{j}_{k = 1} \mathbb{E}\left[\log \left(1 + \frac{y}{\mu_{k}}\right)\right] \leq 2 (\gamma_{l + 1 - j}(x+iy) + \ldots + \gamma_{l}(x+iy)).
$$
Now,  by employing the same arguments presented in the proof of Proposition \ref{prop.kotani2}, one gets 
\begin{equation}
\label{eq.soma.ex1}
\mathbb{E}\left( \sum^{j}_{k = 1} \frac{1}{\mu_{k}(x + iy) + \frac{y}{2}} \right) \leq \frac{2}{y} \sum^{j}_{k = 1}\gamma_{l + 1 - k}(x + iy). 
\end{equation}

On the other hand, it follows from Corollary \ref{coro.exp.dec} that
$$
\sum^{l}_{k = j + 1}( \gamma_{l + 1 - k}(x + iy) - \gamma_{l + 1 - k}(x) ) \geq 0,
$$
and since $y > 0$, one has
\begin{equation}
\label{eq.soma.ex2}
0 \leq \frac{2}{y} \sum^{l}_{k = j + 1}( \gamma_{l + 1 - k}(x + iy) - \gamma_{l + 1 - k}(x) ).
\end{equation}

The result now follows from relations \eqref{eq.soma.ex1} and \eqref{eq.soma.ex2}. 
\end{proof}

\begin{theorem}[Kotani Theorem]
\label{teo.ko}
Let $(H_\omega)_\omega$ be the family of ergodic matrix-valued Jacobi operators of the form~\eqref{eq.ope.din.jacobi} such that the mapping $D: \Omega \rightarrow GL(l, \mathbb{R})$ is bounded, and for each $\omega\in\Omega$, $D(\omega)$ is a symmetric and invertible $l\times l$ matrix. Suppose also that the mapping $A_z$, given by the law~\eqref{eq.cociclo.az}, is such that $\log^{+} \left\| A_z (\omega)\right\|\in L^{1}(\nu)$. Let also, for each $j\in\{1,\ldots,l\}$, $\mathcal{Z}_{j}$ be the set given by \eqref{def.sup.exp}. Then, for $\nu$-a.e. $\omega\in\Omega$, the restriction of the absolutely continuous spectrum of $H_\omega$ to the set $\mathcal{Z}_{j}$ has multiplicity at least $2j$.
\end{theorem}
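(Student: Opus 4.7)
The plan is to apply Proposition~\ref{prop.kotani} with the very same index $j$ that appears in the theorem statement, exploiting the fact that for $x\in\mathcal{Z}_{j}$ exactly $2j$ Lyapunov exponents vanish; by the symplectic pairing, this means $\gamma_{l-j+1}(x)=\cdots=\gamma_{l}(x)=0$, while $\gamma_{1}(x),\ldots,\gamma_{l-j}(x)$ are strictly positive. Substituting this vanishing into the bound of Proposition~\ref{prop.kotani}, the first sum on the right-hand side becomes $\frac{2}{y}\sum_{k=1}^{j}\bigl(\gamma_{l+1-k}(x+iy)-\gamma_{l+1-k}(x)\bigr)$, so the two sums combine to $\frac{2}{y}\bigl(\gamma(x+iy)-\gamma(x)\bigr)$. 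By Corollary~\ref{coro.thouless} this quotient has a finite limit as $y\downarrow 0$ for $\kappa$-a.e.\ $x\in\mathbb{R}$, and Corollary~\ref{coro.exp.dec} guarantees that every summand is non-negative, so no cancellation is being hidden. Consequently, for $\kappa$-a.e.\ $x\in\mathcal{Z}_{j}$, the quantity $\mathbb{E}\bigl[\sum_{k=1}^{j}(\mu_{k}(x+iy)+y/2)^{-1}\bigr]$ is uniformly bounded for small $y>0$.

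Next I would apply Fatou's Lemma to pass the $\liminf_{y\downarrow 0}$ inside the expectation; this yields, for $\nu$-a.e.\ $\omega\in\Omega$, that $\liminf_{y\downarrow 0}\sum_{k=1}^{j}(\mu_{k}(x+iy)+y/2)^{-1}<\infty$, forcing each of the $j$ largest singular values $\mu_{k}(x+iy)$ of $D_{0}\Im[M^{\phi}(x+iy,\omega)]D_{0}$ to stay bounded away from $0$ along a subsequence. Combined with Proposition~\ref{prop.m}(b), which identifies $D_{0}\Im[M^{\phi}]D_{0}=\Im[z]\sum_{k\ge 1}F_{k}^{*}F_{k}$ and hence controls the trace from above at $\kappa$-a.e.\ $x$, this produces a finite positive semi-definite limit matrix of rank at least $j$. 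Applying Proposition~\ref{prop.sylves} with $X=D_{0}^{-1}$ transfers the rank lower bound to $\lim_{y\downarrow 0}\Im[M^{\phi}(x+iy,\omega)]$, and Proposition~\ref{porp.sup.ac} then places $x$ in a minimal support of an absolutely continuous component of multiplicity at least $j$ for the half-line Dirichlet operator $H^{\phi}_{\omega,+}$.

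The final step is to upgrade from multiplicity $j$ on the right half-line to multiplicity $2j$ on the whole line. Running the identical argument on the left half-line (replacing the cocycle $(T,A_{z})$ by its inverse, and $\gamma^{+}_{j}$ by $|\gamma^{-}_{j}|$, which coincide by Proposition~\ref{prop.def.exp.lyap}) shows that for $\nu$-a.e.\ $\omega$ the rank of $\Im[M^{\phi}_{-}(x+i0,\omega)]$ is likewise at least $j$ at every $x\in\mathcal{Z}_{j}$, so that $H^{\phi}_{\omega,-}$ also carries multiplicity at least $j$ at such $x$. Invoking Proposition~\ref{prop.supor.pertur.z} to identify $\sigma_{ac,k}(H_{\omega})$ with $\sigma_{ac,k}(H^{\phi}_{\omega,+}\oplus H^{\phi}_{\omega,-})$, the direct-sum structure adds the two multiplicities, giving at least $2j$ for $H_{\omega}$.

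The main obstacle is the telescoping identity in the first paragraph: the bound of Proposition~\ref{prop.kotani} is a priori useless as $y\downarrow 0$ because the prefactor $1/y$ blows up, and it becomes useful only when one recognises that the choice of $j$ forces the first sum to absorb into a differential quotient of $\gamma$ at $x$, which Corollary~\ref{coro.thouless} controls. A secondary technical point is selecting a single $\nu$-full-measure set of $\omega$ that works simultaneously for $\kappa$-a.e.\ $x\in\mathcal{Z}_{j}$; this is handled by Fubini applied to the jointly measurable function $(x,\omega)\mapsto \tr[\Im[M^{\phi}(x+iy,\omega)]]$ together with the $T$-invariance of $\mathcal{Z}_{j}$.
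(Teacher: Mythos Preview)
Your proposal is correct and follows essentially the same route as the paper: telescope the right-hand side of Proposition~\ref{prop.kotani} into $\frac{2}{y}(\gamma(x+iy)-\gamma(x))$ using the vanishing of $\gamma_{l-j+1}(x),\ldots,\gamma_{l}(x)$ on $\mathcal{Z}_{j}$, invoke Corollary~\ref{coro.thouless} for finiteness of the difference quotient, apply Fatou to get $\liminf_{y\downarrow 0}\mu_{k}(x+iy,\omega)>0$ for $k\le j$, transfer the rank bound via Sylvester, and double the multiplicity via the left half-line and Proposition~\ref{prop.supor.pertur.z}. The only cosmetic difference is that the paper obtains the upper bound (existence of the finite boundary value $\lim_{y\downarrow 0}\Im[M^{\phi}(x+iy,\omega)]$) directly from the Herglotz property of $M^{\phi}$ rather than via Proposition~\ref{prop.m}(b); your appeal to Proposition~\ref{prop.m}(b) is not quite the right citation for that step, but the Herglotz argument you implicitly need is already available in the paper (see relation~\eqref{eq.m.herglotz} and the set $Q_{4}$ in the paper's proof).
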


\begin{proof}
Define the sets
$$
\begin{array}{lll}
Q_{1} & := & \{x \in \mathbb{R}\mid \exists \lim_{y \downarrow 0} \gamma(x + iy)\}, \\
& & \\
Q_{2} & := & \{x \in \mathbb{R}\mid \exists \lim_{y \downarrow 0} \frac{\partial \gamma}{\partial y}(x + iy) \}, \\
& & \\
Q_{3} & := & \{x \in Q_{1} \cap Q_{2}\mid  \lim_{y \downarrow 0}\left[\frac{\gamma(x + iy) - \gamma(x)}{y} - \frac{\partial \gamma}{\partial y}(x + iy) \right]= 0 \}, \\
& & \\
Q_{4} & := & \{x \in \mathbb{R}\mid \lim_{y \downarrow 0} \Im[M(x + iy, \omega)]<\infty, $ for $\nu$-a.e. $\omega\in\Omega \}.
\end{array}
$$

It follows from Corollary \ref{coro.thouless} that for $\nu$-a.e. $\omega \in \Omega$, $\kappa(\mathbb{R} \setminus Q_{3}) = 0$. One has, from the definitions of $\mathcal{Z}_j$ and $Q_3$, that for each $x_{0}\in \mathcal{Z}_{j} \cap Q_{3}$, there exists the limit
$$
\lim_{y \downarrow 0} \frac{1}{y} \left[\sum^{j}_{k = 1}\gamma_{k}(x_{0} + iy) + \sum^{l}_{k = j + 1}(\gamma_{k}(x_{0} + iy) - \gamma_{k}(x_{0}))\right] = \lim_{y \downarrow 0} \frac{1}{y}[\gamma(x_{0} + iy) - \gamma(x_{0})].
$$

Hence, it follows from Proposition \ref{prop.kotani} and Fatou Lemma that, for each $x_{0} \in \mathcal{Z}_{j} \cap Q_{3}$ and each $k\in\{1,\ldots,j\}$,  
$$
\mathbb{E}\left[\liminf_{y \downarrow 0}\frac{1}{\mu_{k}(x_{0} + iy) + \frac{y}{2}}\right] \le \limsup_{y \downarrow 0} \mathbb{E}\left[\frac{1}{\mu_{k}(x_{0} + iy) + \frac{y}{2}}\right] < \infty,
$$
%
and so
$$
\liminf_{y \downarrow 0} \mu_{k}(x_{0} + iy, \omega) > 0.
$$  

On the other hand, given that $M^{\phi}$ is a Herglotz function (by \eqref{eq.m.herglotz}), it follows that $\kappa(\mathbb{R} \setminus Q_{4}) = 0$. Summing up,  if $x_{0} \in \mathcal{Z}_{j} \cap Q_{3} \cap Q_{4}$, then for $\nu$-a.e. $\omega\in\Omega$ and each $k\in\{1,\ldots,j\}$, one has
$$
0 < \lim_{y \downarrow 0} \mu_{k}(x_{0} + iy, \omega) < \infty.
$$

Therefore, one concludes from these results that
$$
\rk \left[ D_{0}(\omega)\Im[M(x_{0}, \omega)]D_{0}(\omega)  \right] \geq j.
$$

Now, it follows from Proposition \ref{prop.sylves} that
\[\rk\Im[M(x_{0}, \omega)]=\rk \left[ D_{0}(\omega)\Im[M(x_{0}, \omega)]D_{0}(\omega)  \right] \geq j.\] 

By analogous arguments, one can prove the same results for the family $(H^{\phi}_{\omega,-})_{\omega}$ defined in $l^2(\mathbb{Z}_-;\mathbb{C}^l)$ and then, by Proposition \ref{prop.supor.pertur.z}, one can extend, with multiplicity $2j$, the result to the family of operators $(H_{\omega})_{\omega}$ defined in $l^2(\mathbb{Z};\mathbb{C}^l)$.
\end{proof}

\section{Ishii-Pastur Theorem}\label{pastur}
\zerarcounters

Now, we discuss the Ishii-Pastur Theorem (proved for scalar Schr\"odinger operators in \cite{ishii73} and \cite{pastur80}), the reciprocal of Kotani Theorem. In the context of matrix-valued Schr\"odinger operators, the result is presented in \cite{kotani88}. 

\begin{proposition}
\label{prop.m.real}
Let $M^{\phi}(z, \omega)$ be the Weyl-Titchmarsh function for the operator $H^{\phi}_{\omega}$ and let $\textbf{f}^{(k)}(z)$ be as in Lemma \ref{lema.m.desi}. Let $\omega \in \Omega$, $x \in \mathbb{R}$, $k\in\{1,\ldots,l\}$, and suppose that $\textbf{f}^{(k)}(x)\in l^2(\mathbb{N};\mathbb{C}^l)$. Then,
$$
\lim_{y \downarrow 0} (D_{0}\Im[M^{\phi}(x + iy, \omega)]D_{0})_{kk} = 0. 
$$ 
\end{proposition}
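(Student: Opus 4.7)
The plan is to combine Proposition~\ref{prop.m}(b) with Lemma~\ref{lema.m.desi} in a completely direct way. First I would read off the $(k,k)$-diagonal entry of the matrix identity in Proposition~\ref{prop.m}(b): since $F_n(z,\omega)$ has $k$-th column $\textbf{f}_n^{(k)}(z,\omega)$, the $(k,k)$-entry of $F_n(z,\omega)^*F_n(z,\omega)$ is exactly $\|\textbf{f}_n^{(k)}(z,\omega)\|^2_{\mathbb{C}^l}$, so with $z=x+iy$ one gets
\[
(D_0\,\Im[M^\phi(x+iy,\omega)]\,D_0)_{kk}
= y\sum_{n=1}^{\infty}\bigl\|\textbf{f}_n^{(k)}(x+iy,\omega)\bigr\|^2_{\mathbb{C}^l}.
\]

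The second step is to apply Lemma~\ref{lema.m.desi} to control the series uniformly in $y>0$. By hypothesis $\textbf{f}^{(k)}(x,\omega)\in l^2(\mathbb{N};\mathbb{C}^l)$, so the lemma yields
\[
\sum_{n=1}^{\infty}\bigl\|\textbf{f}_n^{(k)}(x+iy,\omega)\bigr\|^2_{\mathbb{C}^l}
\;\leq\;\sum_{n=1}^{\infty}\bigl\|\textbf{f}_n^{(k)}(x,\omega)\bigr\|^2_{\mathbb{C}^l}
\;=:\;C\;<\;\infty,
\]
which is a bound independent of $y$. Combining the two displays gives the sandwich
\[
0\;\leq\;(D_0\,\Im[M^\phi(x+iy,\omega)]\,D_0)_{kk}\;\leq\;yC,
\]
and letting $y\downarrow 0$ yields the claim.

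The proof is essentially two lines once both ingredients are in hand; there is no real obstacle. The only point requiring minor care is verifying that the notation is coherent, i.e.\ that the $k$-th column of the Jost matrix at $x+iy$ is precisely the vector $\textbf{f}^{(k)}(x+iy,\omega)$ appearing in Lemma~\ref{lema.m.desi}, which follows at once from the normalization $F_0(z,\omega)=\mathbb{I}$ built into the definition of the Jost solutions in~\eqref{def.jost}. The conceptual content is that Proposition~\ref{prop.m}(b) converts the diagonal entry of $D_0\Im M^\phi D_0$ into a weighted square-summable norm of the $k$-th Jost column, and the monotonicity in the imaginary part supplied by Lemma~\ref{lema.m.desi} ensures that this norm stays bounded precisely under the hypothesis that an $l^2$ solution at the real energy $x$ exists, so that the extra factor of $y=\Im[z]$ forces the limit to vanish.
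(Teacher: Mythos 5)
Your proposal is correct and follows essentially the same route as the paper: take the $(k,k)$-entry of the identity in Proposition~\ref{prop.m}(b), bound the resulting series uniformly in $y$ via Lemma~\ref{lema.m.desi} using the hypothesis $\textbf{f}^{(k)}(x)\in l^2$, and conclude from the explicit factor $y=\Im[z]$ that the limit vanishes. No gaps; your write-up is, if anything, slightly cleaner in keeping the factor $y$ explicit throughout.
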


\begin{proof}
It follows from Proposition \ref{prop.m}-$(b)$ and Lemma \ref{lema.m.desi} that, for each $y > 0$, 
$$
(D_{0}\Im[M(x + iy, \omega)]D_{0})_{kk}  =  y \sum^{\infty}_{n = 1}  (F_{n}^{*}F_{n})_{kk} =  y \sum^{\infty}_{n = 1} \left\| \textbf{f}_{n}^{(k)}(x + iy) \right\|_{\mathbb{C}^{l}}^{2}\leq \sum^{\infty}_{n = 1} \left\| \textbf{f}_{n}^{(k)}(x) \right\|_{\mathbb{C}^{l}}^{2}. 
$$

Therefore, 
$$
\lim_{y \downarrow 0} \frac{(D_{0}\Im[M(x + iy, \omega)]D_{0})_{kk}}{y} \leq \sum^{\infty}_{n = 1} \left\| \textbf{f}^{(k)}_{n}(x) \right\|_{\mathbb{C}^{l}}^{2} < \infty,
$$
from which follows that $\lim_{y \downarrow 0} (D_{0}\Im[M(x + iy, \omega)]D_{0})_{kk} = 0$.
\end{proof}

\begin{theorem}[Ishii-Pastur Theorem]
\label{teo.pastur}
Let $(H_\omega)_\omega$ be the family of ergodic matrix-valued Jacobi operators of the form~\eqref{eq.ope.din.jacobi} such that the mapping $D: \Omega \rightarrow GL(l, \mathbb{R})$ is bounded, and for each $\omega\in\Omega$, $D(\omega)$ is a symmetric and invertible $l\times l$ matrix. Suppose also that the mapping $A_z$, given by the law~\eqref{eq.cociclo.az}, is such that $\log^{+} \left\| A_z (\omega)\right\|\in L^{1}(\nu)$. Let also, for each $j\in\{1,\ldots,l\}$, $\mathcal{Z}_{j}$ be the set given by \eqref{def.sup.exp}. Then, for $\nu$-a.e. $\omega\in\Omega$, the restriction of the absolutely continuous spectrum of $H_\omega$ to the set $\mathcal{Z}_{j}$ has multiplicity at most $2j$.
\end{theorem}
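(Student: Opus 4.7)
My plan is to reduce the statement to an analogous rank bound for the half-line operators $H^{\phi}_{\omega,\pm}$ of multiplicity at most $j$ each, and then combine via the identity $\sigma_{ac,k}(H_{\omega}) = \sigma_{ac,k}(H^{\phi}_{\omega,+}\oplus H^{\phi}_{\omega,-})$ of Proposition~\ref{prop.supor.pertur.z}. For the half-line $H^{\phi}_{\omega,+}$, the multiplicity of the absolutely continuous spectrum at $x$ coincides (by Proposition~\ref{porp.sup.ac}) with $\rk\lim_{y\downarrow 0}\Im[M^{\phi}(x+iy,\omega)]$, so the target is the rank bound $\rk\lim_{y\downarrow 0}\Im[M^{\phi}(x+iy,\omega)]\le j$ for $\kappa$-a.e.\ $x\in\mathcal{Z}_j$ and $\nu$-a.e.\ $\omega$.

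To obtain this rank bound I would first extend Proposition~\ref{prop.m.real} from the canonical basis to arbitrary initial conditions. For $v\in\mathbb{C}^{l}$, let $\textbf{f}^{(v)}(z,\omega) := F(z,\omega)v$; Proposition~\ref{prop.m}-$(b)$ then gives
\[ v^{*}\, D_0\, \Im[M^{\phi}(z,\omega)]\, D_0\, v = \Im[z]\sum_{k\ge 1}\|\textbf{f}^{(v)}_k(z,\omega)\|^{2}, \]
and whenever $\textbf{f}^{(v)}(x,\omega)\in l^{2}(\mathbb{N};\mathbb{C}^{l})$, the proof of Lemma~\ref{lema.m.desi} (whose only ingredient is an application of Lemma~\ref{lema.wrons.zx} to two solutions at $z$ and $x$ with a common initial value at $n=0$) applies verbatim to yield $\sum_{k\ge 1}\|\textbf{f}^{(v)}_k(x+iy,\omega)\|^{2}\le\sum_{k\ge 1}\|\textbf{f}^{(v)}_k(x,\omega)\|^{2}<\infty$, from which $v^{*}D_0\,\Im[M^{\phi}(x+iy,\omega)]\,D_0\,v\to 0$ as $y\downarrow 0$ follows.

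Now fix $x\in\mathcal{Z}_j$: since the vanishing Lyapunov exponents come in $\pm$ pairs, exactly $l-j$ of the exponents $\gamma^{+}_{1}(x)\ge\cdots\ge\gamma^{+}_{l}(x)\ge 0$ are strictly positive. Oseledets' theorem applied to the cocycle $(T,A_x)$ then produces, for $\nu$-a.e.\ $\omega$, a subspace $W(x,\omega)\subset\mathbb{C}^{l}$ of dimension at least $l-j$, namely the initial conditions corresponding to the sum of the Oseledets subspaces with strictly positive exponents, on which $\textbf{f}^{(v)}(x,\omega)$ decays exponentially (and in particular lies in $l^{2}(\mathbb{N};\mathbb{C}^{l})$). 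The vector-valued strengthening above then forces the positive semi-definite Herglotz boundary value $\lim_{y\downarrow 0}D_0\,\Im[M^{\phi}(x+iy,\omega)]\,D_0$ to vanish on $W(x,\omega)$, hence to have rank at most $l-\dim W(x,\omega)\le j$; Sylvester's Law (Proposition~\ref{prop.sylves}) transfers this bound to $\lim_{y\downarrow 0}\Im[M^{\phi}(x+iy,\omega)]$, closing the half-line argument. Repeating the argument with Jost solutions at $-\infty$ gives the same bound for $H^{\phi}_{\omega,-}$, and Proposition~\ref{prop.supor.pertur.z} combines both halves into the announced bound of $2j$.

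The main obstacle will be the exchange of quantifiers: the Oseledets subspace $W(x,\omega)$ is produced by an $x$-pointwise a.s.\ statement, whereas the theorem demands a single full-measure set of $\omega$ on which the rank bound holds at $\kappa$-a.e.\ $x\in\mathcal{Z}_j$. I plan to address this by a Fubini argument on the jointly measurable set $\{(x,\omega)\mid\dim W(x,\omega)\ge l-j\}$, combined with the $\kappa$-a.e.\ existence of Herglotz boundary values of $M^{\phi}(\cdot,\omega)$ for each fixed $\omega$, which together produce a common $\nu$-full-measure set of $\omega$ on which the rank bound is valid for Lebesgue-almost every $x\in\mathcal{Z}_j$.
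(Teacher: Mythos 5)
Your proposal is correct and follows essentially the same route as the paper: exponentially decaying Oseledets solutions at $x\in\mathcal{Z}_j$, the $l^2$-monotonicity of Lemma \ref{lema.m.desi} forcing the boundary values of $D_{0}\Im[M^{\phi}]D_{0}$ to vanish along those directions (Proposition \ref{prop.m.real}), Sylvester's Law, and the passage to the whole-line operator via Proposition \ref{prop.supor.pertur.z}. Your quadratic-form refinement $v^{*}D_{0}\Im[M^{\phi}]D_{0}v\to 0$ for arbitrary $v$ in the decaying subspace (instead of diagonal entries tied to the canonical columns $\textbf{f}^{(k)}$) and your explicit Fubini handling of the $(x,\omega)$ quantifiers are, if anything, slightly more careful than the paper's own write-up, which tacitly treats the decaying solutions as having canonical initial data at $n=0$.
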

\begin{proof}
Let $x \in \mathcal{Z}_{j}$. It follows from Oseledets Theorem that, for $\nu$-a.e. $\omega\in\Omega$, there exist  $l - j + 1$ linearly independent solutions to the eigenvalue equation~\eqref{eq.autovalor} (namely, $\textbf{f}^{(k)}(x)$) that decay exponentially fast at $+\infty$.

So, it follows from Proposition \ref{prop.m.real} that for $\nu$-a.e. $\omega\in\Omega$ and for $l - j + 1$ values of $k$, 
$$
\lim_{y \downarrow 0} (D_{0}\Im[M^{\phi}(x + iy)]D_{0})_{kk} = 0,
$$ 
and then, by Proposition \ref{prop.sylves}, that
$$
\lim_{y \downarrow 0} (\Im[M^{\phi}(x + iy)])_{kk} = 0.
$$
Thus, for $\nu$-a.e. $\omega\in\Omega$, one has 
$$
\rk[\lim_{y \downarrow 0} \Im[M^{\phi}(x + iy)]] \leq j.
$$

Now, as in the proof of Theorem \ref{teo.ko}, the same conclusion is valid for the family $(H^{\phi}_{\omega,-})_{\omega}$ of operators defined in $l^2(\mathbb{Z}_-;\mathbb{C}^l)$, and again by Proposition \ref{prop.supor.pertur.z}, one can extend the result, with multiplicity $2j$, 
to the family of operators $(H_{\omega})_{\omega}$ defined in $l^2(\mathbb{Z};\mathbb{C})$. 
\end{proof}

\bibliography{bibfile}{}
\bibliographystyle{acm}

\end{document}